\newif
\newcommand{\acli}[1]{\emph{\acl{#1}}}	
\newcommand{\acdef}[1]{\define{\acl{#1}} \textup{(\acs{#1})}\acused{#1}}	
\definecolor{Bonfire}{HTML}{9E162E}
\definecolor{CardinalRed}{HTML}{C41E3A}
\definecolor{TuckOrange}{HTML}{D94415}
\definecolor{CadmiumGreen}{HTML}{097969}
\definecolor{Dartmouth}{HTML}{00693E}
\definecolor{ForestGreen}{HTML}{12312B}
\definecolor{RichForestGreen}{HTML}{0D1E1C}
\definecolor{SeaGreen}{HTML}{2E8B57}
\definecolor{SpringGreen}{HTML}{EAFAF1}
\definecolor{Jade}{HTML}{009900}
\definecolor{CobaltBlue}{HTML}{0047AB}
\definecolor{NavyBlue}{HTML}{000080}
\definecolor{RiverBlue}{HTML}{267ABA}
\definecolor{RiverNavy}{HTML}{003C73}
\definecolor{KleinBlue}{HTML}{002FA7}
\definecolor{OxfordBlue}{HTML}{002147}
\definecolor{SapphireBlue}{HTML}{0F52BA}
\definecolor{Zaffre}{HTML}{0819A8}
\colorlet{MyRed}{CardinalRed}
\colorlet{MyGreen}{Dartmouth}
\colorlet{MyBlue}{DodgerBlue}
\colorlet{MyViolet}{DarkOrchid}
\colorlet{MyLightRed}{MyRed!25}
\colorlet{MyLightGreen}{MyGreen!25}
\colorlet{MyLightBlue}{MyBlue!25}
\colorlet{PrimalColor}{MidnightBlue!50!DodgerBlue}
\colorlet{PrimalFill}{DodgerBlue!25}
\colorlet{DualColor}{MyRed}
\colorlet{AlertColor}{MyRed}	
\colorlet{BadColor}{MyRed}	
\colorlet{GoodColor}{MyGreen}	
\colorlet{LinkColor}{MediumBlue}	
\colorlet{RevColor}{blue}	
	\colorlet{DraftColor}{MyRed}	
	\colorlet{DraftColor}{black}	
\newcommand{\afterhead}{.\;}	
\newcommand{\para}[1]{\medskip\paragraph{\bfseries#1\afterhead}}	
\newcommand{\asterism}{\ding{70}}
\setlist[1]{topsep=\medskipamount,itemsep=\smallskipamount,left=\parindent}
\setlist[2]{left=0pt}
	\colorlet{refkey}{DraftColor}
	\colorlet{labelkey}{DraftColor}
\crefname{algo}{Algorithm}{Algorithms}
\crefname{assumption}{Assumption}{Assumptions}
	\def\ltx@label#1{\cref@label{#1}}	
	\def\label@in@display@noarg#1{\cref@old@label@in@display{#1}}	
\newcounter{pseudo}
\theoremstyle{plain}
\newtheorem{theorem}{Theorem}	
\newtheorem{lemma}{Lemma}	
\newtheorem{proposition}{Proposition}	
\newtheorem*{theorem*}{Theorem}	
\newtheorem*{corollary*}{Corollary}	
\theoremstyle{definition}
\newtheorem{definition}{Definition}	
\newtheorem{example}{Example}	
\newtheorem*{definition*}{Definition}	
\newtheorem*{assumption*}{Assumptions}	
\newtheorem*{example*}{Example}	
\theoremstyle{remark}
\newtheorem{remark}{Remark}	
\newtheorem*{remark*}{Remark}	
\newtheorem*{notation*}{Notation}	
\def\endenv{\hfill\asterism}	
\newcounter{proofstep}
\numberwithin{example}{section}	
	\newcommand{\draft}[1]{{\color{DraftColor}#1}}	
	\newcommand{\draft}[1]{#1}	
\newcommand{\define}[1]{\emph{\draft{#1}}}	
\newcommand{\revise}[1]{{\color{RevColor}#1}}	
\newcommand{\explain}[1]{\tag*{\small\ensuremath{\commentsymbol}\;\text{#1}}}
\newcommand{\newmacro}[2]{\newcommand{#1}{\draft{#2}}}	
\newcommand{\newop}[2]{\DeclareMathOperator{#1}{\draft{#2}}}	
\newcommand{\newoplims}[2]{\DeclareMathOperator*{#1}{\draft{#2}}}	
\newcommand{\eps}{\varepsilon}	
\newcommand{\pd}{\partial}	
\DeclarePairedDelimiter{\braces}{\{}{\}}	
\DeclarePairedDelimiter{\bracks}{[}{]}	
\DeclarePairedDelimiter{\parens}{(}{)}	
\DeclarePairedDelimiter{\abs}{\lvert}{\rvert}	
\DeclarePairedDelimiter{\ceil}{\lceil}{\rceil}	
\DeclarePairedDelimiter{\setof}{\{}{\}}	
\DeclarePairedDelimiterX{\setdef}[2]{\{}{\}}{#1:#2}	
\DeclarePairedDelimiterXPP{\exclude}[1]{\mathopen{}\setminus}{\{}{\}}{}{#1}	
\DeclarePairedDelimiterX{\braket}[2]{\langle}{\rangle}{#1,#2}	
\DeclarePairedDelimiterX{\inner}[2]{\langle}{\rangle}{#1,#2}	
\DeclarePairedDelimiter{\norm}{\lVert}{\rVert}	
\DeclarePairedDelimiterXPP{\dnorm}[1]{}{\lVert}{\rVert}{_{\draft{\ast}}}{#1}	
\DeclarePairedDelimiterXPP{\onenorm}[1]{}{\lVert}{\rVert}{_{\draft{1}}}{#1}	
\DeclarePairedDelimiterXPP{\twonorm}[1]{}{\lVert}{\rVert}{_{\draft{2}}}{#1}	
\DeclarePairedDelimiterXPP{\pnorm}[1]{}{\lVert}{\rVert}{_{\draft{p}}}{#1}	
\DeclarePairedDelimiterXPP{\qnorm}[1]{}{\lVert}{\rVert}{_{\draft{q}}}{#1}	
\DeclarePairedDelimiterXPP{\supnorm}[1]{}{\lVert}{\rVert}{_{\draft{\infty}}}{#1}	
\newop{\defeq}{\coloneqq}	
\newop{\eqdef}{\eqqcolon}	
\newmacro{\from}{\colon}	
\newop{\too}{\rightrightarrows}	
\newop{\injects}{\hookrightarrow}	
\newop{\surjects}{\twoheadrightarrow}	
\newmacro{\F}{\mathbb{F}}	
\newmacro{\N}{\mathbb{N}}	
\newmacro{\Z}{\mathbb{Z}}	
\newmacro{\Q}{\mathbb{Q}}	
\newmacro{\R}{\mathbb{R}}	
\newmacro{\C}{\mathbb{C}}	
\newmacro{\real}{x}	
\newmacro{\reals}{\R}	
\newmacro{\complex}{z}	
\newmacro{\complexes}{\C}	
\newoplims{\argmax}{arg\,max}	
\newoplims{\argmin}{arg\,min}	
\newoplims{\intersect}{\bigcap}	
\newoplims{\union}{\bigcup}	
\newop{\aff}{aff}	
\newop{\bd}{bd}	
\newop{\bigoh}{\mathcal{O}}	
\newop{\card}{card}	
\newop{\cl}{cl}	
\newop{\conv}{conv}	
\newop{\crit}{crit}	
\newop{\curl}{curl}	
\newop{\diag}{diag}	
\newop{\diam}{diam}	
\newop{\dist}{dist}	
\newop{\diver}{div}	
\newop{\dom}{dom}	
\newop{\eig}{eig}	
\newop{\ess}{ess}	
\newop{\grad}{grad}	
\newop{\Hess}{Hess}	
\newop{\ind}{ind}	
\newop{\im}{im}	
\newop{\intr}{int}	
\newop{\Jac}{Jac}	
\newop{\one}{\mathds{1}}	
\newop{\proj}{proj}	
\newop{\prox}{prox}	
\newop{\rank}{rank}	
\newop{\relint}{ri}	
\newop{\sign}{sgn}	
\newop{\supp}{supp}	
\newop{\Sym}{Sym}	
\newop{\tr}{tr}	
\newop{\unif}{unif}	
\newop{\vol}{vol}	
\newcommand{\cf}{cf.\xspace}	
\newcommand{\eg}{e.g.,\xspace}	
\newcommand{\ie}{i.e.,\xspace}	
\newcommand{\vs}{vs.\xspace}	
\newcommand{\viz}{viz.\xspace}	
\newcommand{\textpar}[1]{\textup(#1\textup)}	
\newmacro{\commentsymbol}{\triangleright}	
\newcommand{\txs}{\textstyle}	
\newcommand{\eqcomma}{\,,}	
\newcommand{\eqstop}{\,.}	
\newcommand{\alt}[1]{#1'}	
\newmacro{\argdot}{\boldsymbol{\cdot}}	
\newmacro{\dd}{\kern0pt\:d}	
\newmacro{\ddt}{\frac{d}{dt}}	
\newmacro{\del}{\partial}	
\newcommand{\insum}{\sum\nolimits}	
\newmacro{\const}{c}	
\newmacro{\Const}{C}	
\newmacro{\param}{\theta}	
\newmacro{\params}{\Theta}	
\newmacro{\coef}{\lambda}	
\newmacro{\fn}{f} 
\newmacro{\pexp}{p}	
\newmacro{\qexp}{q}	
\newmacro{\rexp}{r}	
\newmacro{\iCount}{i}	
\newmacro{\jCount}{j}	
\newmacro{\kCount}{k}	
\newmacro{\nCounts}{n}	
\newmacro{\counts}{\mathcal{I}}	
\newmacro{\idx}{k}	
\newmacro{\idxalt}{\ell}	
\newmacro{\nIdx}{m}	
\newmacro{\point}{x}	
\newmacro{\pointalt}{\alt\point}	
\newmacro{\points}{\mathcal{X}}	
\newmacro{\base}{p}	
\newmacro{\basealt}{q}	
\newmacro{\auxpoint}{q}	
\newmacro{\elem}{a}	
\newmacro{\elemalt}{b}	
\newmacro{\iElem}{a}	
\newmacro{\jElem}{b}	
\newmacro{\kElem}{c}	
\newmacro{\set}{\mathcal{A}}	
\newmacro{\borel}{\mathcal{B}}	
\newmacro{\closed}{\mathcal{C}}	
\newmacro{\cpt}{\mathcal{K}}	
\newmacro{\nhd}{\mathcal{U}}	
\newmacro{\nhdalt}{\mathcal{V}}	
\newmacro{\open}{\mathcal{U}}	
\newmacro{\domain}{\mathcal{D}}	
\newmacro{\region}{\mathcal{R}}	
\newmacro{\interval}{\mathcal{I}}	
\newmacro{\rectangle}{\mathcal{R}}	
\newmacro{\cone}{\mathcal{K}}	
\newmacro{\tstart}{0}	
\renewcommand{\time}{\draft{t}}	
\newmacro{\timealt}{s}	
\newmacro{\timealtalt}{\tau}	
\newmacro{\horizon}{T}	
\newmacro{\curve}{\gamma}	
\DeclarePairedDelimiterXPP{\curveof}[1]{\curve}{(}{)}{}{#1}	
\DeclarePairedDelimiterXPP{\curveofX}[2]{\curve_{#1}}{(}{)}{}{#2}	
\DeclarePairedDelimiterXPP{\velof}[1]{\dot\curve}{(}{)}{}{#1}	
\DeclarePairedDelimiterXPP{\velofX}[2]{\dot\curve_{#1}}{(}{)}{}{#2}	
\newmacro{\flowmap}{\Phi}	
\DeclarePairedDelimiterXPP{\flowof}[2]{\flowmap_{#1}}{(}{)}{}{#2}	
\newmacro{\traj}{x}	
\newmacro{\dtraj}{\dot\traj}	
\DeclarePairedDelimiterXPP{\trajof}[1]{\traj}{(}{)}{}{#1}	
\DeclarePairedDelimiterXPP{\trajofX}[2]{\traj_{#1}}{(}{)}{}{#2}	
\DeclarePairedDelimiterXPP{\dtrajof}[1]{\dtraj}{(}{)}{}{#1}	
\DeclarePairedDelimiterXPP{\dtrajofX}[2]{\dtraj_{#1}}{(}{)}{}{#2}	
\newmacro{\vdim}{d}	
\newmacro{\realspace}{\R^{\vdim}}	
\newmacro{\iCoord}{i}	
\newmacro{\jCoord}{j}	
\newmacro{\kCoord}{k}	
\newmacro{\nCoords}{n}	
\newmacro{\unitvec}{u}	
\newmacro{\bvec}{e}	
\newmacro{\bvecs}{\mathcal{E}}	
\newmacro{\vecspace}{\mathcal{V}}	
\newmacro{\subspace}{\mathcal{W}}	
\newcommand{\dual}[1][\vecspace]{#1^{\ast}}	
\newcommand{\dspace}{\dual[\vecspace]}	
\newmacro{\hilbert}{\mathcal{H}}	
\newmacro{\banach}{\mathcal{X}}	
\newmacro{\mat}{M}	
\newmacro{\hmat}{H}	
\newmacro{\ones}{\mathbf{1}}	
\newmacro{\eye}{I}	
\newmacro{\zer}{\mathbf{0}}	
\newcommand{\mgeq}{\succcurlyeq}	
\newmacro{\eigval}{\lambda}	
\newmacro{\eigvec}{u}	
\DeclarePairedDelimiterXPP{\trof}[1]{\tr}{[}{]}{}{#1}	
\newmacro{\ball}{\mathbb{B}}	
\newmacro{\sphere}{\mathbb{S}}	
\newmacro{\radius}{r}
\newmacro{\Radius}{R}
\newmacro{\mfld}{\mathcal{M}}	
\newmacro{\tanvec}{z}	
\newmacro{\form}{\omega}	
\newmacro{\gmat}{g}	
\newmacro{\gdist}{\dist_{\gmat}}	
\newmacro{\vertex}{v}	
\newmacro{\vertexalt}{w}	
\newmacro{\iVertex}{\vertex_{\iCount}}	
\newmacro{\jVertex}{\vertex_{\jCount}}	
\newmacro{\kVertex}{\vertex_{\kCount}}	
\newmacro{\nVertices}{V}	
\newmacro{\vertices}{\mathcal{V}}	
\newmacro{\edge}{e}	
\newmacro{\edgealt}{\alt\edge}	
\newmacro{\iEdge}{\edge_{\iCount}}	
\newmacro{\jEdge}{\edge_{\jCount}}	
\newmacro{\kEdge}{\edge_{\kCount}}	
\newmacro{\nEdges}{E}	
\newmacro{\edges}{\mathcal{\nEdges}}	
\newmacro{\graph}{\mathcal{G}}	
\newmacro{\graphfull}{\graph(\vertices,\edges)}	
\newop{\minimize}{minimize}	
\newop{\opt}{Opt}	
\newop{\gap}{Gap}	
\newmacro{\cvx}{\mathcal{C}}	
\newmacro{\obj}{f}	
\newmacro{\sobj}{F}	
\newmacro{\oper}{A}	
\newmacro{\vecfield}{v}	
\newmacro{\subd}{\partial}	
\newmacro{\subsel}{\nabla}	
\newmacro{\dir}{\kern0pt\subd\mkern-1mu{}}	
\newmacro{\gvec}{g}	
\newmacro{\gbound}{G}	
\newmacro{\vbound}{V}	
\newmacro{\lips}{L}	
\newmacro{\strong}{\mu}	
\newmacro{\smooth}{\beta}	
\newop{\tcone}{TC}	
\newop{\dcone}{\tcone^{\ast}}	
\newop{\ncone}{NC}	
\newop{\pcone}{PC}	
\newop{\hull}{\Delta}	
\newcommand{\sol}[1][\point]{#1^{\ast}}	
\newop{\ex}{\mathbb{E}}	
\newop{\prob}{\mathbb{P}}	
\newop{\Var}{\mathbb{V}}	
\newmacro{\dens}{p}	
\newmacro{\normal}{\mathcal{N}}	
\newop{\cov}{cov}	
\newop{\simplex}{\Delta}	
\providecommand\given{}	
\DeclarePairedDelimiterXPP{\exof}[1]{\ex}{[}{]}{}{
\renewcommand\given{\nonscript\,\delimsize\vert\nonscript\,\mathopen{}} #1}
\DeclarePairedDelimiterXPP{\exwrt}[2]{\ex_{#1}}{[}{]}{}{
\renewcommand\given{\nonscript\:\delimsize\vert\nonscript\:\mathopen{}} #2}
\DeclarePairedDelimiterXPP{\probof}[1]{\prob}{(}{)}{}{
\renewcommand\given{\nonscript\:\delimsize\vert\nonscript\:\mathopen{}} #1}
\DeclarePairedDelimiterXPP{\probwrt}[2]{\prob_{#1}}{(}{)}{}{
\renewcommand\given{\nonscript\:\delimsize\vert\nonscript\:\mathopen{}} #2}
\DeclarePairedDelimiterXPP{\oneof}[1]{\one}{\{}{\}}{}{#1}	
\DeclarePairedDelimiterXPP{\varof}[1]{\var}{[}{]}{}{
\renewcommand\given{\nonscript\,\delimsize\vert\nonscript\,\mathopen{}} #1}
\DeclarePairedDelimiterXPP{\covof}[1]{\cov}{(}{)}{}{
\renewcommand\given{\nonscript\,\delimsize\vert\nonscript\,\mathopen{}} #1}
\newmacro{\event}{E}       
\newmacro{\eventalt}{H}       
\newmacro{\sample}{\omega}	
\newmacro{\samples}{\Omega}	
\newmacro{\filter}{\mathcal{F}}	
\newmacro{\probspace}{(\samples,\filter,\prob)}	
\newmacro{\pdist}{P}	
\newmacro{\history}{\mathcal{H}}	
\newcommand{\as}{\draft{\textpar{a.s.}}\xspace}	
\newmacro{\mean}{\mu}	
\newmacro{\sdev}{\sigma}	
\newmacro{\variance}{\sdev^{2}}	
\newmacro{\covmat}{\Sigma}	
\newcommand{\new}[1][\point]{#1^{+}}	
\newmacro{\seq}{a}	
\newmacro{\seqalt}{b}	
\newmacro{\state}{x}	
\newmacro{\statealt}{y}	
\newmacro{\stateaux}{z}	
\newmacro{\beforestart}{-1}	
\newmacro{\start}{0}	
\newmacro{\afterstart}{1}	
\newmacro{\running}{\start,\afterstart,\dotsc}	
\newmacro{\run}{t}	
\newmacro{\runalt}{s}	
\newmacro{\runaltalt}{\tau}	
\newmacro{\nRuns}{T}	
\newmacro{\runs}{\mathcal{\nRuns}}	
\newcommand{\init}[1][\state]{\draft{#1}_{\start}}	
\newcommand{\iter}[1][\state]{\draft{#1}_{\runalt}}	
\newcommand{\curr}[1][\state]{\draft{#1}_{\run}}	
\renewcommand{\next}[1][\state]{\draft{#1}_{\run+1}}	
\newop{\Nash}{Nash}	
\newop{\CE}{CE}	
\newop{\CCE}{CCE}	
\newop{\NI}{NI}	
\newop{\brep}{br}	
\newop{\val}{val}	
\newmacro{\play}{i}	
\newmacro{\playalt}{j}	
\newmacro{\iPlay}{i}	
\newmacro{\jPlay}{j}	
\newmacro{\kPlay}{k}	
\newmacro{\nPlayers}{N}	
\newmacro{\players}{\mathcal{\nPlayers}}	
\newmacro{\pure}{\alpha}	
\newmacro{\purealt}{\beta}	
\newmacro{\nPures}{K}	
\newmacro{\pures}{\mathcal{A}}	
\newmacro{\strat}{x}	
\newmacro{\stratalt}{\alt\strat}	
\newmacro{\strataux}{q}	
\newmacro{\strats}{\mathcal{X}}	
\newmacro{\intstrats}{\strats^{\circle}}	
\newmacro{\corr}{z}	
\newmacro{\corralt}{\alt\corr}	
\newmacro{\corrs}{\mathcal{Z}}	
\newcommand{\eq}{\sol[\strat]}	
\newmacro{\pay}{u}	
\newmacro{\loss}{\ell}	
\newmacro{\cost}{c}	
\newmacro{\pot}{f}	
\newmacro{\payvec}{y}	
\newmacro{\payv}{v}	
\newmacro{\payfield}{\payv}	
\newmacro{\paybound}{M}	
\newmacro{\payspace}{\mathcal{Y}}	
\newmacro{\payspacei}{\payspace_{\play}}	
\newmacro{\game}{\mathcal{G}}	
\newmacro{\gamefull}{\game(\players,\points,\pay)}	
\newmacro{\fingame}{\Gamma}	
\newmacro{\fingamefull}{\Gamma(\players,\pures,\pay)}	
\newmacro{\mixgame}{\Delta(\fingame)}	
\newmacro{\minmax}{L}	
\newmacro{\minvar}{\point_{1}}	
\newmacro{\minvaralt}{\alt\minvar}	
\newmacro{\minvars}{\points_{1}}	
\newmacro{\maxvar}{\point_{2}}	
\newmacro{\maxvaralt}{\alt\maxvar}	
\newmacro{\maxvars}{\points_{2}}	
\newmacro{\by}{\mathbf{y}}
\newmacro{\bx}{\mathbf{x}}
\newmacro{\bz}{\mathbf{z}}
\newmacro{\bH}{\mathbf{H}}
\newmacro{\bI}{\mathbf{I}}
\newmacro{\bQ}{\mathbf{Q}}
\newmacro{\bX}{\mathbf{X}}
\newmacro{\nIn}{m}
\newmacro{\nOut}{n}
\newmacro{\source}{O}	
\newmacro{\sink}{D}	
\newmacro{\pair}{i}	
\newmacro{\pairalt}{j}	
\newmacro{\iPair}{i}	
\newmacro{\jPair}{j}	
\newmacro{\kPair}{k}	
\newmacro{\nPairs}{N}	
\newmacro{\pairs}{\mathcal{\nPairs}}	
\newmacro{\route}{p}	
\newmacro{\routealt}{q}	
\newmacro{\nRoutes}{P}	
\newmacro{\routes}{\mathcal{\nRoutes}}	
\newmacro{\flow}{f}	
\newmacro{\flowalt}{\alt\flow}	
\newmacro{\flows}{\mathcal{F}}	
\newmacro{\load}{w}	
\newmacro{\loadalt}{\alt\load}	
\newmacro{\loads}{\mathcal{W}}	
\newmacro{\hreg}{h}	
\newmacro{\proxdom}{\points_{\hreg}}	
\newmacro{\proxdomi}{\points_{\hreg_{\play}}}	
\newmacro{\breg}{D}	
\newmacro{\mprox}{P}	
\newmacro{\hconj}{h^{\ast}}	
\newmacro{\mirror}{Q}	
\newmacro{\fench}{F}	
\newmacro{\hstr}{\mu}	
\newmacro{\hrange}{R}	
\newmacro{\learn}{\eta}	
\newmacro{\weight}{\lambda}	
\DeclarePairedDelimiterXPP{\bregof}[2]{\breg}{(}{)}{}{#1,#2}	
\DeclarePairedDelimiterXPP{\bregofX}[3]{\breg_{#1}}{(}{)}{}{#2,#3}	
\DeclarePairedDelimiterXPP{\fenchof}[2]{\fench}{(}{)}{}{#1,#2}	
\DeclarePairedDelimiterXPP{\fenchofX}[3]{\fench_{#1}}{(}{)}{}{#2,#3}	
\DeclarePairedDelimiterXPP{\proxof}[2]{\mprox_{#1}}{(}{)}{}{#2}	
\newmacro{\zone}{\mathbb{D}}	
\newop{\Eucl}{\Pi}	
\newop{\logit}{\Lambda}	
\newop{\tsal}{Tsal}	
\newop{\dkl}{KL}	
\newmacro{\dvec}{w}	
\newmacro{\dpoint}{y}	
\newmacro{\dpointalt}{\alt\dpoint}	
\newmacro{\dpoints}{\mathcal{Y}}	
\newmacro{\score}{y}	
\newmacro{\scorealt}{\alt\score}	
\newmacro{\scoreaux}{z}	
\newmacro{\scores}{\payspace}	
\newmacro{\drift}{b}	
\newmacro{\diffmat}{\sigma}	
\newmacro{\qmat}{\Sigma}	
\newmacro{\ito}{M}	
\newmacro{\brown}{W}	
\newcommand{\brownof}[2][]{\brown_{#1}(#2)}	
\newmacro{\mart}{M}	
\newcommand{\martof}[2][]{\mart_{#1}(#2)}	
\newmacro{\diffproc}{A}	
\newmacro{\diffcoef}{c}	
\newmacro{\qcoef}{\psi}	
\newmacro{\signal}{\hat\vecfield}	
\newmacro{\step}{\gamma}	
\newmacro{\runtime}{\tau}	
\newmacro{\apt}{X}	
\DeclarePairedDelimiterXPP{\aptof}[1]{\apt}{(}{)}{}{#1}	
\DeclarePairedDelimiterXPP{\aptofX}[2]{\apt_{#1}}{(}{)}{}{#2}	
\newop{\orcl}{\mathsf{G}}	
\newop{\err}{Z}	
\newmacro{\seed}{\omega}	
\newmacro{\seeds}{\Omega}	
\newmacro{\noise}{U}	
\newmacro{\bias}{b}	
\newmacro{\bbound}{B}	
\newmacro{\totbound}{\paybound}	
\newmacro{\mombound}{V}	
\newmacro{\snoise}{\xi}	
\newmacro{\sbias}{\chi}	
\newmacro{\mix}{\delta}	
\newmacro{\perturb}{z}	
\newmacro{\pivot}{\point}	
\newop{\reg}{Reg}	
\newop{\preg}{\overline{Reg}}	
\newmacro{\bench}{p}	
\newmacro{\test}{p}	
\newcommand{\PM}{\PMmargincomment}
\newop{\IWE}{IWE}
\newcommand{\ExpThree}{\textsc{\draft{Exp3}}\xspace}
\newcommand{\Hedge}{\textsc{\draft{Hedge}}\xspace}
\newmacro{\dstate}{y}	
\newcommand{\stateof}[2][]{\point_{#1}(#2)}	
\newcommand{\Stateof}[2][]{\draft{X}_{#1}(#2)}	
\newmacro{\Score}{Y}	
\newcommand{\Scoreof}[2][]{\draft{\Score}_{#1}(#2)}	
\newmacro{\ambient}{\vecspace}	
\newmacro{\meas}{\mu}	
\newmacro{\thres}{\eps}	
\newmacro{\toler}{\delta}	
\newmacro{\invdist}{\pdist_{\!\infty}}	
\newmacro{\occmeas}{\mu}	
\newmacro{\invmeas}{\nu}	
\newmacro{\stoptime}{\tau}	
\newmacro{\energy}{E}	
\newmacro{\level}{M}	
\newop{\law}{Law}	
\newmacro{\gen}{\mathcal{L}}	
\newmacro{\princ}{A}	
\newmacro{\leb}{\lambda}	
\newmacro{\SFO}{\mathsf{V}}	
\newmacro{\error}{\mathsf{U}}	
\newmacro{\errdist}{\nu}	
\newmacro{\minpure}{\alpha}	
\newmacro{\mineq}{\sol[\minvar]}	
\newmacro{\maxpure}{\beta}	
\newmacro{\maxeq}{\sol[\maxvar]}	
\newmacro{\stoch}{z}	
\newmacro{\Stoch}{Z}	
\newcommand{\Stochof}[2][]{\Stoch_{#1}(#2)}	
\newmacro{\tanvecs}{\tilde\vecspace}	
\newop{\ann}{Ann}	
\DeclarePairedDelimiterXPP{\annof}[1]{\ann}{(}{)}{}{#1}	
\newmacro{\parl}{\parallel}	
\newmacro{\esspoint}{{\tilde\dpoint}}	
\newmacro{\nesspoint}{\hat\dpoint}	
\newmacro{\esspoints}{{\tilde\dpoints}}	
\newmacro{\nesspoints}{\widehat\dpoints}	
\newmacro{\essmap}{\Pi}	
\newmacro{\essdens}{\tilde{p}}	
\newmacro{\essdomain}{\tilde\domain}	
\newmacro{\essmirror}{\tilde\mirror}	
\newmacro{\Essvar}{\tilde\Score}	
\newcommand{\Essof}[2][]{\Essvar_{#1}(#2)}	
\newmacro{\esspayfield}{\tilde\payfield}	
\newmacro{\essnoise}{\tilde\noise}	
\newmacro{\esskernel}{\tilde{q}} 
\newmacro{\esserror}{\tilde{\mathsf{U}}}	
\newmacro{\essinvmeas}{\tilde{\invmeas}}	
\newcommand{\KL}{\KLmargincomment}
\newmacro{\inv}{\nu}
\newmacro{\bound}{V^2}
\newmacro{\minstop}{{\stoptime_\radius \wedge \run}}
\newmacro{\liap}{\mathcal{L}}
\newmacro{\sterm}{D}
\newmacro{\pr}{\Pi}
\newmacro{\minor}{\mu}
\newmacro{\tmpf}{f}
\newmacro{\conjstr}{m}
\title
[Multi-Agent Learning under Uncertainty: Recurrence vs. Concentration]	
{Multi-Agent Learning under Uncertainty:\\
Recurrence vs. Concentration}
\author
[K.~Lotidis]
{Kyriakos Lotidis$^{\ast,c}$}
\email{klotidis@stanford.edu}
\author
[P.~Mertikopoulos]
{Panayotis Mertikopoulos$^{\diamond}$}
\email{panayotis.mertikopoulos@imag.fr}
\author
[N.~Bambos]
{\\Nicholas Bambos$^{\ast}$}
\email{bambos@stanford.edu}
\author
[J.~Blanchet]
{Jose Blanchet$^{\ast}$}
\email{jose.blanchet@stanford.edu}
\address{$^{\ast}$\,%
Stanford University.}
\address{$^{\ast}$\,%
Univ. Grenoble Alpes, CNRS, Inria, Grenoble INP, LIG, 38000 Grenoble, France.}
\address{$^{c}$\,%
Corresponding author.}
\subjclass[2020]{%
Primary 91A10, 91A26;
secondary 68Q32, 60J60, 60J70.}
\keywords{%
Robust equilibrium;
regularized learning;
stochastic stability.}
\newacro{LHS}{left-hand side}
\newacro{RHS}{right-hand side}
\newacro{iid}[i.i.d.]{independent and identically distributed}
\newacro{lsc}[l.s.c.]{lower semi-continuous}
\newacro{usc}[u.s.c.]{upper semi-continuous}
\newacro{wp1}[w.p.$1$]{with probability $1$}
\newacro{NE}{Nash equilibrium}
\newacro{EW}{exponential\,/\,multiplicative weights}
\newacro{RM}{Robbins\textendash Monro}
\newacro{MD}{mirror descent}
\newacro{DSC}{diagonal strong concavity}
\newacro{GDA}{gradient descent\,/\,ascent}
\newacro{SGDA}{stochastic gradient descent/ascent}
\newacro{ODE}{ordinary differential equation}
\newacro{SDE}{stochastic differential equation}
\newacro{KL}{Kullback\textendash Leibler}
\newacro{OU}{Ornstein\textendash \revise{Uhlenbeck}}
\newacro{FTRL}{``follow-the-regularized-leader''}
\newacro{DA}{dual averaging}
\newacro{RLD}{regularized learning dynamics}
\newacro{SFO}{stochastic first-order oracle}
\begin{document}

\allowdisplaybreaks	
\acresetall	
\maketitle

\begin{abstract}
%
%
In this paper, we examine the convergence landscape of multi-agent learning under uncertainty.
Specifically, we analyze two stochastic models of regularized learning in continuous games\textemdash one in continuous and one in discrete time\textemdash with the aim of characterizing the long-run behavior of the induced sequence of play.
In stark contrast to deterministic, full-information models of learning (or models with a vanishing learning rate), we show that the resulting dynamics \emph{do not converge} in general.
In lieu of this, we ask instead which actions are played more often in the long run, and by how much.
We show that, in strongly monotone games, the dynamics of regularized learning may wander away from equilibrium infinitely often, but they always return to its vicinity in \emph{finite} time (which we estimate), and their long-run distribution is sharply concentrated around a neighborhood thereof.
We quantify the degree of this concentration, and we show that these favorable properties may all break down if the underlying game is not strongly monotone\textemdash underscoring in this way the limits of regularized learning in the presence of persistent randomness and uncertainty.
\end{abstract}
\acresetall

\allowdisplaybreaks	
\acresetall	
\maketitle

\section{Introduction}
\label{sec:introduction}

In its most abstract form, the standard model for online learning in games unfolds as follows:
\begin{enumerate}
\item
At each stage of the process, every participating agent selects an action.
\item
The agents receive a reward determined by their chosen actions and their individual
payoff functions.
\item
The agents update their actions, and the process repeats.
\end{enumerate}
In this general context, the agents have to contend with various\textemdash and varying\textemdash degrees of uncertainty:
\begin{enumerate*}
[\itshape a\upshape)]
\item
uncertainty about the game, the strategic interests of other players, and/or who else is involved in the game;
\item
uncertainty about the outcomes of their actions, and which update directions may lead to better outcomes;
and
\item
uncertainty stemming from the environment, manifesting as random shocks to the players' payoffs and\,/\,or other disturbances.
\end{enumerate*}
In this regard, uncertainty could be either endogenous or exogenous;
but, in either case, it leads to players having to take decisions with very limited information at their disposal.

Our goal in this paper is to quantify the impact of uncertainty on multi-agent learning\textemdash and, more precisely, to understand the differences that arise in the players' long-run behavior when such uncertainty is present versus when it is not.
A natural framework for exploring this question is within the greater setting of no-regret learning and, in particular, the family of \acdef{FTRL} algorithms and dynamics \cite{SSS06,SS11,LS20}.
This class contains several mainstay learning methods\textemdash like
online gradient descent (or, in our case, \emph{ascent}) \cite{Zin03},
the \ac{EW} algorithm and its variants (\Hedge, \ExpThree, etc.) \cite{Vov90,LW94,ACBFS95,ACBFS02},
and many others\textemdash so it has become practically synonymous with the notion of online learning in games.
Accordingly, we seek to answer the following questions:
\begin{quote}
\smallskip
\centering
\itshape
What is the long-run distribution of regularized learning under uncertainty?\\[\smallskipamount]
Which actions are played more often, and by how much?\\[\smallskipamount]
Do the dynamics concentrate\textemdash and, if so, where?
\end{quote}

\para{Our contributions in the context of related work}

Needless to say, the interpretation of these questions is context-specific, and it depends on the particular learning setting at hand.
In this paper, motivated by applications to machine learning, signal processing and data science (which typically involve continuous action spaces and rewards), we focus on \emph{continuous games}, and we consider two models of regularized learning, one in continuous time, and one in discrete time.

In continuous time, we model the dynamics of \ac{FTRL} in the presence of uncertainty as a \acf{SDE} perturbed by a general Itô diffusion process, \ie a continuous-time martingale with possibly colored and/or correlated components.
In the context of \emph{finite} games, models of this type have been studied by, among others, \citet{FY90,FH92,BM17} and \citet{MM09,MM10}, the first two in an evolutionary setting, the latter as a continuous-time model of the \ac{EW} algorithm in the presence of random disturbances.
Follow-up works in this direction include \cite{Cab00,BM17,EP23,HI09,Imh05,MV16,CLM25} on \emph{finite} games,
while \citep{KDB15,Kri16,KB17,MerSta18} considered a regularized learning model in convex minimization problems.
The model which is closest to our own is that of \cite{MS17-CDC,MerSta18b}, who study the regret properties and guarantees of a stochastic version of the \acl{DA} dynamics of \citet{Nes09}, and the work of \citet{BNP21} and \citet{CLM25} (in discrete and continuous time respectively).

At a high level, our findings reveal a crisp dichotomy between games that are \define{null-monotone} (like bilinear min-max games or zero-sum bimatrix games), and games that are \define{strongly monotone} (like Kelly auctions, Cournot competitions, joint signal covariance optimization problems, etc.).
Specifically:
\begin{enumerate}
\item
\emph{In null-monotone games:}
Uncertainty induces a persistent drift \emph{away from equilibrium}:
the dynamics reach greater distances from equilibrium in finite time (which we estimate) and they require, on average, infinite time to return.
In particular, if the game admits an interior equilibrium, the dynamics diffuse away\textemdash escaping in the mean toward infinity or to the boundary of the game's action space\textemdash and they exhibit \emph{no concentration} in any region of interior actions.
\item
\emph{In strongly monotone games:}
Uncertainty still induces a persistent outward drift, but this is now partially countered by the dynamics' deterministic component.
Thus, in stark contrast to the null-monotone case, the players' learning trajectories end up in a near-equilibrium region whose size scales with the level of uncertainty, and we estimate both the size of this region and the time required to reach it.
Somewhat paradoxically, the dynamics return \acl{wp1} arbitrarily close to where they started, infinitely often, in a way reminiscent of Poincaré recurrence in bimatrix min-max games \citep{PS14,MPP18};
however, these returns can be exceedingly far apart, so there is no antinomy. 
\end{enumerate}

In discrete time, we consider a standard implementation of \ac{FTRL} with a constant learning rate and \acl{SFO} feedback.
Variants with a vanishing learning rate have been studied extensively in the stochastic approximation literature, and they are known to exhibit favorable convergence guarantees in, among others, strongly monotone games, \cf \cite{MZ19,MHC24} and references therein.
At the same time however, these properties typically come at the expense of the algorithm slowing down to a crawl;
for this reason, owing to their simplicity, robustness, and superior empirical performance, constant\,/\,non-vanishing learning rate schedules tend to be much more common in practice.

On the downside, the long-run behavior of \ac{FTRL} is much less understood in this case. To the best of our knowledge, the most relevant results come from recent works by \citet{LBGM+21} and \citet{HZ22}, who established upper bounds on the mean distance to equilibrium for stochastic \acl{GDA} in strongly monotone games, and \citet{VGCX24}, who studied the ergodic properties of constant step-size variants of the stochastic extragradient and stochastic gradient descent–ascent algorithms for weakly quasi-strongly monotone variational inequalities.
Dually to this, in the null-monotone regime, \citet{BNP21} and \citet{CLM25} showed that \ac{FTRL} exhibits a similar tendency to escape from interior equilibria in finite min-max and harmonic games respectively (the former in discrete time, the latter in continuous time).
Our analysis is inspired by the same principles underlying these works, and it provides an extension thereof to more general games.

One reason that results about the statistics of the long-run behavior of \ac{FTRL} are particularly scarce in the literature is that, in discrete time, even the most basic tools of stochastic analysis are often inapplicable;
for an illustration of the difficulties involved, see \eg \citet{AIMM24-ICML,AIMM25} and references therein.
Nevertheless, based in no small part on the insights gained by our continuous-time analysis, we manage to establish the following version of the strong-null dichotomy in discrete time:
\begin{enumerate}
\item
In null-monotone games with an unbounded action space,
the sequence of play under \ac{FTRL} drifts away to infinity on average (though not necessarily \acl{wp1}).
\item
In strongly monotone games, we show that the mean time required to reach a given distance from the game's equilibrium is finite, and we provide an explicit estimate thereof.
If the game's equilibrium is interior, we also show that \ac{FTRL} converges strongly to a unique invariant measure, which is concentrated in a certain region around the game's equilibrium, which we also estimate.
\end{enumerate}
We find these results particularly appealing as they provide a solid glimpse into the distributional properties of multi-agent regularized learning under uncertainty.
\acresetall

\section{Preliminaries}
\label{sec:prelims}

\subsection{Continuous games}
\label{sec:prelims-basics}

Throughout the sequel, we consider games with a finite number of players and a continuum of actions per player.
Formally, players will be indexed by $\play\in\players = \{1,\dotsc,\nPlayers\}$ and, during play, each player will be selecting an action $\point_{\play}$ from a closed convex subset $\points_{\play}$ of some $\vdim_{\play}$-dimensional normed space $\ambient_{\play}$.
Aggregating over all players, we will write $\points = \prod_{\play}\points_{\play}$ for the space of the players' joint action profiles $\point = (\point_{1},\dotsc,\point_{\nPlayers})$ and $\vdim = \sum_{\play} \vdim_{\play}$ for the dimension of the ambient space $\ambient = \prod_{\play} \ambient_{\play}$.
Finally, we will use the shorthand $\point = (\point_{\play};\point_{-\play})$ when we want to highlight the action of player $\play\in\players$ against the action profile $\point_{-\play} = (\point_{\playalt})_{\playalt\neq\play}$ of all other players\textemdash and, in similar notation, $\points_{-\play} = \prod_{\playalt\neq\play} \points_{\playalt}$ for the space thereof.

The reward of each player $\play\in\players$ in a given action profile will be determined by an associated payoff function $\pay_{\play}\from\points\to\R$, assumed here to be \define{individually concave} in the sense that $\pay_{\play}(\point_{\play};\point_{-\play})$ is concave in $\point_{\play}$ for all $\point_{-\play} \in \points_{-\play}$.
We will further assume that each $\pay_{\play}$ is $\smooth$-Lipschitz smooth,
and we will write respectively
\begin{equation}
\label{eq:payfield}
\payfield_{\play}(\point)
	= \nabla_{\point_{\play}} \pay_{\play}(\point_{\play};\point_{-\play})
	\quad
	\text{and}
	\quad
\payfield(\point)
	= (\payfield_{1}(\point),\dotsc,\payfield_{\nPlayers}(\point))
\end{equation}
for the individual gradient field of each player and the ensemble thereof.%
\footnote{We are tacitly assuming here that the players' payoff functions are defined in an open neighborhood of $\points$ in $\ambient$;
this assumption is done only for convenience, and it does not affect any of our results.}

The tuple $\game \equiv \gamefull$ will be referred to as a \define{concave game} \cite{Ros65}.
Mainstay examples of such games include (mixed extensions of) finite games, resource allocation problems, Kelly auctions, Cournot competitions, etc.;
for completeness, we detail some of these applications in \cref{app:examples}.

\subsection{Nash equilibrium}
\label{sec:prelims-Nash}

The leading solution concept in game theory is that of a \acli{NE}, defined here as an action profile $\eq\in\points$ which discourages unilateral deviations, \ie
\begin{equation}
\label{eq:Nash}
\tag{NE}
\pay_{\play}(\eq)
	\geq \pay_{\play}(\point_{\play};\eq_{-\play})
	\quad
	\text{for all $\point_{\play}\in\points_{\play}$ and all $\play\in\players$}
	\eqstop
\end{equation}
A concave game always admits a \acl{NE} if $\points$ is compact, and it admits a \emph{unique} equilibrium if the game is strongly monotone in the sense of \cref{def:monotone} below:

\begin{definition}
\label{def:monotone}
A game $\game \equiv \gamefull$ is called \define{$\strong$-monotone} if there exists some $\strong \geq 0$ such that
\begin{equation}
\label{eq:monotone}
\tag{Mon}
\braket{\payfield(\pointalt) - \payfield(\point)}{\pointalt - \point}
	\leq - \strong \norm{\pointalt - \point}^{2}
	\quad
	\text{for all $\point,\pointalt\in\points$}.
\end{equation}
If \eqref{eq:monotone} holds for some $\strong > 0$, the game will be called \define{strongly monotone};
otherwise, if \eqref{eq:monotone} only holds for $\strong=0$, $\game$ will be called \define{merely} \define{monotone} (or simply monotone when the distinction is not important).
Finally, if \eqref{eq:monotone} binds for $\strong=0$ and all $\point,\pointalt\in\points$\textemdash that is, $\braket{\payfield(\pointalt) - \payfield(\point))}{\pointalt - \point} = 0$ for all $\point,\pointalt\in\points$\textemdash the game will be called \define{null-monotone}.
\endenv
\end{definition}

\begin{remark}
Merely monotone games could be viewed as a ``hybrid'' between null and strictly monotone games:
generically, at any given action profile of a merely monotone game, there would be directions of motion where the (symmetrized) Jacobian of the players' gradient field has a zero eigenvalue, and directions with positive eigenvalues; either set (but not both) could be empty, the former corresponding to the ``null-monotone'' directions, the latter corresponding to the ``strongly monotone'' ones.
\endenv
\end{remark}

\begin{remark}
A weighted variant of \eqref{eq:monotone} is sometimes called \emph{diagonal} (\emph{strict}\,/\,\emph{strong}) \emph{concavity}, in reference to the work of \citet{Ros65};
for a pointed version of these conditions known as \define{variational stability} \cite{MZ19,HAM21,MHC24} or \define{coherence} \cite{MLZF+19,ZMBB+20}.
These variants will not be important for our purposes.
\endenv
\end{remark}

\subsection{Regularized learning}
\label{sec:prelims-learning}

In the rest of our paper, we will consider a family of online learning schemes adhering to the following model of ``regularized learning'':
players aggregate gradient feedback on their payoff functions over time and, at each instance of play, they choose the action which is most closely aligned to this aggregate.
We provide a detailed description of this model in \cref{sec:cont,sec:disc}\textemdash in continuous and discrete time respectively\textemdash and only describe here the core idea.

At a high level, the common denominator of these schemes is the way that players choose their actions based on the accumulation of payoff gradients over time.
Formally, we will treat payoff gradients as dual vectors and we will write $\dpoints_{\play} \defeq \dspace_{\play}$ for the dual space of $\vecspace_{\play}$ and $\dpoints = \prod_{\play}\dpoints_{\play} = \dspace$ for the ensemble thereof.
Then, given an aggregate of gradient steps $\dpoint_{\play}\in\dpoints_{\play}$, we will assume that the $\play$-th player chooses an action via a ``\define{generalized projection}''\textemdash or \define{mirror}\textemdash map $\mirror_{\play}\from\dpoints_{\play}\to\points_{\play}$ of the general form
\begin{equation}
\label{eq:mirror}
\mirror_{\play}(\dpoint_{\play})
	= \argmax\nolimits_{\point_{\play}} \setof{\braket{\dpoint_{\play}}{\point_{\play}} - \hreg_{\play}(\point_{\play})}
	\quad
	\text{for all $\dpoint_{\play}\in\dpoints_{\play}$}.
\end{equation}
In the above $\hreg_{\play}\from\points_{\play}\to\R$ is a continuous $\hstr_{\play}$-strongly convex function, that is,
\begin{equation}
\label{eq:hstr}
\hreg_{\play}(\coef\point_{\play} + (1-\coef)\pointalt_{\play})
	\leq \coef \hreg_{\play}(\point_{\play})
		+ (1-\coef) \hreg_{\play}(\pointalt_{\play})
		- \tfrac{1}{2} \hstr_{\play} \coef (1-\coef) \norm{\pointalt_{\play} - \point_{\play}}^{2}
\end{equation}
for all $\point_{\play},\pointalt_{\play}\in\points_{\play}$ and all $\coef \in [0,1]$.
This function is known as the \define{regularizer} of the method and it acts as a penalty term that smooths out the ``hard'' $\argmax$ correspondence $\dpoint_{\play} \mapsto \argmax_{\play} \braket{\dpoint_{\play}}{\point_{\play}}$.
This regularization scheme has a very long and rich history in game theory and optimization, where $\mirror$ is often referred to as a ``\define{quantal}'' or ``\define{regularized}'' best response operator, \cf \cite{vD87,MP95,SS11,MZ19,LS20} and references therein.
For concreteness, we describe below the two leading examples of this regularization setup (suppressing in both cases the player index $\play\in\players$ for notational clarity):

\begin{example}
[Euclidean regularization]
\label{ex:Eucl}
Let $\hreg(\strat) = \tfrac{1}{2} \twonorm{\point}^{2}$.
Then \eqref{eq:mirror} boils down to the Euclidean projection map
\begin{equation}
\label{eq:Eucl}
\mirror(\dpoint)
	= \Eucl_{\points}(\dpoint)
	\equiv \argmax\nolimits_{\point\in\points} \twonorm{\dpoint - \point}
	\eqstop
\end{equation}
Thus, in particular, if $\points = \ambient$, we readily recover the identity map $\mirror(\dpoint) = \dpoint$.
\endenv
\end{example}

\begin{example}
[Entropic regularization]
\label{ex:logit}
Let $\points = \setdef{\point\in\R_{+}^{\vdim}}{\sum_{\idx=1}^{\vdim} \point_{\idx} = 1}$ be the unit simplex of $\R^{\vdim}$, and let $\hreg(\point) = \sum_{\idx=1}^{\vdim} \point_{\idx} \log\point_{\idx}$ denote the (negative) entropy on $\points$.
Then \eqref{eq:mirror} yields the \define{logit} map
\begin{equation}
\label{eq:logit}
\mirror(\dpoint)
	= \logit(\dpoint)
	\equiv \frac{(\exp(\dpoint_{1}),\dotsc,\exp(\dpoint_{\vdim}))}{\exp(\dpoint_{1}) + \dotsb + \exp(\dpoint_{\vdim})}
	\eqstop
\end{equation}
This map forms the basis of the seminal \Hedge and \ExpThree algorithms in online learning, \cf \cite{LW94,ACBFS95,ACBFS02,CBL06,SS11,LS20} and references therein.
\endenv
\end{example}

To ease notation in the sequel, we will write
$\hreg(\point) \defeq \sum_{\play}\hreg_{\play}(\point_{\play})$ for the players' aggregate regularizer,
$\hstr \defeq \min_{\play}\hstr_{\play}$ for the strong convexity modulus of $\hreg$,
and
$\mirror \defeq \prod_{\play}\mirror_{\play} \from \dpoints\to\points$ for the resulting ensemble \revise{mirror} map.
In the next sections, we describe in detail how this regularization setup is used in a learning context.

\section{Learning under uncertainty in continuous time}
\label{sec:cont}

To set the stage for the sequel, we begin with two simple games that will serve as ``minimal working examples'' for the more general model and results presented in the sections to come.
We focus for the moment on continuous-time interactions;
the discrete-time setting is presented in \cref{sec:disc}.

\subsection{A gentle start}
\label{sec:gentle}

Consider the following $2$-player, convex-concave min-max games:
\begin{subequations}
\label{eq:gentle}
\begin{flalign}
\label{eq:bilinear}
\quad
&(a)\;\;
	\textrm{Bilinear saddle:}
	&\;
	&\pay_{1}(\minvar,\maxvar) = -\pay_{2}(\minvar,\maxvar) = -\minvar\maxvar
	&\;
	\text{for $\minvar,\maxvar\in\R$}.
	&&
	\\
\label{eq:quadratic}
\quad
&(b)\;\;
	\textrm{Quadratic saddle:}
	&\;
	&\pay_{1}(\minvar,\maxvar) = -\pay_{2}(\minvar,\maxvar) = \maxvar^{2}/2 - \minvar^{2}/2
	&\;
	\text{for $\minvar,\maxvar\in\R$}.
	&&
\end{flalign}
\end{subequations}
Both games are monotone and they admit a unique Nash equilibrium at the origin.
Their gradient fields are $\payfield(\minvar,\maxvar) = (-\maxvar,\minvar)$ and $\payfield(\minvar,\maxvar) = -(\minvar,\maxvar)$ respectively, so the first game is \emph{null-monotone} and the second one is \emph{$1$-strongly monotone}.
Accordingly, if each player follows their individual payoff gradient to increase their rewards, we obtain the \acl{GDA} dynamics
\begin{flalign}
\label{eq:GDA-det}
\tag{GDA}
(a)\;
\dtrajof{\time}
	= (-\maxvar(\time),\minvar(\time))
	\quad
	\text{and}
	\quad
(b)\;
\dtrajof{\time}
	= -(\minvar(\time),\maxvar(\time))
\end{flalign}
for the bilinear and quadratic games \eqref{eq:bilinear} and \eqref{eq:quadratic} respectively.
It is then trivial to see that, in the bilinear case, \eqref{eq:GDA-det} cycles periodically at a constant distance from the game's equilibrium, whereas, in the quadratic case, the dynamics converge to the game's equilibrium at a geometric rate.

To model uncertainty in this setting, we will consider the stochastic gradient dynamics
\begin{equation}
\label{eq:GDA-stoch}
\tag{S-GDA}
\dd\Stateof{\time}
	= \payfield(\Stateof{\time}) \dd\time
		+ \diffmat \dd\brownof{\time}
\end{equation}
where $\brownof{\time} = (\brownof[1]{\time},\brownof[2]{\time})$ is a Brownian motion in $\R^{2}$ and $\diffmat>0$ is the magnitude of the noise entering the process.
Intuitively, this \ac{SDE} should be viewed as a rigorous formulation of the informal model $\dtrajof{\time} = \payfield(\stateof{\time}) + \textrm{``noise''}$, with the Brownian term $\brownof{\time}$ capturing all sources of randomness and uncertainty in the players' environment.%
\footnote{For a primer on \acp{SDE}, see \cite{Kuo06,Oks13};
for completeness, we also present some basic definitions in \cref{app:stoch}.}
Consequently, to understand the impact of uncertainty in each case of \eqref{eq:GDA-det}, we will examine the following quantities:
\begin{enumerate}
\item
The distance $\twonorm{\Stateof{\time}}^{2}$ of $\Stateof{\time}$ from the game's equilibrium (that is, the origin of $\R^{2}$).
\item
The time $\stoptime_{\radius} = \inf\setdef{\time>0}{\twonorm{\Stateof{\time}} \leq \radius}$ at which $\Stateof{\time}$ gets within $\radius$ of the game's equilibrium.
\item
The density $\pdist(\point,\time)$ of $\Stateof{\time}$\textemdash and, if it exists, its long-run limit $\invdist(\point) \defeq \lim_{\time\to\infty} \pdist(\point,\time)$.
\end{enumerate}

When it exists, $\invdist$ is known as the \define{stationary}\textemdash or \define{invariant}\textemdash \define{distribution} of $\state$, and it is closely related to the \define{occupation measure} $\occmeas_{\time}$ of the process, defined here as
\begin{equation}
\label{eq:occmeas-cont}
\occmeas_{\time}(\borel)
	= \frac{1}{\time} \int_{\tstart}^{\time} \oneof{\Stateof{\timealt} \in \borel} \dd\timealt
	\quad
	\text{for every Borel $\borel\subseteq\points$}.
\end{equation}
Under mild ergodicity conditions \cite[Cor.~25.9]{Kal02},
we have $\lim_{\time\to\infty} \occmeas_{\time}(\borel) = \int_{\borel} \invdist$ so, concretely, $\invdist$ measures the fraction of time that $\Stateof{\time}$ spends in a given subset of $\points$ in the long run.

Taken together, these metrics provide a fairly complete picture of the statistics of $\Stateof{\time}$ so, in the rest of this section, we analyze them in the context of \eqref{eq:GDA-stoch} applied to the games \eqref{eq:bilinear} and \eqref{eq:quadratic}.

\para{Case 1: Bilinear saddles}

In this case, by a direct application of Itô's formula---the chain rule of stochastic calculus \cite[Chap.~4]{Oks13}---we readily obtain
\begin{equation}
d\parens*{\cramped{\twonorm{\Stateof{\time}}^{2}}}
	= 2 \Stateof{\time} \cdot d\Stateof{\time}
		+ \dd\Stateof{\time} \cdot d\Stateof{\time}
	= 2 \diffmat^{2} \dd\time
		+ \diffmat \, \Stateof{\time} \cdot d\brownof{\time}
	\eqstop
\end{equation}
This suggests that, on average, $\cramped{\twonorm{\Stateof{\time}}^{2}}$ increases as $\Theta(\diffmat^{2}\time)$.
Building on this observation, we show in \cref{app:cont} that the dynamics \eqref{eq:GDA-stoch} for the bilinear game \eqref{eq:bilinear} enjoy the following properties:
\smallskip

\begin{restatable}{proposition}{Bilinear}
\label{prop:bilinear}
Suppose that \eqref{eq:GDA-stoch} is run on the game \eqref{eq:bilinear} with initial condition $\init[\point]\in\R^{2}$.
Then:
\begin{enumerate}
\item
$\lim_{\time\to\infty} \exwrt*{\init[\point]}{\cramped{\twonorm{\Stateof{\time}}^{2}}} = \infty$, \ie $\Stateof{\time}$ escapes to infinity in mean square.
\item
$\exwrt{\init[\point]}{\stoptime_{\radius}} = \infty$ if $\radius < \norm{\init[\point]}$, \ie $\Stateof{\time}$ takes infinite time on average to get closer to equilibrium.
\item
The limit $\invdist(\point) = \lim_{\time\to\infty} \pdist(\point,\time)$ does not exist, \ie $\state$ does not admit an invariant distribution.\!
\end{enumerate}
\end{restatable}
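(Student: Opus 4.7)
The approach exploits two structural facts specific to \eqref{eq:bilinear}: first, the payoff field $\payfield(\state) = (-\maxvar, \minvar) = J\state$ (with $J$ the skew-symmetric rotation generator satisfying $J^{2} = -\eye$) is orthogonal to $\state$\textemdash this is precisely null-monotonicity\textemdash and second, the induced deterministic flow $e^{\time J}$ is a one-parameter family of rotations of $\R^{2}$. The first fact drives part~(1): applying Itô's formula to $\twonorm{\Stateof{\time}}^{2}$, the drift contribution $2\braket{\Stateof{\time}}{\payfield(\Stateof{\time})}$ vanishes, the quadratic-variation term contributes $2\diffmat^{2}\dd\time$, and the stochastic part is a local martingale. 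A standard localization argument (second moments remain finite at each finite time) then allows me to pass to expectations and kill the martingale, giving $\exwrt{\init[\point]}{\twonorm{\Stateof{\time}}^{2}} = \twonorm{\init[\point]}^{2} + 2\diffmat^{2}\time$, which diverges.

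For parts~(2) and~(3), the plan is to pass to a co-rotating frame by setting $Z_{\time} \defeq e^{-\time J}\Stateof{\time}$. A short Itô computation together with Lévy's characterization (since $e^{-\time J}$ is orthogonal, the process $\tilde{\brown}_{\time} \defeq \int_{0}^{\time} e^{-\timealt J}\dd\brownof{\timealt}$ is itself a standard planar Brownian motion) gives $Z_{\time} = \init[\point] + \diffmat\tilde{\brown}_{\time}$; since $e^{\time J}$ is an isometry, $\twonorm{\Stateof{\time}} = \twonorm{Z_{\time}}$, so the radial behaviour of $\Stateof{\cdot}$ is \emph{exactly} that of a planar Brownian motion started at $\init[\point]$ and scaled by $\diffmat$. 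Part~(2) then reduces to the null-recurrence of 2D Brownian motion: applying optional stopping to the martingale $\twonorm{Z_{\time}}^{2} - 2\diffmat^{2}\time$ at the exit time from the annulus $\{\radius < \twonorm{z} < \Radius\}$, together with the explicit harmonic-measure formula $\prob(\text{hit }\radius\text{ before }\Radius) = (\log\Radius - \log\twonorm{\init[\point]})/(\log\Radius - \log\radius)$ (which relies on $\log\twonorm{\cdot}$ being harmonic in the punctured plane), produces an exit-time expectation that diverges as $\Radius \to \infty$; monotone convergence to $\stoptime_{\radius}$ (finite \as by recurrence) then delivers $\exwrt{\init[\point]}{\stoptime_{\radius}} = \infty$. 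Part~(3) is then immediate: the law of $\Stateof{\time}$ is Gaussian with mean $e^{\time J}\init[\point]$ and covariance $\diffmat^{2}\time\,\eye$, so its density at each fixed $\point$ decays like $1/\time$, its pointwise limit is identically zero, and no invariant probability distribution exists.

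The argument is structurally very clean, so the main technical obstacle is really just bookkeeping: justifying the localization in part~(1), and the interchange of limits in the optional-stopping step for part~(2) (in particular, checking that the explicit exit-annulus argument really does reduce to the hitting-time $\stoptime_{\radius}$ once $\Radius\to\infty$). The conceptually important observation worth flagging is that the skew-symmetry of $J$ is doing all of the work: it is what makes null-monotonicity compatible with the rotational reduction, and this structural luxury is peculiar to the bilinear example\textemdash it will not carry over to the more general null-monotone games considered later, so the approach will need to be substantially upgraded (presumably through Lyapunov-based martingale arguments working directly with $\hreg$ or a Bregman-type potential) for the broader results announced in the rest of the paper.
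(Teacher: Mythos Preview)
Your proof is correct. Part~(1) matches the paper essentially verbatim. For parts~(2) and~(3), however, you take a genuinely different route: the paper never passes to the co-rotating frame or invokes planar Brownian motion. For~(2), the paper argues by a one-line contradiction: assuming $\exwrt{\init[\point]}{\stoptime_{\radius}} < \infty$, Dynkin's formula applied directly to $\twonorm{\point}^{2}$ at $\stoptime_{\radius}$ yields $\radius^{2} = \exwrt{\init[\point]}{\twonorm{\Stateof{\stoptime_{\radius}}}^{2}} = \twonorm{\init[\point]}^{2} + 2\diffmat^{2}\exwrt{\init[\point]}{\stoptime_{\radius}} \geq \twonorm{\init[\point]}^{2}$, which is absurd. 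For~(3), the paper invokes the abstract transience/recurrence dichotomy for uniformly elliptic diffusions: since part~(2) rules out positive recurrence, no invariant probability measure can exist. Your approach is more explicit and computational\textemdash it identifies $\Stateof{\time}$ as a rotated, scaled planar Brownian motion and reads off both the null-recurrence and the Gaussian law directly\textemdash whereas the paper's is shorter and already written in the Lyapunov/Dynkin idiom that the general theorems require. Your closing remark is exactly right: the rotational reduction is peculiar to this example, and the paper's general null-monotone result (\cref{thm:null-cont}) indeed proceeds via the Fenchel coupling as a Lyapunov function, precisely along the lines you anticipate.
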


\Cref{prop:bilinear} shows that, in the presence of uncertainty, the periodicity of the deterministic dynamics \eqref{eq:GDA-det} is completely destroyed.
In fact, despite random fluctuations that occasionally bring $\Stateof{\time}$ closer to equilibrium, \eqref{eq:GDA-stoch} exhibits a consistent drift \emph{away from equilibrium}, escaping any compact set in finite time and requiring infinite time to return on average.
As a result, $\Stateof{\time}$ becomes infinitely spread out in the long run, exhibiting no measurable concentration in \emph{any} region of $\R^{2}$.
For a partial illustration of this behavior\textemdash which we view as antithetical to convergence\textemdash \cf \cref{fig:dynamics}.
\hfill
\endenv

\para{Case 2: Quadratic saddles}

We now proceed to examine the behavior of \eqref{eq:GDA-stoch} in the quadratic min-max problem \eqref{eq:quadratic}, where \eqref{eq:GDA-stoch} gives
\begin{equation}
d\Stateof{\time}
	= -\Stateof{\time} \dd\time + \diffmat \dd\brownof{\time}
	\eqstop
\end{equation}
As is well known \cite[Chap.~7.4]{Kuo06}, this \ac{SDE} describes the $2$-dimensional \acdef{OU} process
\begin{equation}
\label{eq:OU}
\tag{OU}
\txs
\Stateof{\time}
	= \Stateof{\tstart} e^{-\time} + \diffmat \int_{\tstart}^{\time} e^{-(\time - \timealt)} \dd\brownof{\timealt}
	\eqstop
\end{equation}
Hence, by unfolding the stochastic integral in \eqref{eq:OU}, we can draw the following conclusions:

\begin{restatable}{proposition}{Quadratic}
\label{prop:quadratic}
Suppose that \eqref{eq:GDA-stoch} is run on the game \eqref{eq:quadratic} with initial condition $\init[\point]\in\R^{2}$.
Then:
\begin{enumerate}
\item
$\lim_{\time\to\infty} \exwrt*{\init[\point]}{\cramped{\twonorm{\Stateof{\time}}^{2}}} = \diffmat^{2}$, \ie the dynamics fluctuate at mean distance $\diffmat$ from equilibrium.
\item
The mean time required to get within distance $\radius$ of the game's equilibrium is bounded as
\begin{equation}
\label{eq:hit-quadratic}
\exwrt{\init[\point]}{\stoptime_{\radius}}
	\leq \frac{1}{2}\frac{\twonorm{\init[\point]}^{2} - \radius^{2}}{\radius^{2} - \diffmat^{2}}
	\qquad
	\text{for all $\diffmat < \radius < \twonorm{\init[\point]}$}.
\end{equation}
\item
The density of $\Stateof{\time}$ is
\(
\pdist(\point,\time)
	= \bracks{\pi\diffmat^{2}(1-e^{-2\time})}^{-1} \exp\parens*{-\frac{\twonorm{\point - e^{-\time} \init[\point]}^{2}}{(1 - e^{-2\time}) \diffmat^{2}}}
\).
In particular, $\Stateof{\time}$ converges in distribution to a Gaussian random variable centered at $0$, \viz
\begin{equation}
\label{eq:invdist-quadratic}
\invdist(\point)
	\equiv \lim\nolimits_{\time\to\infty} \pdist(\point,\time)
	= 1/\parens{\pi\diffmat^{2}} \cdot e^{-\twonorm{\point}^{2}/\diffmat^{2}}
	\eqstop
\end{equation}
\end{enumerate}
\end{restatable}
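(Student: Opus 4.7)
\smallskip

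\noindent\textbf{Proof plan.} The plan is to exploit that, on the quadratic game \eqref{eq:quadratic}, the dynamics \eqref{eq:GDA-stoch} reduce to the linear \ac{SDE} $d\Stateof{\time} = -\Stateof{\time}\,d\time + \diffmat\,d\brownof{\time}$, which integrates explicitly to the Ornstein\textendash Uhlenbeck representation \eqref{eq:OU}. With this closed form in hand, each of the three claims follows from a standard computation; I will spell out which tool handles which item and then indicate the only non-trivial step.

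First, for item (3), I would compute the conditional law of $\Stateof{\time}$ given $\Stateof{\tstart} = \init[\point]$ directly from \eqref{eq:OU}. The stochastic integral $\diffmat\int_{\tstart}^{\time} e^{-(\time-\timealt)}\,d\brownof{\timealt}$ is a Wiener integral of a deterministic integrand, hence a centered Gaussian vector in $\R^{2}$. By Itô's isometry applied componentwise, its covariance is $\diffmat^{2}\int_{\tstart}^{\time} e^{-2(\time-\timealt)}\,d\timealt \cdot I = \tfrac{\diffmat^{2}}{2}(1-e^{-2\time})\,I$. Adding the deterministic drift $e^{-\time}\init[\point]$ yields a two-dimensional Gaussian with mean $e^{-\time}\init[\point]$ and covariance $\tfrac{\diffmat^{2}}{2}(1-e^{-2\time})I$, which, after writing out the Gaussian density in $\R^{2}$, matches the expression stated in the proposition. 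Letting $\time\to\infty$ then gives the invariant density $\invdist(\point) = (\pi\diffmat^{2})^{-1} e^{-\twonorm{\point}^{2}/\diffmat^{2}}$.

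For item (1), I would simply take the squared norm of \eqref{eq:OU}. Since the deterministic part $e^{-\time}\init[\point]$ and the stochastic integral are uncorrelated, the cross term vanishes in expectation, and the isometry above yields
\begin{equation*}
\exwrt{\init[\point]}{\twonorm{\Stateof{\time}}^{2}}
    = e^{-2\time}\twonorm{\init[\point]}^{2} + \diffmat^{2}(1-e^{-2\time})
    \xrightarrow[\time\to\infty]{} \diffmat^{2}.
\end{equation*}
Alternatively\textemdash and this is how I would double-check the bound in (2)\textemdash one may apply Itô's formula to the Lyapunov function $\liap(\point) = \twonorm{\point}^{2}$ to obtain $d\liap(\Stateof{\time}) = [-2\liap(\Stateof{\time}) + 2\diffmat^{2}]\,d\time + 2\diffmat\,\Stateof{\time}\cdot d\brownof{\time}$, which after taking expectations gives an ODE for $\exwrt{\init[\point]}{\liap(\Stateof{\time})}$ with the same solution.

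The only step with a little substance is item (2), for which I would use a Lyapunov argument together with optional stopping. Applying Dynkin's formula to $\liap(\point) = \twonorm{\point}^{2}$ at the bounded stopping time $\stoptime_{\radius}\wedge\time$ yields
\begin{equation*}
\exwrt{\init[\point]}{\liap(\Stateof{\stoptime_{\radius}\wedge\time})}
    = \twonorm{\init[\point]}^{2} + \exwrt{\init[\point]}{\int_{\tstart}^{\stoptime_{\radius}\wedge\time} \bracks{-2\liap(\Stateof{\timealt}) + 2\diffmat^{2}}\,d\timealt}.
\end{equation*}
On $\{\timealt<\stoptime_{\radius}\}$ we have $\liap(\Stateof{\timealt}) > \radius^{2}$ by definition of $\stoptime_{\radius}$, so the integrand is bounded above by $-2(\radius^{2}-\diffmat^{2}) < 0$ whenever $\radius>\diffmat$; on the other hand, the \ac{LHS} is at least $\radius^{2}$ by continuity of $\Stateof{\cdot}$. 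Rearranging gives $\exwrt{\init[\point]}{\stoptime_{\radius}\wedge\time} \leq (\twonorm{\init[\point]}^{2}-\radius^{2})/[2(\radius^{2}-\diffmat^{2})]$, and the estimate \eqref{eq:hit-quadratic} follows by monotone convergence as $\time\to\infty$. The main (and only) subtlety here is ensuring that the local martingale term in Itô's formula is a genuine martingale when stopped at $\stoptime_{\radius}\wedge\time$, which is immediate since $\Stateof{\stoptime_{\radius}\wedge\time}$ is bounded in $L^{2}$ by the explicit formula obtained in item (1).
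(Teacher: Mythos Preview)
Your proposal is correct and follows essentially the same route as the paper: both derive the explicit Ornstein\textendash Uhlenbeck representation, read off the Gaussian law and second moment directly, and handle the hitting-time bound by applying Dynkin's formula to $\twonorm{\point}^{2}$ at the truncated stopping time $\stoptime_{\radius}\wedge\time$. The only cosmetic differences are that the paper cites a textbook for the transition density whereas you compute it via It\^o isometry, and the paper invokes dominated convergence rather than monotone convergence in the final limit; both are valid here.
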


\Cref{prop:quadratic} shows that the geometric convergence properties of the deterministic dynamics \eqref{eq:GDA-det} are again destroyed in the presence of uncertainty.
However, in stark contrast to \cref{prop:bilinear} for the bilinear case, $\Stateof{\time}$ now exhibits a consistent drift \emph{toward equilibrium}, and it ends up being sharply concentrated at a distance of $\bigoh(\diffmat^{2})$ from equilibrium.
This interplay between recurrence and concentration will play a crucial role in the sequel, and our aim in the rest of this section will be to quantify the extent to which it holds in a more general setting.

\subsection{Learning in continuous time}
\label{sec:FTRL-cont}

We now proceed to describe our general model for multi-agent learning under uncertainty, hinging on the stochastic \acli{FTRL} template
\begin{equation}
\label{eq:FTRL-stoch}
\tag{S-FTRL}
d\Scoreof[\play]{\time}
	= \payfield_{\play}(\Stateof{\time}) \dd\time
		+ d\martof[\play]{\time}
	\qquad
\Stateof[\play]{\time}
	= \mirror_{\play}(\Scoreof[\play]{\time})
	\eqstop
\end{equation}
In the above,
\begin{enumerate*}
[(\itshape i\hspace*{1pt}\upshape)]
\item
$\Scoreof[\play]{\time} \in \dpoints_{\play}$ is a ``score'' variable that tracks the aggregation of individual payoff gradients in $\dpoints_{\play}$;
\item
$\martof[\play]{\time} \in \dpoints_{\play}$ is a continuous square-integrable martingale acting as a catch-all, ``colored noise'' disturbance term;
and
\item
$\mirror_{\play}\from\dpoints_{\play}\to\points_{\play}$ is the regularized mirror map of player $\play\in\players$, as per \eqref{eq:mirror}.
\end{enumerate*}
In this regard, \eqref{eq:FTRL-stoch} represents a noisy ``stimulus-response'' mechanism, where each player $\play\in\players$ tracks the aggregation of payoff gradients under uncertainty\textemdash the ``stimulus''\textemdash and ``responds'' to this aggregate via their individual regularized mirror map $\mirror_{\play}$.

\smallskip
\begin{remark}
The terminology \acl{FTRL} is due to \cite{SSS06,SS11}, who first studied this scheme in the context of online convex optimization in discrete time.
This family of algorithms and dynamics has been widely studied in the literature;
we provide more details on this in \cref{app:mirror}.
\hfill
\endenv
\end{remark}

For concreteness, we will assume that the noise term
$\martof{\time} = (\martof[\play]{\time})_{\play\in\players}$
in \eqref{eq:FTRL-stoch}
is of the form
\begin{equation}
\label{eq:mart}
d\martof{\time}
	= \diffmat(\Stateof{\time}) \cdot d\brownof{\time}
	\quad
	\text{or, more explicitly}
	\quad
d\martof[\play]{\time}
	= \diffmat_{\play}(\Stateof{\time}) \cdot d\brownof{\time}
\end{equation}
where
$\brownof{\time} = (\brownof[1]{\time},\dotsc,\brownof[\nIdx]{\time})$
is a standard Brownian motion in $\R^{\nIdx}$,
and
$\diffmat(\point) = (\diffmat_{\play}(\point))_{\play\in\players}$
is an ensemble of state-dependent \define{diffusion matrices} $\diffmat_{\play}\from\points_{\play}\to\R^{\vdim_{\play}\times\nIdx}$, $\play\in\players$.%
\footnote{%
The representation \eqref{eq:mart} of $\martof{\time}$ via a Brownian integrator is not an assumption per se, but a consequence of the martingale representation theorem, which allows us to express any homogeneous square-integrable martingale in this form \citep[Thm.~4.3.4]{Oks13}.
Under this light, the loss in generality is negligible in our case.
}
Importantly, the model \eqref{eq:mart} allows for \emph{correlated uncertainty} between different components of the process\textemdash \eg accounting for random disturbances on shared road segments in a congestion game\textemdash so it will be the base model for our analysis.
Our only standing assumption will be that $\diffmat\from\points\to\R^{\vdim\times\nIdx}$ is bounded and Lipschitz continuous, which ensures that \eqref{eq:FTRL-stoch} is \define{well-posed}, \ie it admits a unique strong solution that exists for all time and for every initial condition $\Scoreof{\tstart} \gets \dpoint \in \dpoints$ (\cf \cref{app:stoch}).

\smallskip
\begin{remark}
To connect the above with \cref{sec:gentle}, note that \eqref{eq:GDA-stoch} is recovered from \eqref{eq:FTRL-stoch} by taking
$\points_{\play} = \R$,
$\martof[\play]{\time} = \diffmat \, \brownof[\play]{\time}$,
and
$\hreg_{\play}(\point_{\play}) = \point_{\play}^{2}/2$ for $\play=1,2$ (so $\mirror_{\play}(\dpoint_{\play}) = \dpoint_{\play}$ by \cref{ex:Eucl}).
\hfill
\endenv
\end{remark}

\subsection{Analysis and results}
\label{sec:results-cont}

We now proceed to describe our main results for the stochastic dynamics \eqref{eq:FTRL-stoch}---which, as we show shortly, reflect the dichotomy between bilinear and quadratic saddle-point problems that we noted in \cref{sec:gentle}.
To state them, it will be convenient to introduce a ``primal-dual''generalization of the Euclidean distance that is more closely aligned with the regularization setup underlying the players' response scheme.
Deferring the details to \cref{app:mirror}, we define here the \define{Fenchel coupling} induced by the regularizer $\hreg_{\play}$ of player $\play\in\players$ as
\begin{equation}
\label{eq:Fench-play}
\fenchof[\play]{\base_{\play},\dpoint_{\play}}
	= \hreg_{\play}(\base_{\play})
		+ \hconj_{\play}(\dpoint_{\play})
		- \braket{\dpoint_{\play}}{\point_{\play}}
	\quad
	\text{for all $\play\in\players$, and all $\base_{\play}\in\strats_{\play}$, $\dpoint_{\play}\in\dpoints_{\play}$}
\end{equation}
where $\hconj_{\play}(\dpoint_{\play}) \defeq \max_{\point_{\play}\in\points_{\play}} \{\braket{\dpoint_{\play}}{\point_{\play}} - \hreg_{\play}(\point_{\play})\}$ denotes the convex conjugate of $\hreg_{\play}$.
For example, in the unconstrained Euclidean case (\cref{ex:Eucl}), we recover the Euclidean distance squared, \viz $\fenchof[\play]{\base_{\play},\dpoint_{\play}} = \tfrac{1}{2} \norm{\mirror_{\play}(\dpoint_{\play}) - \base_{\play}}^{2}$;
by comparison, under entropic regularization on the simplex (\cref{ex:logit}), we get a ``dualized'' version of the \acl{KL} divergence, \cf \cref{app:mirror}.
In all cases, $\fench_{\play}$ is \emph{positive-semidefinite} in the sense that $\fenchofX[\play]{\base_{\play},\dpoint_{\play}} \geq 0$ for all $\dpoint_{\play}\in\dpoints_{\play}$, with equality if and ony if $\mirror_{\play}(\dpoint_{\play}) = \base_{\play}$.
In view of this, the total coupling $\fenchof{\point,\dpoint} \defeq \sum_{\play} \fenchof[\play]{\base_{\play}}{\dpoint_{\play}}$ is a valid measure of ``divergence'' between $\base\in\points$ and $\dpoint\in\dpoints$, and we will use it freely in the sequel as such. 

The last ingredient that we will need is two measures of the amount of randomness in \eqref{eq:FTRL-stoch}, \viz
\begin{equation}
\label{eq:sigma-min}
\txs
\diffmat_{\min}^{2}
	\defeq \min_{\point\in\points} \lambda_{\min}(\qmat(\point))
	\quad
	\text{and}
	\quad
\diffmat_{\max}^{2}
	\defeq \max_{\point\in\points} \lambda_{\max}(\qmat(\point))
\end{equation}
where
$\qmat \equiv \diffmat \diffmat^{\top}$ denotes the quadratic covariation
matrix of the martingale $\martof{\time}$,
and $\lambda_{\min}$ (resp.~$\lambda_{\max}$) denotes the minimum (resp.~maximum) eigenvalue thereof.

With all this in hand, we will focus on two broad classes of games, \define{null-monotone} and \define{strongly monotone}, of which the bilinear and quadratic examples of \cref{sec:gentle} are archetypal examples.
To state our results, we will assume that \eqref{eq:FTRL-stoch} is initialized at $\init[\point] \gets \mirror(\init[\dpoint]) \in\relint\points$ for some $\init[\dpoint] \in \dpoints$,
and
we will write $\curr[\fench] \equiv \fench(\eq,\Scoreof{\time})$ where $\eq$ is an equilibrium of the game.
We then have:

\begin{restatable}[Null-monotone games]{theorem}{NullCont}
\label{thm:null-cont}
Suppose that \eqref{eq:FTRL-stoch} is run with a smooth mirror map $\mirror$ in a null-monotone game $\game$.
Suppose further that the game admits an interior equilibrium $\eq$, and consider the hitting times
$\stoptime_{\thres}^{-} \defeq \inf\setdef{\time>0}{\curr[\fench] \leq \init[\fench] - \thres}$
and
$\stoptime_{\thres}^{+} \defeq \inf\setdef{\time>0}{\curr[\fench] \geq \init[\fench] + \thres}$.
If $\cramped{\diffmat_{\min}^{2}} > 0$ and $\thres > 0$ is small enough, then
\begin{equation}
\label{eq:hit-null-cont}
\exwrt{\init[\point]}{\stoptime_{\thres}^{-}}
	= \infty
	\quad
	\text{and}
	\quad
\exwrt{\init[\point]}{\stoptime_{\thres}^{+}}
	\leq 2 \thres \big/ \parens[\big]{\kappa \, \diffmat_{\min}^{2}}
\end{equation}
for some constant $\kappa \equiv \kappa_{\thres} > 0$;
in addition, $\Stateof{\time}$ does not admit a limit density in this case.
\end{restatable}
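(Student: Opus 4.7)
The backbone of the argument will be Itô's formula applied to the Fenchel coupling $\curr[\fench] \defeq \fenchof{\eq}{\Scoreof{\time}}$. Since $\nabla_{\dpoint} \fenchof{\eq}{\dpoint} = \mirror(\dpoint) - \eq$ and $\nabla^{2}_{\dpoint} \fenchof{\eq}{\dpoint} = \nabla^{2}\hconj(\dpoint)$, substituting \eqref{eq:FTRL-stoch} and \eqref{eq:mart} yields the semimartingale decomposition
\begin{equation*}
\dd\curr[\fench]
	= \braket{\Stateof{\time} - \eq}{\payfield(\Stateof{\time})}\,\dd\time
	+ \tfrac{1}{2}\tr\!\bigl[\nabla^{2}\hconj(\Scoreof{\time})\,\qmat(\Stateof{\time})\bigr]\,\dd\time
	+ \dd N_{\time},
\end{equation*}
with $N$ a continuous local martingale. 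Because $\eq$ is interior, the variational characterization of Nash equilibrium gives $\payfield(\eq) = 0$, and applying \eqref{eq:monotone} at $\strong = 0$ forces $\braket{\payfield(\Stateof{\time})}{\Stateof{\time} - \eq} = \braket{\payfield(\Stateof{\time}) - \payfield(\eq)}{\Stateof{\time} - \eq} = 0$. Hence the payoff drift vanishes outright, and $\curr[\fench]$ becomes a local submartingale whose drift is the pure Itô correction $\alpha_{\time} \defeq \tfrac{1}{2}\tr[\nabla^{2}\hconj(\Scoreof{\time})\,\qmat(\Stateof{\time})]$.

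Next, I would bound $\alpha_{\time}$ from below. Using $\tr(AB) \geq \lambda_{\min}(B)\tr(A)$ for PSD matrices, together with the assumed smoothness of $\mirror$ (so that $\hconj \in C^{2}$ with positive-definite Hessian), I would argue that on the Fenchel sublevel set $\setdef{\dpoint\in\dpoints}{\fenchof{\eq}{\dpoint} \leq \init[\fench] + \thres}$---which is compact for $\thres$ small enough---the trace $\tr[\nabla^{2}\hconj(\dpoint)]$ is uniformly bounded below by a constant $\kappa \equiv \kappa_{\thres} > 0$, so that $\alpha_{\time} \geq \tfrac{1}{2}\kappa\,\diffmat_{\min}^{2}$ whenever $\Scoreof{\time}$ belongs to this set. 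For the upper bound on $\exwrt{\init[\point]}{\stoptime_{\thres}^{+}}$, path-continuity of $\curr[\fench]$ guarantees that $\Scoreof{\time}$ stays in the sublevel set for all $\time \leq \stoptime_{\thres}^{+}$; optional stopping applied to the local martingale $\curr[\fench] - \int_{0}^{\time}\alpha_{\runalt}\,\dd\runalt$ at $\stoptime_{\thres}^{+}\wedge\nRuns$, combined with the a.s.\ bound $\fench_{\stoptime_{\thres}^{+}\wedge\nRuns} \leq \init[\fench] + \thres$ and monotone convergence as $\nRuns \to \infty$, then gives the stated estimate $\exwrt{\init[\point]}{\stoptime_{\thres}^{+}} \leq 2\thres/(\kappa\,\diffmat_{\min}^{2})$.

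For the lower bound, it suffices to show $\probwrt{\init[\point]}{\stoptime_{\thres}^{-} < \infty} < 1$, which automatically implies $\exwrt{\init[\point]}{\stoptime_{\thres}^{-}} = \infty$. To this end I would apply Itô to the exponential transform $\psi(x) \defeq e^{-\coef x}$ to get
\begin{equation*}
\dd\psi(\curr[\fench])
	= \psi(\curr[\fench])\bigl[-\coef\alpha_{\time} + \tfrac{\coef^{2}}{2}\sdev_{\time}^{2}\bigr]\,\dd\time
	- \coef\,\psi(\curr[\fench])\,\dd N_{\time},
\end{equation*}
where $\sdev_{\time}^{2}$ denotes the instantaneous variance of $N$. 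Choosing $\coef > 0$ small enough for the bracketed term to be nonpositive on the relevant region makes $\psi(\curr[\fench])$ a (local) supermartingale; optional stopping at $\stoptime_{\thres}^{-}\wedge\nRuns$ then yields $\probwrt{\init[\point]}{\stoptime_{\thres}^{-} \leq \nRuns}\,e^{\coef\thres} \leq 1$ uniformly in $\nRuns$, and hence $\probwrt{\init[\point]}{\stoptime_{\thres}^{-} < \infty} \leq e^{-\coef\thres} < 1$, as required. Finally, the strict positivity of $\alpha_{\time}$ implies $\ex[\curr[\fench]] \to \infty$ and, under mild regularity, $\curr[\fench] \to \infty$ almost surely, which precludes $\Stateof{\time}$ from admitting a limit density on any neighborhood of $\eq$.

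The main obstacle will be pushing the exponential supermartingale argument through on the unbounded region $\setdef{\dpoint}{\fenchof{\eq}{\dpoint} \geq \init[\fench] - \thres}$: for non-Euclidean regularizers such as the entropic one, $\nabla^{2}\hconj$ degenerates at infinity, so the ratio $\alpha_{\time}/\sdev_{\time}^{2}$ is not uniformly bounded below. Handling this will require a careful localization at a growth-stopping time $\stoptime_{\Radius}$ paired with a separate argument that $\stoptime_{\Radius} \to \infty$ almost surely, which I anticipate to be the most delicate technical step in the proof.
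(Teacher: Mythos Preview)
Your treatment of $\stoptime_{\thres}^{+}$ is correct and essentially matches the paper: lower-bound the It\^o correction $\alpha_{\time}=\tfrac{1}{2}\tr[\nabla^{2}\hconj(\Scoreof{\time})\,\qmat(\Stateof{\time})]$ on the compact sublevel set $\{\fench(\eq,\cdot)\leq \init[\fench]+\thres\}$, then apply optional stopping at $\stoptime_{\thres}^{+}\wedge\nRuns$ and let $\nRuns\to\infty$.

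For $\stoptime_{\thres}^{-}$, however, you take an unnecessarily hard route. The paper simply uses that $\alpha_{\time}\geq 0$ everywhere---both $\qmat$ and $\nabla^{2}\hconj$ are positive-semidefinite, so their product has nonnegative trace---which makes $\curr[\fench]$ a submartingale outright. Arguing by contradiction, if $\exwrt{\init[\point]}{\stoptime_{\thres}^{-}}<\infty$, then Dynkin at $\stoptime_{\thres}^{-}$ gives $\exwrt{\init[\point]}{\fench_{\stoptime_{\thres}^{-}}}\geq \init[\fench]$, while path continuity forces $\fench_{\stoptime_{\thres}^{-}}=\init[\fench]-\thres$; contradiction. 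No exponential transform, no lower bound on $\alpha_{\time}/\sdev_{\time}^{2}$, no localization. Your exponential supermartingale aims for the \emph{stronger} statement $\probwrt{\init[\point]}{\stoptime_{\thres}^{-}<\infty}<1$, but the obstacle you flag at the end is not merely technical: for the entropic regularizer, $\alpha_{\time}$ degenerates while $\sdev_{\time}^{2}$ stays bounded away from zero, so no uniform $\coef>0$ makes the bracketed drift nonpositive on the whole region $\{\fench\geq \init[\fench]-\thres\}$. The paper's argument sidesteps this entirely by asking only for $\alpha_{\time}\geq 0$.

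For the absence of a limit density, your sketch (``$\curr[\fench]\to\infty$ a.s.'') again implicitly needs a uniform positive lower bound on $\alpha_{\time}$, which is unavailable. The paper instead invokes the transience/recurrence dichotomy for uniformly elliptic diffusions: $\diffmat_{\min}>0$ makes $\Scoreof{\time}$ uniformly elliptic, and from the first part any compact set inside $\{\fench(\eq,\cdot)\leq \init[\fench]-\thres\}$ has infinite expected hitting time, so $\Scoreof{\time}$ cannot be positive recurrent and hence admits no invariant probability measure.
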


\begin{restatable}[Strongly monotone games]{theorem}{StrongCont}
\label{thm:strong-cont}
Suppose that \eqref{eq:FTRL-stoch} is run in an $\strong$-strongly monotone game $\game$,
and consider the hitting time
\begin{equation}
\stoptime_{\radius}
	\defeq \inf\setdef{\time>0}{\Stateof{\time} \in \ball_{\radius}(\eq)}
\end{equation}
where $\ball_{\radius}(\eq) = \setdef{\point}{\norm{\point - \eq} \leq \radius}$ is a ball of radius $\radius$ centered on the \textpar{necessarily unique} equilibrium $\eq$ of $\game$.
Then:
\begin{equation}
\label{eq:hit-strong-cont}
\exwrt{\init[\point]}{\stoptime_{\radius}}
	\leq \parens{\init[\fench]/\strong} \big/ \parens{\radius^{2} - \radius_{\diffmat}^{2}}
	\quad
	\text{for all $\radius>\radius_{\diffmat}$},
\end{equation}
where $\radius_{\diffmat} \defeq \diffmat_{\max}/\sqrt{2\hstr\strong}$.
\PM{THE CONSTANT IS NOT CORRECT. MUST FIX.}
If, in addition, $\diffmat_{\min}>0$ and $\eq$ is interior, $\Stateof{\time}$ admits an invariant distribution concentrated in a ball of radius $\bigoh(\diffmat_{\max})$ around $\eq$, and we have
\begin{equation}
\label{eq:invdist-cont}
\txs
\lim_{\time\to\infty} \occmeas_{\time}(\ball_{\radius}(\eq))
	\geq 1 - \radius_{\diffmat}^{2} / \radius^{2}
	\quad
	\text{for all $\radius > \radius_{\diffmat}$}.
\end{equation}
\end{restatable}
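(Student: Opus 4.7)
The plan is to use the Fenchel coupling $\curr[\fench] = \fench(\eq,\Scoreof{\time})$ as a Lyapunov function for \eqref{eq:FTRL-stoch}, exploiting the fact that the noise enters additively in the dual score process $\Scoreof{\time}$. Since $\hreg$ is $\hstr$-strongly convex, its conjugate $\hconj$ is $(1/\hstr)$-smooth, so $\fench(\eq,\cdot)$ is twice continuously differentiable on $\dpoints$ with $\nabla_\dpoint\fench(\eq,\dpoint) = \mirror(\dpoint)-\eq$ and $\nabla_\dpoint^{2}\fench(\eq,\dpoint) = \nabla^{2}\hconj(\dpoint) \mleq (1/\hstr)\eye$. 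Consequently, Itô's formula applied along \eqref{eq:FTRL-stoch} yields
\begin{equation*}
d\curr[\fench] = \braket{\Stateof{\time}-\eq}{\payfield(\Stateof{\time})}\,d\time + \tfrac{1}{2}\tr\bracks{\nabla^{2}\hconj(\Scoreof{\time})\,\qmat(\Stateof{\time})}\,d\time + \braket{\Stateof{\time}-\eq}{\diffmat(\Stateof{\time})\,d\brownof{\time}}.
\end{equation*}
Strong $\strong$-monotonicity combined with the first-order equilibrium condition $\braket{\payfield(\eq)}{\point-\eq}\leq 0$ bounds the first drift term by $-\strong\norm{\Stateof{\time}-\eq}^{2}$, while the semidefinite bound on $\nabla^{2}\hconj$ together with $\lambda_{\max}(\qmat)\leq\diffmat_{\max}^{2}$ controls the Itô correction by a constant multiple of $\diffmat_{\max}^{2}/\hstr$. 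Balancing these two deterministic contributions is precisely what produces the critical radius $\radius_{\diffmat}^{2} = \diffmat_{\max}^{2}/(2\hstr\strong)$.

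For the hitting-time estimate \eqref{eq:hit-strong-cont}, I would apply this drift identity between $\tstart$ and the stopped time $\stoptime_{\radius}\wedge\nRuns$ (with $\nRuns$ localizing the stochastic integral so that its expectation vanishes), then take expectations. On the event $\{\time<\stoptime_{\radius}\}$ we have $\norm{\Stateof{\time}-\eq}^{2}\geq\radius^{2}$, so for $\radius>\radius_{\diffmat}$ the net drift is strictly negative and
\begin{equation*}
0 \;\leq\; \exwrt{\init[\point]}{\fench(\eq,\Scoreof{\stoptime_{\radius}\wedge\nRuns})} \;\leq\; \init[\fench] - \strong\parens{\radius^{2}-\radius_{\diffmat}^{2}}\,\exwrt{\init[\point]}{\stoptime_{\radius}\wedge\nRuns}.
\end{equation*}
Rearranging and sending $\nRuns\to\infty$ by monotone convergence produces \eqref{eq:hit-strong-cont}.

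For the ergodic conclusion \eqref{eq:invdist-cont}, the same drift identity is used in integral rather than stopped form. Integrating from $\tstart$ to $\time$, taking expectations, dividing by $\time$, and letting $\time\to\infty$ (using $\curr[\fench]\geq 0$ so that $\init[\fench]/\time$ vanishes in the limit) gives $\limsup_{\time\to\infty}\time^{-1}\int_{\tstart}^{\time}\exwrt{\init[\point]}{\norm{\Stateof{\timealt}-\eq}^{2}}\,d\timealt \leq \radius_{\diffmat}^{2}$. A Markov-type inequality applied to the occupation measure then yields $\exwrt{\init[\point]}{\occmeas_{\time}(\points\setminus\ball_{\radius}(\eq))} \leq \radius_{\diffmat}^{2}/\radius^{2}$ in the limit, which gives \eqref{eq:invdist-cont}. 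Existence and uniqueness of the invariant distribution when $\diffmat_{\min}>0$ and $\eq$ is interior then follow from the standard Meyn--Tweedie/Foster--Lyapunov recipe: the Lyapunov drift provides tightness, while $\diffmat_{\min}>0$ (combined with interiority of $\eq$, which keeps the dynamics away from the degenerate regime near $\bd\points$) supplies the minorization/irreducibility needed for geometric ergodicity, upgrading the in-expectation occupation bound to its almost-sure form via the ergodic theorem.

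I anticipate two main obstacles. First, pinning down the precise numerical constant in $\radius_{\diffmat}^{2}$\textemdash the issue explicitly flagged by the author in the statement\textemdash hinges on a careful treatment of $\tr[\nabla^{2}\hconj\cdot\qmat]$ when the two PSD matrices do not commute; a naive bound of the form $\tr[AB]\leq \lambda_{\max}(A)\tr(B)$ introduces an extra dimensional factor, and recovering the clean coefficient $1/(2\hstr)$ requires either a sharper noncommutative estimate or a redefinition of $\diffmat_{\max}^{2}$ as a trace rather than a spectral quantity. Second, the minorization step for the invariant-measure argument under a general mirror map $\mirror$ is delicate: one must control transition densities through the nonlinear image $\mirror(\Scoreof{\time})$, which is nontrivial when the geometry of $\points$ induced by $\hreg$ is singular at the boundary (as in the entropic setup of \cref{ex:logit}), and this is where the interiority hypothesis on $\eq$ is essential.
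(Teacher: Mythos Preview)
Your approach to the hitting-time estimate \eqref{eq:hit-strong-cont} and the concentration bound \eqref{eq:invdist-cont} is essentially the paper's: apply It\^o's formula to $\fench(\eq,\Scoreof{\time})$, bound the drift via strong monotonicity and the It\^o correction via the $(1/\hstr)$-smoothness of $\hconj$, stop at $\stoptime_{\radius}\wedge\nRuns$ for the hitting time, and time-average then apply a Markov-type inequality for the occupation measure. One refinement: the paper does not assume $\hconj$ is $C^{2}$, only $C^{1,1}$, so it invokes a weak It\^o formula for Lipschitz-smooth convex functions, yielding an inequality rather than an identity for the second-order term; your argument goes through unchanged with this substitution.

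The gap is in the existence of the invariant measure. Your ``standard Meyn--Tweedie'' invocation runs into an obstacle you do not identify: if $\points$ is not full-dimensional in $\ambient$ (e.g., a simplex), the mirror map satisfies $\mirror(\dpoint+\dvec)=\mirror(\dpoint)$ for every $\dvec$ in the annihilator $\annof{\tanvecs}$ of the tangent hull, so the preimage $\mirror^{-1}(\ball_{\radius}(\eq))$\textemdash and indeed every sublevel set of $\fench(\eq,\cdot)$\textemdash contains a full copy of $\annof{\tanvecs}$ and is \emph{never} compact in $\dpoints$. Hence the Lyapunov drift does not yield tightness for $\Scoreof{\time}$. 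Working instead with $\Stateof{\time}$ in $\points$ does not help either, because (as the paper notes explicitly) its generator is not uniformly elliptic. The paper's resolution is to quotient out the annihilator: it passes to a restricted score process $\Essof{\time}=\essmap(\Scoreof{\time})$ on $\esspoints=\dual[\tanvecs]$, checks that this process remains uniformly elliptic when $\diffmat_{\min}>0$, and shows that the preimages $\essmirror^{-1}(\ball_{\radius}(\eq))$ are compact in $\esspoints$ once $\eq$ is interior (via upper hemicontinuity of $\subd\hreg$). Positive recurrence is then established in $\esspoints$ and pushed forward to $\points$. The difficulty you flag\textemdash boundary degeneracy of the geometry induced by $\hreg$\textemdash is a separate concern and not what blocks the direct argument; the quotient construction is needed even for Euclidean regularization on a simplex, where $\hreg$ is perfectly smooth at the boundary.
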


\begin{remark}
The bounds depend implicitly on the regularizer through its strong convexity modulus $K$ and they indicate a trade-off between the degree of concentration of the process around the radius beyond which the noise dominates the drift, and the time required to hit this region.
\endenv
\end{remark}

\begin{remark}
The result of \cref{thm:strong-cont} holds for radius of concentration not sharper than $\bigoh(\sigma)$. This coincides with the special case of \cref{prop:quadratic}, indicating that our bound is tight in this regard.
\endenv
\end{remark}

Conceptually, \cref{thm:null-cont,thm:strong-cont} reflect the dichotomy between the bilinear and quadratic examples studied in detail in \cref{sec:gentle}.
Indeed, we see that:
\begin{enumerate}
\item
In \emph{null-monotone games}, the stochastic dynamics \eqref{eq:FTRL-stoch} exhibit a consistent drift \emph{away from equilibrium}, moving to greater distances in finite time, and requiring infinite time to return.
As a result, if the game has an interior equilibrium, $\Stateof{\time}$ becomes infinitely spread out in the long run, exhibiting no concentration in \emph{any} region of $\points$ other than, possibly, its boundary (if $\points$ is constrained).
\item
In \emph{strongly monotone games}, the dynamics drift \emph{toward equilibrium}, and they end up being concentrated around the game's \textpar{necessarily unique} equilibrium.
However, the players' learning trajectories continue to fluctuate at a distance which scales as $\bigoh(\diffmat_{\max})$ and, \acl{wp1}, they return arbitrarily close to where they started, infinitely often.
\end{enumerate}
These properties paint a sharp separation between null- and strongly monotone games, with uncertainty carrying drastically different consequences in each case;
for an illustration, see \cref{fig:dynamics}.

The proof of \cref{thm:null-cont,thm:strong-cont} is detailed in \cref{app:cont}.
From a technical standpoint, our analysis hinges on the use of the Fenchel coupling \eqref{eq:Fench-play} as a ``mean'' energy function for the dynamics.
In the null-monotone case, the hitting time estimates \eqref{eq:hit-null-cont} rely on an application of Dynkin's formula \cite[Chap.~7.4]{Oks13}, coupled with an eigenvalue estimation for the growth of $\fench$.
Then, by descending to a specific quotient of $\dpoints$ that compactifies the sublevel sets of $\fench$, we are able to leverage the fact that $\exwrt{\init[\point]}{\stoptime_{\radius}} = \infty$ for $\radius < \fench_{0}$ to show that the dynamics are \emph{not} positively recurrent---and hence, they \emph{do not} admit an invariant distribution.
The analysis for the strongly monotone case has the same starting point, but it then branches out almost immediately:
the hitting time estimate \eqref{eq:hit-strong-cont} is again obtained via Dynkin's stopping time formula, but positive recurrence can no longer be established in $\points$, because the infinitesimal generator of $\Stateof{\time}$ is not uniformly elliptic (that is, its eigenvalues are not bounded away from zero).
Instead, we work directly with the infinetisimal generator of the score process $\Scoreof{\time}$ whose generator \emph{is} uniformly elliptic after taking a specific quotient in $\dpoints$.
This allows us to deduce positive recurrence in $\dpoints$, which we then push forward to $\points$ via $\mirror$, and leverage the convergence of the occupation measures to the invariant distribution of the process to derive the concentration bound \eqref{eq:invdist-cont}.
We detail these steps in a series of technical lemmas in \cref{app:cont}.

\begin{remark}
If the game is neither null- nor strongly monotone, our analysis suggests that \eqref{eq:FTRL-stoch} would tend to ``wander around'' the null-monotone directions, and be carried along the strongly monotone directions toward the game's set of equilibria.
However, obtaining a precise version of such a result is quite involved, so we defer it to future work.
\endenv
\end{remark}

\section{Learning under uncertainty in discrete time}
\label{sec:disc}

We now turn to the discrete-time setting, which is of more direct algorithmic relevance.
Compared to \cref{sec:cont}, the analysis here is considerably more involved due to the lack of closed-form solutions and the limited applicability of diffusion-based methods.
Nevertheless, as we shall see later in this section, the structural insights gained from the continuous-time analysis remain highly valuable as they form the foundation of the tools and techniques developed here.

\subsection{Learning in discrete time}
\label{sec:FTRL-disc}

In discrete time, the most widely used implementation of the \ac{FTRL} template unfolds for $\run=\running$ as
\begin{equation}
\label{eq:FTRL}
\tag{FTRL}
\dstate_{\play,\run+1}
	= \dstate_{\play,\run}
		+ \step\signal_{\play,\run}
	\qquad
\state_{\play,\run+1}
	= \mirror_{\play}(\dstate_{\play,\run+1})
	\eqstop
\end{equation}
In addition to the notions already introduced and discussed in \cref{sec:FTRL-cont},
\begin{enumerate*}
[(\itshape i\hspace*{1pt}\upshape)]
\item
$\signal_{\play,\run}$ denotes here a stochastic estimate of the player's payoff gradient vector at $\state_{\play,\run}$;
and
\item
$\step > 0$ is a \define{step-size} parameter, interchangeably referred to as the \define{learning rate} of the process.
\end{enumerate*}
We discuss these two new elements below.

\para{The feedback process}
In terms of feedback, we assume that, at every round $\run=\running$, each player $\play\in\players$ receives stochastic gradient feedback of the form
\begin{equation}
\label{eq:signal-abstract}
\signal_{\play,\run}
	= \SFO_{\play}(\curr;\curr[\seed])
	\quad
	\text{or, aggregating over all players}
	\quad
\curr[\signal]
	= \SFO(\curr;\curr[\seed])
\end{equation}
where $\curr[\signal] = (\signal_{\play,\run})_{\play\in\players}$ and $\SFO(\point;\seed) = (\SFO_{\play}(\point;\seed))_{\play\in\players}$ is a \acli{SFO} for $\payfield(\point)$, \viz
\begin{equation}
\label{eq:SFO}
\tag{SFO}
\SFO(\point;\seed)
	= \payfield(\point)
		+ \error(\point;\seed)
	\eqstop
\end{equation}
In the above, $\curr[\seed]$, $\run=\running$, is an \acs{iid} sequence of random seeds drawn from some complete probability space $\seeds$, and $\error(\point;\seed)$ is a random $\dpoints$-valued vector satisfying the standard assumptions
\begin{equation}
\label{eq:errorstats}
\exwrt{\seed}{\error(\point;\seed)}
	= 0
	\qquad
	\text{and}
	\qquad
\exwrt{\seed}{\dnorm{\error(\point;\seed)}^{2}}
	\leq \sdev^{2}
\end{equation}
for some $\sdev>0$.
In this way, letting $\curr[\filter]$, $\run=\running$, denote the history of the process up to time $\run$, and writing $\curr[\noise] \defeq \error(\curr;\curr[\seed])$ for the noise in the players' gradient feedback at time $\run$, we get
\begin{equation}
\label{eq:signal}
\curr[\signal]
	= \payfield(\curr)
		+ \curr[\noise]
	\quad
	\text{with}
	\quad
\exof{\curr[\noise] \given \curr[\filter]}
	= 0
	\;
	\text{and}
	\;
\exof{\dnorm{\curr[\noise]}^{2} \given \curr[\filter]}
	\leq \variance.
\end{equation}
Following standard practice in the field\textemdash see \eg \cite{YBVE21,VGCX24} and references therein\textemdash we further assume that the probability distribution $\errdist_{\point}$ of $\error(\point)$ decomposes as $\errdist_{\point} = \errdist_{\point}^{c} + \errdist_{\point}^{\perp}$ where:
\begin{enumerate*}
[\upshape(\itshape a\upshape)]
\item
$\errdist_{\point}^{\perp}$ is singular relative to the Lebesgue measure $\leb_{\dpoints}$ on $\dpoints$;
\item
$\errdist_{\point}^{c}$ is absolutely continuous relative to $\leb_{\dpoints}$;
and
\item
the density $\dens_{\point}(\dpoint)$ of $\errdist_{\point}^{c}$ is jointly continuous in $\point$ and $\dpoint$,
and it satisfies $\inf_{\point\in\cpt} \dens_{\point}(\dpoint) > 0$ for every compact set $\cpt\subseteq\points$ and all $\dpoint\in\dpoints$.
\end{enumerate*}
This last assumption is relatively mild and ensures that the noise retains a non-degenerate, smooth component across $\points$, much like the assumption $\sdev_{\min} > 0$ for the diffusion matrix of \eqref{eq:FTRL-stoch} in \cref{sec:cont}.%
\footnote{This condition is trivially satisfied by most continuous error distributions in practice, and it can always be enforced by injecting a small uniform Gaussian noise component into the process,  a technique which is widely used in both optimization and reinforcement learning to promote sufficient exploration and avoid degeneracy issues and saddle-points \cite{WT11,Deep16,GHJY15,Ben99}.
In such cases, the density of the absolutely continuous component is strictly positive everywhere and independent of $\point\in\strats$, so the uniform lower bound condition holds trivially.}

\para{The algorithm's learning rate}

The second feature which sets the discrete-time framework apart is the method's learning rate $\step$.
Here and throughout, we consider a \emph{constant} learning rate schedule;
this should be contrasted to the stochastic approximation literature \cite{Ben99,KC78,Bor08,MZ19,MHC24}, where \eqref{eq:FTRL} is run with a \emph{vanishing} step-size $\curr[\step]\to0$, typically satisfying some form of the \acl{RM} summability conditions $\sum_{\run} \curr[\step] = \infty$, $\sum_{\run} \curr[\step]^{2} < \infty$.

In many cases, the use of a vanishing step-size enables convergence of the algorithm because it dampens the impact of the noise over time \cite{MZ19};
at the same time however, in many applied settings, algorithms are implemented with a constant\textemdash or, at the very least, non-vanishing\textemdash step-size.
This choice is largely driven by practical considerations:
constant step-size schedules are easier to calibrate and maintain, particularly in large-scale systems where adaptivity and simplicity are critical.
Moreover, vanishing step-size schedules often exhibit prolonged transient phases and converge slowly toward equilibrium neighborhoods;
by contrast, constant step-size methods tend to reach near-stationary regions much faster, even within 0.1\% accuracy or lower \cite{DDB17}.
This behavior underlies their widespread use in modern machine learning pipelines, where learning rates are kept effectively constant throughout training, even for models trained over billions of samples and/or hundreds of billions of tokens \cite{Deepseek}.

\subsection{Analysis and results}
\label{sec:results-disc}

We now have the necessary machinery in place to present our results for \eqref{eq:FTRL}.
Before doing so, we should only stress that the discrete-time analysis is, by necessity, more qualitative than the more explicit, continuous-time results presented in \cref{sec:cont}.
This gap is difficult to avoid:
in continuous time, the rules of stochastic calculus comprise a very sharp set of tools with which to obtain closed-form estimates for the processes involved;
on the other hand, in discrete time, even the most basic tools of stochastic analysis\textemdash like Dynkin's formula\textemdash are dulled down because of measurability and subsampling issues.

As before, we split our focus between null- and strongly monotone games.

\para{The null-monotone regime}

A key take-away from the analysis of \cref{sec:cont} is that, in null-monotone games, uncertainty causes the dynamics of regularized learning to spread out, diverging to infinity on average, without concentrating at any region of $\points$ other than its boundary.
Our first result below shows that a version of this tenet continues to hold in discrete time:

\begin{restatable}[Null-monotone games]{theorem}{NullDisc}
\label{thm:null-disc}
Suppose that \eqref{eq:FTRL} is run in a null-monotone game $\game$, and let $\eq$ be an equilibrium of $\game$.
Suppose further that $\hconj$ is strongly convex, and let $\curr[\fench] = \fench(\eq,\curr[\dstate])$, where $\fench$ is the induced Fenchel coupling \eqref{eq:Fench}.
Then $\lim_{\run\to\infty} \exof{\curr[\fench]} = \infty$.
\end{restatable}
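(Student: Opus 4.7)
\emph{Proof plan.} My strategy is to establish a strictly positive lower bound on the expected one-step increment of $\curr[\fench]$, from which $\exof{\curr[\fench]}\to\infty$ follows by telescoping. Two ingredients will drive the argument: (i) the assumed strong convexity of $\hconj$, which converts the oracle's variance into a genuinely positive contribution to the increment; and (ii) the null-monotone structure of $\game$, which collapses the deterministic drift to a single boundary term evaluated at $\eq$.

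Concretely, I would start from the dual update $\next[\dstate] = \curr[\dstate] + \step\curr[\signal]$ and apply the $\conjstr$-strong convexity of $\hconj$ together with the Legendre identity $\nabla\hconj = \mirror$ to get
\begin{equation}
\hconj(\next[\dstate]) \geq \hconj(\curr[\dstate]) + \step\braket{\curr[\signal]}{\curr} + \tfrac{\conjstr\step^{2}}{2}\dnorm{\curr[\signal]}^{2}.
\end{equation}
Subtracting $\step\braket{\curr[\signal]}{\eq}$ from both sides and recalling \eqref{eq:Fench-play}, this rearranges to
\begin{equation}
\next[\fench] - \curr[\fench] \geq \step\braket{\curr[\signal]}{\curr - \eq} + \tfrac{\conjstr\step^{2}}{2}\dnorm{\curr[\signal]}^{2}.
\end{equation}
Taking conditional expectations with respect to $\curr[\filter]$ and using \eqref{eq:signal}, the first-order term becomes $\step\braket{\payfield(\curr)}{\curr - \eq}$; by null-monotonicity this equals $\step\braket{\payfield(\eq)}{\curr - \eq}$, which is non-positive by the Nash condition at $\eq$. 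For the quadratic term, the bias--variance decomposition gives $\exof{\dnorm{\curr[\signal]}^{2}\given\curr[\filter]} = \dnorm{\payfield(\curr)}^{2} + \exof{\dnorm{\curr[\noise]}^{2}\given\curr[\filter]}$, and the non-degeneracy of the absolutely continuous component of $\errdist_{\curr}$ provides a uniform lower bound $\exof{\dnorm{\curr[\noise]}^{2}\given\curr[\filter]} \geq \sdev_{\min}^{2} > 0$.

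Assembled together, this yields $\exof{\next[\fench] - \curr[\fench]\given\curr[\filter]} \geq \step\braket{\payfield(\eq)}{\curr - \eq} + \tfrac{\conjstr\step^{2}\sdev_{\min}^{2}}{2}$. In the interior case $\payfield(\eq) = 0$, the linear term vanishes and a telescoping-plus-tower-property argument yields $\exof{\curr[\fench]} \geq \init[\fench] + \run\cdot\tfrac{\conjstr\step^{2}\sdev_{\min}^{2}}{2}\to\infty$, as claimed. The main obstacle is the non-interior regime: the boundary term $\step\braket{\payfield(\eq)}{\curr - \eq}$ is of order $\step$, whereas the favourable noise contribution is only of order $\step^{2}$, so for small step-sizes the naive bound does not suffice. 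Closing this gap requires retaining the $\dnorm{\payfield(\curr)}^{2}$ contribution that I discarded above; indeed, even the noise-free version of \eqref{eq:FTRL} is already divergent in null-monotone games (as witnessed by the bilinear example of \cref{sec:gentle}, where a direct computation gives $\norm{\curr[\dstate]}^{2} = (1+\step^{2})^{\run}\norm{\init[\dstate]}^{2}$ in the noise-free limit), so this ``geometric'' contribution is precisely what one should leverage to dominate the boundary term and recover the positive drift in full generality.
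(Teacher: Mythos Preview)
Your approach is essentially the paper's: lower-bound the one-step increment of $\curr[\fench]$ via the strong convexity of $\hconj$, kill the drift term via null-monotonicity, and telescope. Your ``interior case'' is exactly the argument the paper gives.

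Where you diverge is in worrying about a non-interior equilibrium. This concern is misplaced: the hypothesis that $\hconj$ is strongly convex already forces $\points = \ambient$. Indeed, if $\hconj$ is strongly convex on all of $\dpoints$, it is supercoercive, so its conjugate $(\hconj)^{*} = \hreg$ has full effective domain; since $\dom\hreg = \points$ by construction, $\points = \ambient$. In the unconstrained setting every Nash equilibrium satisfies $\payfield(\eq) = 0$, so the linear term $\braket{\payfield(\eq)}{\curr - \eq}$ vanishes identically, not merely up to sign. This is precisely what the paper uses when it writes $\braket{\payfield(\curr)}{\curr - \eq} = 0$ (combining null-monotonicity with $\payfield(\eq) = 0$), though it does not spell out the $\hconj$-strongly-convex $\Rightarrow$ unconstrained implication. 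Your attempt to rescue the boundary case via the discarded $\dnorm{\payfield(\curr)}^{2}$ term is therefore unnecessary---and, as you noticed, it would not work anyway without further structure.

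A minor stylistic point: you invoke strong convexity of $\hconj$ directly as a first-order inequality, whereas the paper uses a second-order Taylor expansion with Lagrange remainder of $\hconj$; these give the same bound, but your version is slightly cleaner since it does not require $\mirror$ to be $C^{1}$.
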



\begin{figure*}[tbp]
\centering
\footnotesize
\includegraphics[height=45ex]{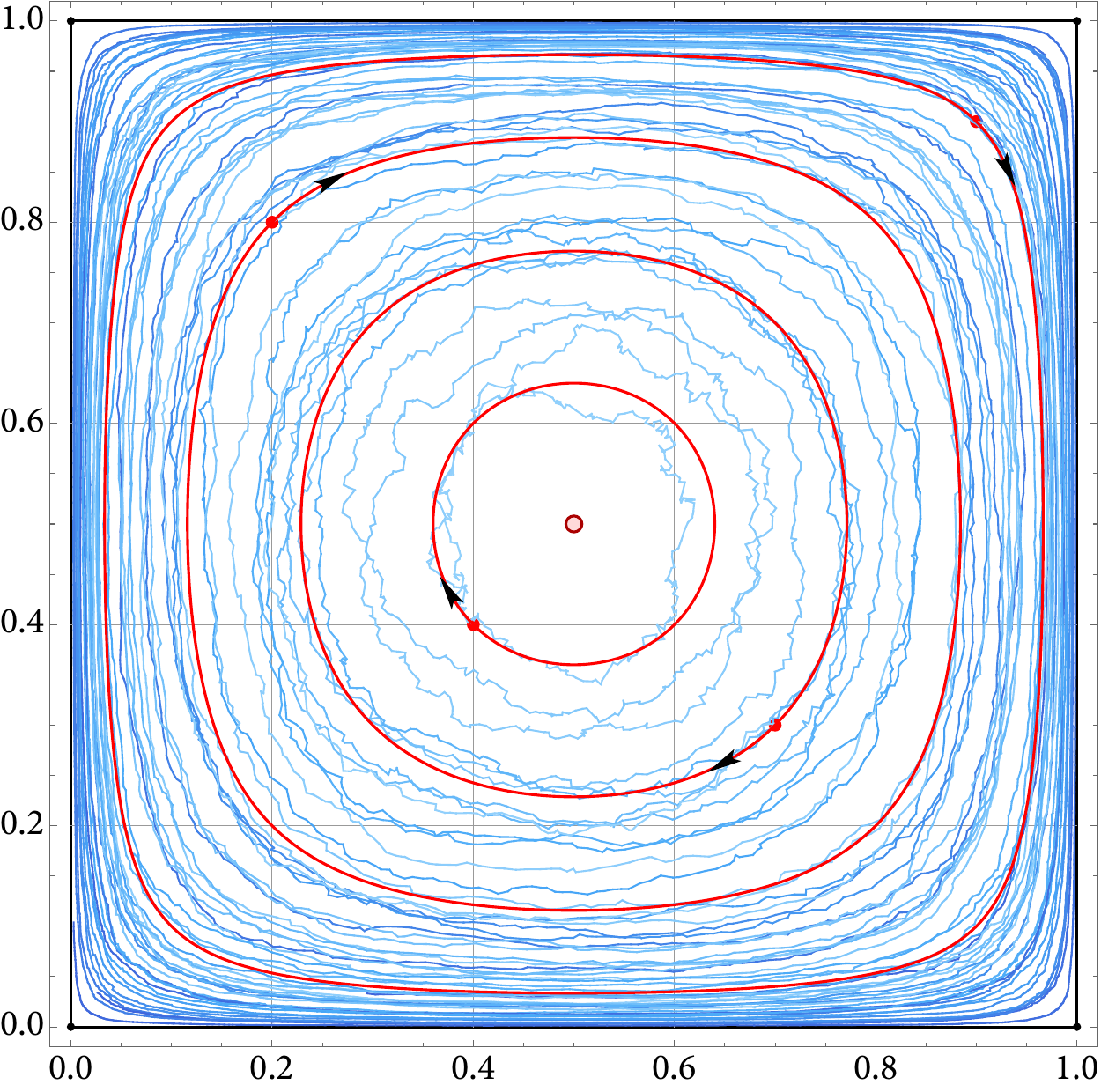}
\;\;
\includegraphics[height=45ex]{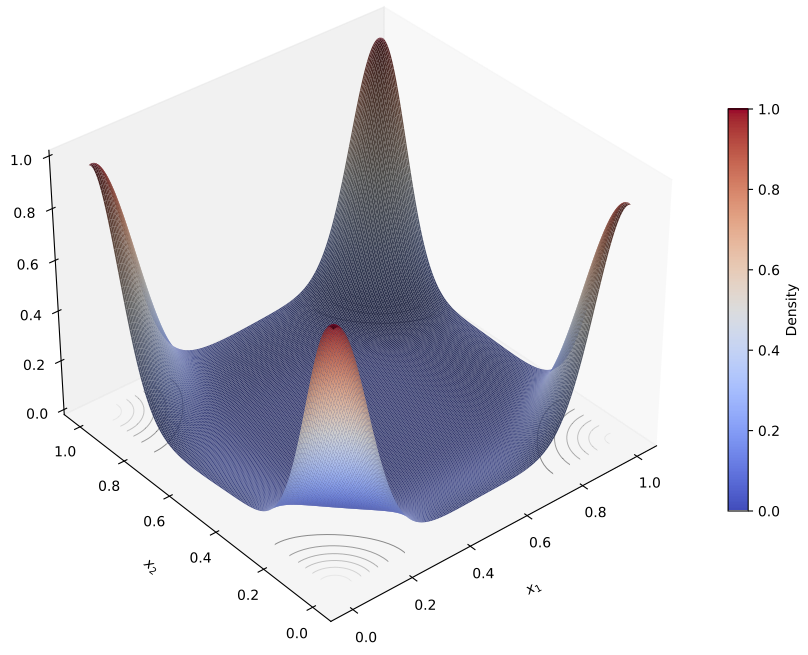}
\\[2ex]
\includegraphics[height=45ex]{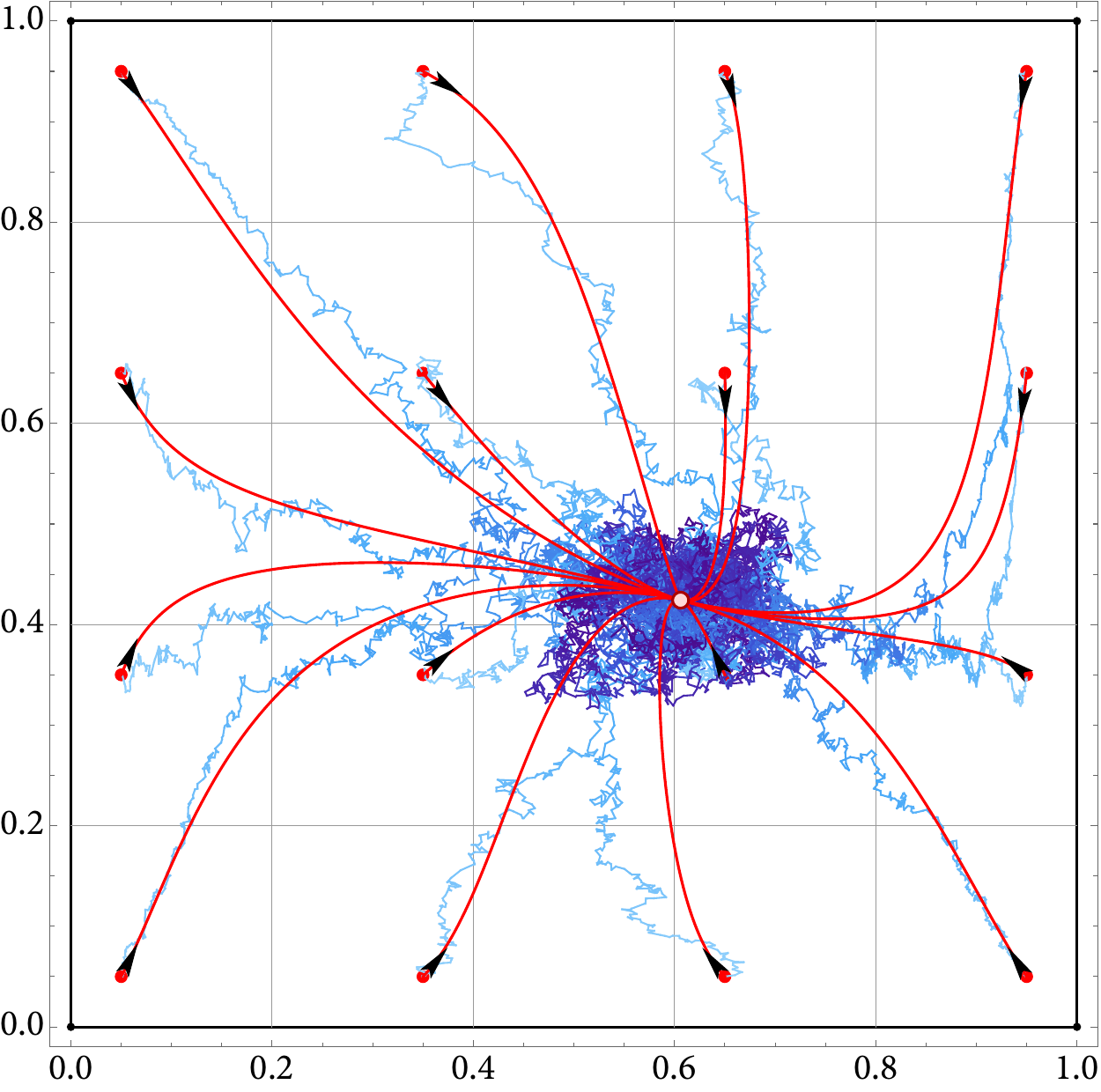}
\;\;
\includegraphics[height=45ex]{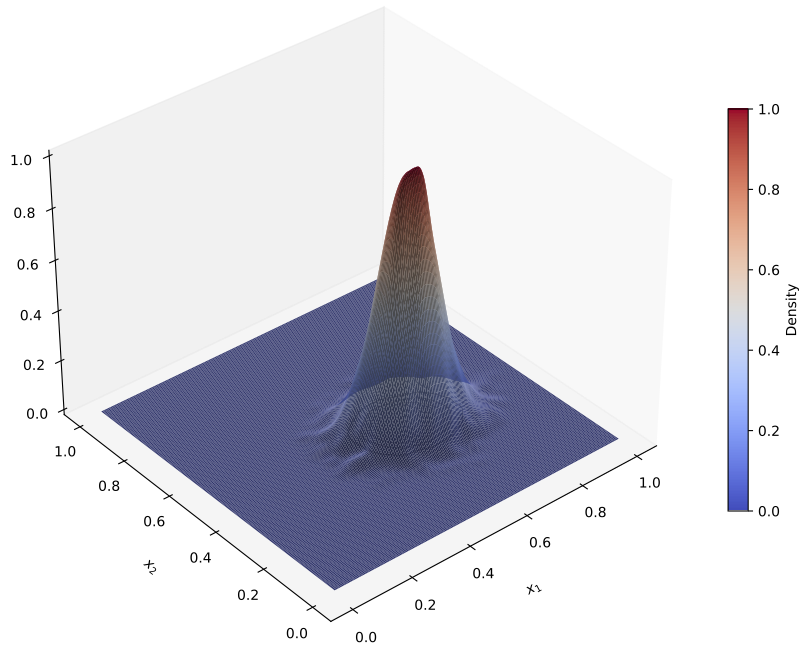}%
\caption{Trajectories and statistics of play under \eqref{eq:FTRL} with entropic regularization in two min-max games over $\points = [0,1]^{2}$, a bilinear and a quadratic one (top \vs bottom half respectively).
Deterministic orbits are plotted in red and stochastic trajectories in shades of blue, with darker hues indicating later points in time;
the density plots depict the resulting visitation frequency in $\points$.
In tune with \cref{thm:null-disc,thm:strong-disc}, we see that learning in null-monotone games drifts toward the \emph{extremes} of $\points$;
by contrast, in strongly monotone games, learning orbits drift toward equilibrium, but continue to fluctuate around it.
More details are provided in \cref{app:numerics}.}
\label{fig:dynamics}
\end{figure*}


This result shows that \eqref{eq:FTRL} drifts away to infinity on average\textemdash though, of course, as in the continuous-time case, this does not mean that this occurs \acl{wp1}.
What is missing from \cref{thm:null-disc} relative to \cref{thm:null-cont} is a bound on the mean time required for $\curr[\fench]$ to increase or decrease by $\thres$.
In the absence of a \emph{consistent} drift component, our continuous-time estimates were only made possible through the use of stochastic calculus.
In discrete time however, $\curr$ evolves in \emph{discrete}, driftless jumps, introducing overshoots and upcrossings that render this question significantly harder.
We conjecture that similar bounds do hold in discrete time, but we leave this open as a conjecture.
[We only note here for completeness that a similar result holds for any decaying step-size sequence $\step_\run$ with $\sum_{\run} \step^2_\run = \infty$.]

\para{The strongly monotone regime}

We now turn to the long-run behavior of \eqref{eq:FTRL} in strongly monotone games.
Based in no small part on the continuous-time analysis of the previous section, our goal will be to understand the distributional properties of the dynamics, with a particular focus on
\begin{enumerate*}
[\upshape(\itshape a\upshape)]
\item
the existence and uniqueness of an invariant measure;
and
\item
the extent to which this measure is concentrated around the game's equilibrium\textemdash which, in turn, quantifies the long-run proximity of the iterates of \eqref{eq:FTRL} to equilibrium.
\end{enumerate*}
With all this in mind, our results can be stated as follows:

\begin{restatable}[Strongly monotone games]{theorem}{StrongDisc}
\label{thm:strong-disc}
Suppose that \eqref{eq:FTRL} is run in an $\strong$-strongly monotone game $\game$,
and consider the hitting time
\begin{equation}
\stoptime_{\radius}
	\defeq \inf\setdef{\time>0}{\Stateof{\time} \in \ball_{\radius}(\eq)}
\end{equation}
where $\ball_{\radius}(\eq) = \setdef{\point}{\norm{\point - \eq} \leq \radius}$ is a ball of radius $\radius$ centered on the \textpar{necessarily unique} equilibrium $\eq$ of $\game$.
Then, for all $\radius > \radius_{\sdev} \defeq \sqrt{\step(\sdev^{2} + \smooth^{2})/(\strong\hstr)}$, we have
\begin{equation}
\label{eq:stop-strong-disc}
\exof{\stoptime_{\radius}}
	\leq \frac{1}{\strong\step (\radius^{2} - \radius_{\sdev}^{2})}
		\times
		\begin{cases*}
			\init[\fench]
				&\quad
				\text{if $\init \notin \ball_{\radius}(\eq)$},
			\\
			\init[\fench] + \strong\step\radius^{2}
				&\quad
				\text{if $\init \in \ball_{\radius}(\eq)$},
		\end{cases*}
\end{equation}
where $\init[\fench] = \fench(\eq,\init[\dstate])$.
If, in addition, $\eq$ is interior, $\curr$ admits a unique invariant distribution to which it converges in total variation, and we have
\begin{equation}
\label{eq:invdist-disc}
\lim_{\run\to\infty}
	\frac{1}{\run} \exof*{\sum_{\runalt=\start}^{\run}
		\oneof{\curr \in \ball_{\radius}(\eq)}}
	\geq 1 - \radius_{\sdev}^{2} \big/ \radius^{2}
\end{equation}
for all $\radius > \radius_{\sdev}$ such that $\ball_{\radius}(\eq) \subseteq \relint\points$.
\end{restatable}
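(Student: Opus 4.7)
\emph{Plan.} The approach is to mirror the three-stage continuous-time analysis of \cref{thm:strong-cont}, replacing stochastic calculus by discrete drift arguments and invoking the Meyn--Tweedie framework for geometric ergodicity. The engine is the standard one-step energy inequality for the Fenchel coupling: for any $\base \in \strats$, $\dpoint \in \dpoints$ and $v \in \dpoints$,
\begin{equation*}
\fench(\base, \dpoint + \step v)
    \leq \fench(\base, \dpoint)
        + \step \braket{v}{\mirror(\dpoint) - \base}
        + \frac{\step^{2}}{2\hstr} \dnorm{v}^{2}.
\end{equation*}
First I would apply this with $\base = \eq$, $\dpoint = \curr[\dstate]$, $v = \curr[\signal]$, take conditional expectation given $\curr[\filter]$, and combine the noise bounds \eqref{eq:errorstats} with $\strong$-strong monotonicity (which yields $\braket{\payfield(\curr)}{\curr - \eq} \leq -\strong \norm{\curr - \eq}^{2}$ via the variational characterization of $\eq$). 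After bounding $\dnorm{\payfield(\curr)}^{2}$ by $\smooth^{2}$, this produces a master drift inequality of schematic form
\begin{equation*}
\exof{\next[\fench] \mid \curr[\filter]}
    \leq \curr[\fench]
        - \strong \step \, \norm{\curr - \eq}^{2}
        + \strong \step \, \radius_{\sdev}^{2}.
\end{equation*}

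From here the hitting-time estimate \eqref{eq:stop-strong-disc} should be almost immediate: for $\runalt < \stoptime_{\radius}$ one has $\norm{\iter - \eq}^{2} > \radius^{2}$, so the drift simplifies to $\exof{\afteriter[\fench] \mid \iter[\filter]} \leq \iter[\fench] - \strong \step (\radius^{2} - \radius_{\sdev}^{2})$, which makes $\iter[\fench] + \strong \step (\radius^{2} - \radius_{\sdev}^{2}) \runalt$ a supermartingale up to $\stoptime_{\radius}$. Applying optional stopping and using $\fench \geq 0$ will deliver the first case of \eqref{eq:stop-strong-disc}; the case $\init \in \ball_{\radius}(\eq)$ will be handled by a one-step look-ahead that absorbs a worst-case excursion of size $\strong \step \radius^{2}$.

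For the invariant measure and concentration statements, the plan is to invoke the Meyn--Tweedie drift--minorization framework on the score chain $\curr[\dstate]$. The master drift will provide the Foster--Lyapunov condition with Lyapunov function $\fench$, while the decomposition $\errdist_{\point} = \errdist_{\point}^{c} + \errdist_{\point}^{\perp}$ and the uniform positivity $\inf_{\point \in \cpt} \dens_{\point}(\dpoint) > 0$ will supply a minorization condition on compact subsets of $\dpoints$ (the interior equilibrium assumption ensures that $\mirror$ is locally invertible near $\eq$ and that $\fench(\eq, \cdot)$ is coercive along the relevant directions). Together these will yield Harris positive recurrence and a unique invariant distribution $\pi$ for $\curr$, with total-variation convergence. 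Integrating the master drift against $\pi$ and exploiting stationarity collapses the $\fench$-terms and leaves $\exwrt{\pi}{\norm{\curr - \eq}^{2}} \leq \radius_{\sdev}^{2}$; Markov's inequality then gives $\pi(\ball_{\radius}(\eq)^{c}) \leq \radius_{\sdev}^{2}/\radius^{2}$, and the mean ergodic theorem applied to the bounded indicator $\oneof{\cdot \in \ball_{\radius}(\eq)}$ will promote this to the Ces\`{a}ro statement \eqref{eq:invdist-disc}.

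The hardest part will be the minorization condition, which is where the discrete-time analysis departs most sharply from the continuous-time template: in continuous time, uniform ellipticity of the diffusion generator supplies this for free (\cf the proof of \cref{thm:strong-cont}), whereas here the minorizing kernel has to be built explicitly from the mixed structure $\curr[\signal] = \payfield(\curr) + \curr[\noise]$ with state-dependent drift. The continuity and uniform positivity of $\dens_{\point}$ over compact sets are precisely what is engineered to enable this, but verifying that compact subsets of $\dpoints$ are petite in the Meyn--Tweedie sense\textemdash and controlling the behavior of the unbounded dual variable on their complement\textemdash will constitute the bulk of the technical effort.
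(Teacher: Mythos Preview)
Your overall strategy---master drift via the one-step Fenchel inequality, optional stopping for \eqref{eq:stop-strong-disc}, and a Meyn--Tweedie drift/minorization argument for the invariant measure---matches the paper's approach closely, and your treatment of Steps~1 and~4 is essentially what the paper does.

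There is, however, a real gap in your Harris recurrence plan. You propose to run Meyn--Tweedie on the score chain $\curr[\dstate]$ in $\dpoints$, with Lyapunov function $\fench(\eq,\cdot)$. The drift inequality you derive is negative only when $\norm{\mirror(\curr[\dstate]) - \eq} > \radius_{\sdev}$, so the ``small set'' candidate is $\mirror^{-1}(\ball_{\radius}(\eq))$. But this set is \emph{not} compact in $\dpoints$ whenever $\points$ fails to be full-dimensional in $\ambient$ (\eg simplex constraints): $\mirror$ is constant along the annihilator $\annof{\tanvecs}$ of the tangent hull of $\points$, so the fibers of $\mirror$---and hence the sublevel sets of $\fench(\eq,\cdot)$---contain an entire affine copy of $\annof{\tanvecs}$. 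Your remark that ``$\fench(\eq,\cdot)$ is coercive along the relevant directions'' is the right intuition but does not repair this: the Foster--Lyapunov condition and the minorization must hold on the \emph{same} state space, and on $\dpoints$ you cannot minorize over an unbounded set.

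The paper fixes this by passing to the quotient $\esspoints = \dpoints / \annof{\tanvecs}$ via the canonical surjection $\essmap$ and working with the restricted process $\curr[\Essvar] = \essmap(\curr[\dstate])$. On $\esspoints$ the set $\essmirror^{-1}(\ball_{\radius}(\eq))$ \emph{is} compact (since $\eq$ is interior and $\subd\hreg$ is upper hemicontinuous there), the pushed-forward noise density retains the uniform positivity needed for minorization, and Meyn--Tweedie then applies cleanly. This quotient construction is the missing ingredient in your plan; once you insert it, the rest of your outline goes through as written.
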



\begin{remark*}
Unlike the continuous-time setting of \cref{sec:cont}, we must treat the cases $\init \in \ball_{\radius}(\eq)$ and $\init \notin \ball_{\radius}(\eq)$ separately. 
This distinction arises only in discrete time, because the iterates may exhibit large jumps\textemdash so, returning to $\ball_\radius(\eq)$ is not guaranteed, even if the process is initialized within.
\endenv
\end{remark*}

We prove \cref{thm:strong-disc} in \cref{app:disc} following the strategy outlined below.
First, shadowing the continuous-time analysis of \cref{sec:cont}, we reduce the dynamics to a suitable quotient space of $\dpoints$, eliminating redundant directions and ensuring that the process evolves in a minimal, non-degenerate domain.
Building on this, we then show that the induced dynamics are Lebesgue-irreducible, \ie every measurable set with positive Lebesgue measure is reachable with positive probability under the transition kernel of the process.
Moreover, invoking \eqref{eq:stop-strong-disc}, we further deduce that $\probof{\stoptime_\radius < \infty} = 1$ for any initial condition, implying that $\ball_\radius(\eq)$ is visited infinitely often.
Finally, we also show that $\ball_\radius(\eq)$ satisfies a minorization condition, meaning that the transition kernel from any point in the ball dominates a fixed reference measure.
In turn, this implies that, upon returning to $\ball_\radius(\eq)$, the process has a nonzero chance of ``forgetting'' its past, allowing us to construct a regeneration structure via a coupling argument.
Then, leveraging the continuity of $\fench(\eq,\dpoint)$, we obtain a uniform bound on the expected return times $\exof{\stoptime_{\radius}}$ over any initialization in $\ball_\radius(\eq)$, which allows us to conclude that the process $\dstate_\run$ is positive Harris recurrent.
As a result, it can be shown that the iterates of \eqref{eq:FTRL} converge to a unique invariant measure, and we obtain quantitative control over their long-run concentration by means of our previous estimates.

\section{Concluding remarks}
\label{sec:discussion}

Our aim in this paper was to quantify the impact of noise and uncertainty on the dynamics of multi-agent regularized learning.
Our findings reveal a sharp separation between games that are \define{null-monotone} (like bilinear min-max games), and \define{strongly monotone} games (like Kelly auctions or Cournot competitions).
In the former case, the quasi-periodic profile of the deterministic dynamics is destroyed, and learning under uncertainty drifts away on average toward extreme points (or escapes to infinity);
in the latter, the sharp convergence guarantees of the deterministic dynamics are diluted by noise, and the resulting dynamics end up concentrated in a region around the game's equilibrium (which we estimate).
This paves the way for further explorations of the long-run statistics of regularized learning in games---especially pertaining to the invariant measure of the process---a topic which we find particularly promising for advancing our understanding of the field.

\appendix
\crefalias{section}{appendix}
\crefalias{subsection}{appendix}

\numberwithin{equation}{section}	
\numberwithin{lemma}{section}	
\numberwithin{proposition}{section}	
\numberwithin{theorem}{section}	
\numberwithin{corollary}{section}	

\section{Examples}
\label{app:examples}

In this appendix, we present several examples of games satisfying the standing assumptions we outlined in \cref{sec:prelims}.
Overall, these assumptions are quite standard in the study of online learning and games with continuous action spaces, and most of the positive results in the literature hinge on precisely these assumptions or close variants thereof.

To provide some context, the monotonicity assumption (\cf \cref{def:monotone}) provides an amenable ``convex structure'', which is essential for establishing the existence and characterization of equilibria via first-order variational condtions.
Without a structural characteriztion of this type, even defining a meaningful solution concept becomes unclear, and global convergence cannot be expected\textemdash at least in general.
In a sense, these assumptions parallel the convex/non-convex separation in optimization\textemdash but with the added challenge that, in non-convex games, equilibria may fail to exist altogether, unlike minimizers (either local or global) in non-convex minimization problems.

Our regularity assumptions\textemdash closed action sets, Lipschitz smoothness, etc.\textemdash are largely technical, standard in practice and, as such, nearly universal in the literature.
They could be relaxed, for instance, by assuming local or relative smoothness, Hölder continuity, or something of the sort\textemdash though the resulting analysis would be considerably more involved.
Relaxing monotonicity, however, is considerably trickier:
some of our results would go through as long as the game's equilibrium admits a global variational characterization, \eg in the spirit of variational stability or a Minty-type condition, \cf \cite{MZ19,MLZF+19,ZMBB+20} and references therein.
If, however, the game admits distinct components of equilibria, all bets are off:
in that case, \ac{FTRL} could transit in perpetuity between the game's different equilibrium components, and characterizing the mean sojourn and transition times of the process would only be possible in very special cases.

All in all, the set of assumptions that we consider represents a certain ``sweet spot'' between theoretical tractability and practical relevance, which explains their prevalence in the literature.
The examples below illustrate the range of settings where these assumptions arise naturally.

\setcounter{example}{0}

\smallskip
\begin{example}
[Zero-sum bimatrix games]

A bimatrix game consists of two players, each with a finite set of actions $\pures_{\play}$, $\play=1,2$, and a min-max objective function $\minmax\from\pures_{1}\times\pures_{2} \to \R$, typically encoded in a matrix $\mat\in\R^{\pures_{1}\times\pures_{2}}$ with $\mat_{\minpure\maxpure} = \minmax(\minpure,\maxpure)$ for all $\minpure\in\pures_{1}$, $\maxpure\in\pures_{2}$.
The first player is cast in the role of the minimizer and the second player in that of the maximizer, so their corresponding payoff functions are defined as $\pay_{1} = -\minmax = -\pay_{2}$.

In the mixed extension of the game, each player can mix their actions by selecting a probability distribution---a \define{mixed strategy}---over $\pures_{\play}$, that is, an element $\strat_{\play}$ of the probability simplex $\strats_{\play} \equiv \simplex(\pures_{\play}) = \setdef{\strat_{\play}\in\R_{+}^{\pures_{\play}}}{\onenorm{\strat_{\play}} = 1}$.
Accordingly, in matrix notation, the players' corresponding mixed payoffs are given by
\begin{equation}
\label{eq:minmax}
\pay_{1}(\minvar,\maxvar)
	= -\minvar^{\top} \mat \maxvar
	= -\pay_{2}(\minvar,\maxvar)
\end{equation}
so their individual gradient fields can be expressed as
\begin{equation}
\label{eq:payfield-minmax}
\payfield_{1}(\minvar,\maxvar)
	= -\mat\maxvar
	\qquad
	\text{and}
	\qquad
\payfield_{2}(\minvar,\maxvar)
	= \mat^{\top}\minvar
\end{equation}
for all $\minvar\in\minvars$ and all $\maxvar\in\maxvars$.

By definition, a mixed-strategy \acl{NE} of a bimatrix zero-sum game satisfies
\begin{equation}
\label{eq:saddle}
\minmax(\mineq,\maxvar)
	\leq \minmax(\mineq,\maxeq)
	\leq \minmax(\minvar,\maxeq)
	\quad
	\text{for all $\minvar\in\minvars$, $\maxvar\in\maxvars$}.
\end{equation}
If, in addition, $\mineq,\maxeq$ both have full support---that is, $\mineq\in\relint\minvars$ and $\maxeq\in\relint\maxvars$---we also have the ``equalizing payoffs'' condition
\begin{equation}
\label{eq:eqpay}
\minmax(\mineq,\maxvar)
	= \minmax(\minvar,\maxeq)
	\quad
	\text{for all $\minvar\in\minvars$, $\maxvar\in\maxvars$}
\end{equation}
which means that \eqref{eq:saddle} binds identically.
In this case, we readily get
\begin{flalign}
\braket{\payfield_{1}(\minvar,\maxvar)}{\minvar - \mineq}
	&+ \braket{\payfield_{2}(\minvar,\maxvar)}{\maxvar - \maxeq}
	\notag\\
	&= \pay_{1}(\minvar,\maxvar) - \pay_{1}(\mineq,\maxvar)
		+ \pay_{2}(\minvar,\maxvar) - \pay_{2}(\minvar,\maxeq)
	= 0
\end{flalign}
for all $\minvar\in\minvars$, $\maxvar\in\maxvars$, \ie the game is null-monotone in the sense of \cref{def:monotone}.
\hfill
\endenv
\end{example}

\smallskip
\begin{example}
[Cournot competition]
\label{ex:Cournot}

In the standard Cournot competition model, there is a finite set of \define{firms}, indexed by $\play\in\players = \{1,\dotsc,\nPlayers\}$, each providing the market with a quantity $\point_{\play} \in [0,B_{\play}]$ of some good (or service) up to the firm's production budget $B_{\play}$.
Following the law of supply and demand, this good is priced following the simple linear model $P(\point) = a - b \sum_{\play}\point_{\play}$, \ie as a linearly decreasing function of the total supply.
Accordingly, in this model, the utility of firm $\play$ is given by
\begin{equation}
\label{eq:pay-Cournot}
\pay_{\play}(\point)
	= \point_{\play} P(\point) - \cost_{\play} \point_{\play}
	= \bracks*{a - b\insum_{\playalt\in\players} \point_{\playalt} - \cost_{\play}} \, \point_{\play},
\end{equation}
where $\cost_{\play}$ represents the marginal production cost of firm $\play$.

By a straightforward derivation, the players' individual payoff gradients are given by
\begin{equation}
\label{eq:payfield-Cournot}
\payfield_{\play}(\point)
	= \frac{\pd\pay_{\play}}{\pd\point_{\play}}
	= \bracks*{a - b\insum_{\playalt\in\players} \point_{\playalt} - \cost_{\play}}
		- b\point_{\play}
\end{equation}
and hence, the Hessian matrix of the game will be
\begin{equation}
\label{eq:Hess-Cournot}
\hmat_{\play\playalt}(\point)
	\defeq \frac{1}{2} \frac{\pd^{2}\pay_{\play}}{\pd\point_{\playalt}\pd\point_{\play}}
		+ \frac{1}{2} \frac{\pd^{2}\pay_{\playalt}}{\pd\point_{\play}\pd\point_{\playalt}}
	= -b
		- b\delta_{\play\playalt}
\end{equation}
where $\delta_{\play\playalt}$ is the standard Kronecker delta.
Since $\hmat$ is circulant, standard linear algebra considerations show that its eigenvalues are $-b$ and $-(\nPlayers+1)b$ (with multiplicity $\nPlayers-1$ and $1$ respectively), so it follows by a well-known second-order criterion that the Cournot competition game is $b$-strongly monotone \cite{Ros65,MZ19}.
\hfill
\endenv
\end{example}

\smallskip
\begin{example}
[Signal covariance optimization]
\label{ex:MIMO}
Consider a vector Gaussian channel of the form
\begin{equation}
\label{eq:MIMO}
\by
	= \sum_{\play\in\players} \bH_{\play} \bx_{\play}
		+ \bz
\end{equation}
where
$\bx_{\play}\in\C^{\nIn_{\play}}$ is the (complex-valued) signal transmitted by the $\play$-th user of the channel,
$\bH\in\C^{\nOut\times\nIn_{\play}}$ is the transfer matrix of the channel,
$\bz\in\C^{\nOut}$ is the noise in the channel (assumed zero-mean Gaussian and, without loss of generality, with unit covariance),
and
$\by\in\C^{\nOut}$ is the aggregate signal output of the channel \cite{YRBC04}.
In this context, each user $\play\in\players$ controls the covariance matrix $\bX_{\play} = \exof{\bx_{\play}\bx_{\play}^{\dag}}$ subject to the power constraint $\tr\parens{\bX_{\play}} = \exof{\norm{\bx_{\play}}^{2}} \leq P_{\play}$, where $P_{\play}$ denotes the user's maximum transmit power.
In this case, by the celebrated Shannon\textendash Telatar formula \cite{Tel99}, and assuming a single-user decoding scheme at the receiver, the achievable rate of the $\play$-th user is
\begin{equation}
\label{eq:pay-MIMO}
\pay_{\play}(\bX_{\play};\bX_{-\play})
	= \log\det\parens*{\bI + \insum_{\playalt} \bH_{\playalt} \bX_{\playalt} \bH_{\playalt}^{\dag}}
		- \log\det\parens*{\bI + \insum_{\playalt\neq\play} \bH_{\playalt} \bX_{\playalt} \bH_{\playalt}^{\dag}}
	\eqstop
\end{equation}
Putting everything together, this defines a continuous game with players $\play\in\players = \{1,\dotsc,\nPlayers\}$, spectrahedral action sets of the form
\begin{equation}
\label{eq:points-MIMO}
\boldsymbol{\mathcal{Q}}_{\play}
	= \setdef{\bX_{\play}\in\C^{\nIn_{\play}\times\nIn_{\play}}}{\bX_{\play}\mgeq 0 \; \text{and} \; \tr\bX_{\play} \leq P_{\play}}
\end{equation}
for all $\play\in\players$,
and payoff functions given by \eqref{eq:pay-MIMO}.
By a calculation of \citet{BLD09}, it is known that this game is concave and monotone---and, in fact, \emph{strongly monotone} if the linear mapping $(\bX_{1},\dotsc,\bX_{\nPlayers}) \mapsto \sum_{\play} \bH_{\play}\bX_{\play}\bH_{\play}^{\dag}$ is not rank-deficient.
\hfill
\endenv
\end{example}

Examples that are closer to signal processing and data science include distributed metric learning, multimedia classification, etc.
For a range of applications along these lines, we refer the reader to \cite{SFPP10,MBNS17} and references therein.

\section{Mirror maps and regularization}
\label{app:mirror}

In this appendix, we collect some background material, properties and examples regarding the regularization machinery underlying \eqref{eq:FTRL} and \eqref{eq:FTRL-stoch}.
To lighten notation---especially with respect to the player index $\play\in\players$---we base everything in this appendix on an abstract closed convex subset of some $\vdim$-dimensional vector space, which could either be $\points_{\play}$ or $\points$, depending on the context.

The results presented below (or a version thereof) are known in the literature;
nevertheless, we provide detailed proofs for completeness and to resolve any conflicts or ambiguities with different conventions in the literature.

\subsection{Preliminaries}

Let $\ambient$ be a $\vdim$-dimensional normed space,
let $\dpoints \defeq \dspace$ denote the (algebraic) dual of $\ambient$,
and
let $\braket{\dpoint}{\point}$ denote the canonical bilinear pairing between $\point\in\ambient$ and $\dpoint\in\dspace$.
If $\norm{\cdot}$ is a norm on $\ambient$ will also write
\begin{equation}
\label{eq:dnorm}
\dnorm{\dpoint} = \max\setdef{\braket{\dpoint}{\point}}{\norm{\point} \leq 1}
\end{equation}
for the induced dual norm on $\dpoints$, so $\abs{\braket{\dpoint}{\point}} \leq \norm{\point} \dnorm{\dpoint}$ for all $\point\in\ambient$ and all $\dpoint\in\dpoints$ by construction.

Given a closed convex subset $\cvx$ of $\ambient$, we also define:
\begin{enumerate}
\item
The \emph{tangent cone} to $\cvx$ at $\base\in\cvx$ as
\begin{flalign}
\label{eq:tcone}
\tcone(\base)
	&= \cl\setdef{\tanvec\in\ambient}{\base+t\tanvec\in\cvx \; \text{for some $t>0$}}
\intertext{%
\ie as the closure of the set of  rays emanating from $\base$ and meeting $\cvx$ in at least one other point.
\item
The \emph{dual cone} to $\cvx$ at $\base\in\cvx$ as}
\label{eq:dcone}
\dcone(\base)
	&= \setdef{\dvec\in\dpoints}{\braket{\dvec}{\tanvec} \geq 0 \; \text{for all $\tanvec\in\tcone(\base)$}}
\intertext{%
\item
The \emph{polar cone} to $\cvx$ at $\base\in\cvx$ as}
\label{eq:pcone}
\pcone(\base)
	&= \setdef{\dvec\in\dpoints}{\braket{\dvec}{\tanvec} \leq 0 \; \text{for all $\tanvec\in\tcone(\base)$}}
\end{flalign}
\end{enumerate}

Following standard conventions in the field \cite{Roc70}, convex functions will be allowed to take values in the extended real line $\R\cup\{\infty\}$, and we will denote the \define{effective domain} of a convex function $\obj\from\ambient\to\R\cup\{\infty\}$ as
\begin{equation}
\label{eq:domfun}
\dom\obj
	\defeq \setdef{\point\in\ambient}{\obj(\point) < \infty}
	\eqstop
\end{equation}
When there is no danger of confusion, we will identify a convex function $\obj\from\ambient\to\R$ with its restriction on $\dom\obj$;
in other words, we will treat $\obj$ interchangeably
as a function on $\dom\obj$ with values in $\R$,
or
as a function on $\ambient$ with values in $\R\cup\{\infty\}$ (and finite on $\dom\obj$).

Throughout the sequel, we will assume that all functions under study are \define{proper}, that is, $\dom\obj\neq\varnothing$.
Then, given a proper function $\obj\from\ambient\to\R\cup\{\infty\}$, the \define{subdifferential} of $\obj$ at $\point\in\dom\obj$ is defined as
\begin{equation}
\label{eq:subdiff}
\subd\obj(\point)
	\defeq \setdef{\dpoint\in\dpoints}{\obj(\pointalt) \geq \obj(\point) + \braket{\dpoint}{\pointalt - \point} \; \text{for all $\pointalt\in\ambient$}}
\end{equation}
and we denote the \define{domain of subdifferentiability} of $\obj$ as
\begin{equation}
\label{eq:domdiff}
\dom\subd\obj
	= \setdef{\point\in\ambient}{\subd\obj(\point) \neq \varnothing}
	\eqstop
\end{equation}

With all this in hand, a \define{regularizer} on a closed convex subset $\cvx$ of $\ambient$ is a continuous function $\hreg\from\cvx\to\R$ which is \define{strongly convex}, \ie there exists some $\hstr>0$ such that
\begin{equation}
\label{eq:hstr}
\hreg(\coef\point + (1-\coef)\pointalt)
	\leq t \hreg(\point)
		+ (1-\coef) \hreg(\pointalt)
		- \frac{\hstr}{2} \coef(1-\coef) \norm{\pointalt - \point}^{2}
\end{equation}
for all $\point,\pointalt\in\cvx$ and for all $\coef\in[0,1]$.
By standard arguments \cite{RW98,Ber15}, this immediately implies that
\begin{equation}
\label{eq:hstr-diff}
\hreg(\pointalt)
	\geq \hreg(\point)
		+ \dir\hreg(\point;\pointalt - \point)
		+ \frac{\hstr}{2} \norm{\pointalt - \point}^{2}
	\quad
	\text{for all $\point,\pointalt\in\points$},
\end{equation}
where
\begin{equation}
\dir\hreg(\point;\pointalt-\point)
	= \lim_{\theta\to0^{+}} \bracks{\hreg(\point + \theta(\pointalt-\point)) - \hreg(\point)} / \theta
\end{equation}
denotes the one-sided directional derivative of $\hreg$ at $\point$ along the direction of $\pointalt-\point$.
In addition, we also define the following objects associated to $\hreg$:
\begin{enumerate}
\item
The \define{prox-domain} of $\hreg$:
\begin{alignat}{2}
\label{eq:proxdom}
\cvx_{\hreg}
	&\defeq \dom\subd\hreg
	&
	&
\intertext{%
\item
The \define{mirror map} $\mirror\from\dpoints\to\points$ induced by $\hreg$:%
}
\label{eq:mirror}
\mirror(\dpoint)
	&\defeq \argmax_{\point\in\points} \{ \braket{\dpoint}{\point} - \hreg(\point) \}
	&\qquad
	&\text{for all $\dpoint\in\dpoints$}.
\intertext{%
\item
The \define{convex conjugate} $\hconj\from\dpoints\to\R$ of $\hreg$:%
}
\label{eq:hconj}
\hconj(\dpoint)
	&\defeq \max_{\point\in\points} \{ \braket{\dpoint}{\point} - \hreg(\point) \}
	&\qquad
	&\text{for all $\dpoint\in\dpoints$}.
\end{alignat}
\end{enumerate}
\smallskip


The proposition below provides some basic properties linking all the above:

\begin{proposition}
\label{prop:mirror}
Let $\hreg$ be a $\hstr$-strongly convex regularizer on $\cvx$.
Then:
\begin{enumerate}
[\upshape(\itshape a\hspace*{1pt}\upshape)]
\item
$\mirror$ is single-valued on $\dpoints$.

\item
For all $\point\in\cvx_{\hreg}$ and all $\dpoint\in\dpoints$, we have
\begin{equation}
\label{eq:hinv}
\point
	= \mirror(\dpoint)
	\quad
	\text{if and only if}
	\quad
\dpoint
	\in \subd\hreg(\point)
	\eqstop
\end{equation}

\item
The image $\im\mirror$ of $\mirror$ is equal to the prox-domain of $\hreg$, and we have
\begin{equation}
\label{eq:imQ}
\relint\cvx
	\subseteq \im\mirror
	= \cvx_{\hreg}
	\subseteq \cvx
	\eqstop
\end{equation}

\item
The convex conjugate $\hconj\from\dpoints\to\R$ of $\hreg$ is differentiable and satisfies
\begin{equation}
\label{eq:Danskin}
\mirror(\dpoint)
	= \nabla\hconj(\dpoint)
	\quad
	\text{for all $\dpoint\in\dpoints$}.
\end{equation}

\item
$\mirror$ is $(1/\hstr)$-Lipschitz continuous, that is,
\begin{equation}
\label{eq:QLips}
\norm{\mirror(\dpointalt) - \mirror(\dpoint)}
	\leq (1/\hstr) \dnorm{\dpointalt - \dpoint}
	\quad
	\text{for all $\dpoint,\dpointalt\in\dpoints$}.
\end{equation}

\item
Fix some $\dpoint\in\dpoints$ and let $\point = \mirror(\dpoint)$.
Then, for all $\pointalt\in\points$ we have:
\begin{equation}
\label{eq:hdir}
\dir\hreg(\point;\pointalt - \point)
	\geq \braket{\dpoint}{\pointalt - \point}
	\eqstop
\end{equation}

\item
Fix some $\dpoint\in\dpoints$, and let $\point = \mirror(\dpoint)$.
Then $\mirror(\dpoint+\dvec) = \point$ for all $\dvec\in\pcone(\point)$.
\end{enumerate}
\end{proposition}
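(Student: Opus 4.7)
The plan is to prove the seven items essentially in order, because each subsequent part rests on the ones before it; the common thread is the first-order characterization in (b), which turns nearly all statements into short convex-analysis arguments.

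For (a), I would combine existence (the function $\point\mapsto\braket{\dpoint}{\point}-\hreg(\point)$ is upper semicontinuous and, by strong convexity, coercive on $\cvx$, so the sup is attained) with uniqueness (strict concavity from strong convexity). For (b), I would extend $\hreg$ by $+\infty$ outside $\cvx$, note that $\point=\mirror(\dpoint)$ iff $\point$ minimizes $\hreg-\braket{\dpoint}{\cdot}$ over $\ambient$, and then invoke the standard Fermat rule $0\in\subd(\hreg-\braket{\dpoint}{\cdot})(\point)=\subd\hreg(\point)-\dpoint$. Item (c) is then a direct corollary: the inclusion $\im\mirror\subseteq\cvx_{\hreg}$ is immediate from (b), the reverse inclusion follows because any $\dpoint\in\subd\hreg(\point)$ makes $\point$ a maximizer of $\braket{\dpoint}{\cdot}-\hreg$ by \eqref{eq:subdiff}, and $\relint\cvx\subseteq\dom\subd\hreg$ is the classical nonemptiness of the subdifferential on the relative interior of the effective domain of a proper convex function.

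For (d), I would argue via the envelope theorem: strong convexity of $\hreg$ makes $\hconj$ finite and differentiable with gradient equal to the unique maximizer, which by definition is $\mirror(\dpoint)$. Item (e) is the workhorse two-point inequality: setting $\point=\mirror(\dpoint)$, $\pointalt=\mirror(\dpointalt)$, the strong-convexity lower bounds $\hreg(\pointalt)\ge\hreg(\point)+\braket{\dpoint}{\pointalt-\point}+\tfrac{\hstr}{2}\norm{\pointalt-\point}^{2}$ and its swapped version add to $\braket{\dpointalt-\dpoint}{\pointalt-\point}\ge\hstr\norm{\pointalt-\point}^{2}$, after which the generalized Cauchy--Schwarz inequality $\braket{\dpointalt-\dpoint}{\pointalt-\point}\le\dnorm{\dpointalt-\dpoint}\norm{\pointalt-\point}$ closes the bound.

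For (f), I would use the definition of $\mirror$ directly: for any $\pointalt\in\cvx$ and $\theta\in(0,1]$, optimality at $\point$ gives $\hreg(\point+\theta(\pointalt-\point))-\hreg(\point)\ge\theta\braket{\dpoint}{\pointalt-\point}$; dividing by $\theta$ and sending $\theta\downarrow 0$ produces exactly \eqref{eq:hdir}. For (g), I would verify the subdifferential characterization from (b): since $\pointalt-\point\in\tcone(\point)$ for every $\pointalt\in\cvx$, any $\dvec\in\pcone(\point)$ satisfies $\braket{\dvec}{\pointalt-\point}\le 0$, so the subdifferential inequality $\hreg(\pointalt)\ge\hreg(\point)+\braket{\dpoint}{\pointalt-\point}$ (available from (b) applied to $\dpoint$) immediately upgrades to $\hreg(\pointalt)\ge\hreg(\point)+\braket{\dpoint+\dvec}{\pointalt-\point}$; hence $\dpoint+\dvec\in\subd\hreg(\point)$, and (b) yields $\mirror(\dpoint+\dvec)=\point$.

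The only delicate step I anticipate is a pedantic one in (b) and (c): one must be careful with the convention $\hreg\equiv+\infty$ outside $\cvx$ so that the normal cone to $\cvx$ is absorbed into $\subd\hreg$, which is what makes the statement $\point=\mirror(\dpoint)\Leftrightarrow\dpoint\in\subd\hreg(\point)$ hold without an extra normal-cone term. Once this convention is fixed, everything else is a short and essentially mechanical consequence of strong convexity plus the Fermat rule.
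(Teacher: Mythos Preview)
Your proposal is correct and follows essentially the same route as the paper: each item is handled via the Fermat rule / subdifferential characterization in (b), with (d) via Danskin, (f) via the difference-quotient limit, and (g) via the polar-cone inequality exactly as the paper does. The only cosmetic difference is in (e), where the paper simply cites the strong-convexity/Lipschitz-smoothness duality from Rockafellar--Wets, whereas you spell out the standard two-point monotonicity argument that underlies it.
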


\begin{proof}
For the most part, these properties are well known in the literature (except possibly the last one), so we only provide a pointer or a short sketch for most of them.
\begin{enumerate}
[\upshape(\itshape a\hspace*{.5pt}\upshape)]

\item
This readily follows from the fact that $\hreg$ is strongly convex, so the $\argmax$ in \eqref{eq:mirror} is attained and is unique for all $\dpoint\in\dpoints$.

\item
By Fermat's rule \cite[Chap.~26]{Roc70}, we readily see that $\point$ solves \eqref{eq:mirror} if and only if $\dpoint - \subd\hreg(\point) \ni 0$, that is, if and only if $\dpoint\in\subd\hreg(\point)$.
Since this implies that $\subd\hreg$, our claim follows.

\item
By \eqref{eq:hinv}, we readily get $\im\mirror = \cvx_{\hreg}$.
As for the second part of our claim, it follows from basic properties of the subdifferential, \cf \citet[Chap.~26]{Roc70}.

\item
This is simply Danskin's theorem, see \eg \citet[Proposition 5.4.8, Appendix B]{Ber15}.

\item
This is a consequence of the fact that $\hconj$ is $(1/\hstr)$-Lipschitz smooth, \cf \citet[Theorem 12.60(b)]{RW98}.

\item
Since $\dpoint \in \subd\hreg(\point)$ by \eqref{eq:hinv}, we readily get that
\begin{equation}
\hreg(\point + \theta(\pointalt-\point))
	\geq \hreg(\point) + \theta \braket{\dpoint}{\pointalt-\point}
	\quad
	\text{for all $\theta\in[0,1]$}
	\eqstop
\end{equation}
Hence, by rearranging and taking the limit $\theta\to0^{+}$,
we conclude that
\begin{equation}
\label{eq:dir-subdiff}
\dir\hreg(\point;\pointalt-\point)
	= \lim_{\theta\to0^{+}} \frac{\hreg(\point + \theta(\pointalt-\point)) - \hreg(\point)}{\theta}
	\geq \braket{\dpoint}{\pointalt-\point}
\end{equation}
as claimed.%
\footnote{The existence of the limit is guaranteed by elementary convex analysis arguments, \cf \citet[App.~B]{Ber15}.}

\item
By \eqref{eq:hinv} it suffices to show that $\dpoint+\dvec\in\subd\hreg(\point)$ for all $\dvec\in\pcone(\point)$.
However, if $\dvec\in\pcone(\point)$, we also have $\braket{\dvec}{\pointalt - \point} \leq 0$ for all $\pointalt\in\points$, and hence, with $\dpoint\in\subd\hreg(\point)$, we readily get
\begin{flalign}
\hreg(\pointalt)
	&\geq \hreg(\point)
		+ \braket{\dpoint}{\pointalt - \point}
	\notag\\
	&\geq \hreg(\point)
		+ \braket{\dpoint + \dvec}{\pointalt - \point}
	\quad
	\text{for all $\pointalt\in\points$}.
\end{flalign}
This shows that $\dpoint+\dvec \in \subd\hreg(\point)$ and completes our proof. 
\qedhere
\end{enumerate}
\end{proof}

Following \cite{MS16,MZ19}, we also define the \define{Fenchel coupling} associated to $\hreg$ as
\begin{equation}
\label{eq:Fench}
\fench(\base,\dpoint)
	= \hreg(\base)
		+ \hconj(\dpoint)
		- \braket{\dpoint}{\base}
	\quad
	\text{for all $\base\in\points$, $\dpoint\in\dpoints$}.
\end{equation}
The next proposition shows that the Fenchel coupling can be seen as a ``primal-dual'' measure of divergence between $\base\in\cvx$ and $\dpoint\in\dpoints$:

\begin{proposition}
\label{prop:Fench}
Let $\hreg$ be a $\hstr$-strongly convex regularizer on $\cvx$.
Then, for all $\base\in\points$ and all $\dpoint\in\dpoints$, we have:
\begin{subequations}
\begin{flalign}
\label{eq:Fench-posdef}
\quad
(a)
	&\;\;
	\fench(\base,\dpoint)
	\geq 0
	\;\;
	\text{with equality if and only if $\base = \mirror(\dpoint)$}.
	&
	\\
\label{eq:Fench-norm}
\quad
(b)
	&\;\;
	\fench(\base,\dpoint)
	\geq \tfrac{1}{2} \hstr \, \norm{\mirror(\dpoint) - \base}^{2}.
	&
\end{flalign}
\end{subequations}
\end{proposition}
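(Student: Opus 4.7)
The plan is to unpack the definitions and lean on two facts already stated earlier in the excerpt: the variational definition of $\hconj$ in \eqref{eq:hconj} (which says $\hconj(\dpoint) \geq \braket{\dpoint}{\point} - \hreg(\point)$ for every $\point\in\cvx$, with equality exactly at $\point = \mirror(\dpoint)$), and the subgradient-type inequality \eqref{eq:hdir} from \cref{prop:mirror}(f).

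For part (a), I would argue directly from \eqref{eq:hconj}. Specifically, plugging $\base$ into the definition of $\hconj(\dpoint)$ yields
\begin{equation*}
\hconj(\dpoint) \;\geq\; \braket{\dpoint}{\base} - \hreg(\base),
\end{equation*}
so rearranging gives $\fench(\base,\dpoint) = \hreg(\base) + \hconj(\dpoint) - \braket{\dpoint}{\base} \geq 0$. Equality in this chain is equivalent to saying that $\base$ attains the maximum defining $\hconj(\dpoint)$; since $\hreg$ is strongly convex, \cref{prop:mirror}(a) guarantees that this maximizer is unique and equals $\mirror(\dpoint)$. This handles both the nonnegativity and the equality characterization in a single step.

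For part (b), set $\point = \mirror(\dpoint)$. By definition of the mirror map, $\hconj(\dpoint) = \braket{\dpoint}{\point} - \hreg(\point)$, so
\begin{equation*}
\fench(\base,\dpoint) \;=\; \hreg(\base) - \hreg(\point) - \braket{\dpoint}{\base - \point}.
\end{equation*}
Now I would invoke the strong-convexity estimate \eqref{eq:hstr-diff}, which gives
\begin{equation*}
\hreg(\base) - \hreg(\point) \;\geq\; \dir\hreg(\point;\base - \point) + \tfrac{\hstr}{2}\norm{\base - \point}^{2},
\end{equation*}
and then bound the directional derivative from below using \eqref{eq:hdir}, i.e.\ $\dir\hreg(\point;\base - \point) \geq \braket{\dpoint}{\base - \point}$. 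Substituting into the preceding display, the linear terms cancel and one is left with $\fench(\base,\dpoint) \geq \tfrac{\hstr}{2}\norm{\mirror(\dpoint) - \base}^{2}$, as required.

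Neither half involves a serious obstacle; the entire argument is essentially a bookkeeping exercise around \eqref{eq:hconj}, \eqref{eq:hdir}, and \eqref{eq:hstr-diff}. The only point requiring care is the use of the one-sided directional derivative $\dir\hreg$ rather than a full gradient (since $\hreg$ need not be differentiable at boundary points of $\cvx$): one must ensure the chain ``strong convexity $\Rightarrow$ directional derivative lower bound'' is applied at the interior-like point $\point = \mirror(\dpoint) \in \cvx_{\hreg}$ along the feasible direction $\base - \point$, which is exactly the setup under which \eqref{eq:hdir} was derived. With that noted, the proposition follows in a few lines.
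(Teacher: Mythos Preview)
Your proposal is correct and matches the paper's proof essentially line for line: part (a) is the Fenchel--Young inequality with the equality case read off from the uniqueness of the maximizer, and part (b) is exactly the chain $\fench(\base,\dpoint) = \hreg(\base) - \hreg(\point) - \braket{\dpoint}{\base-\point} \geq \hreg(\base) - \hreg(\point) - \dir\hreg(\point;\base-\point) \geq \tfrac{\hstr}{2}\norm{\point-\base}^2$ via \eqref{eq:hdir} and \eqref{eq:hstr-diff}. The only cosmetic difference is that the paper invokes \eqref{eq:hinv} for the equality characterization in (a) whereas you appeal directly to uniqueness of the $\argmax$; these are equivalent.
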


\begin{proof}
These properties are known in the literature, but we provide a quick proof for completeness.
\begin{enumerate}
[\upshape(\itshape a\hspace*{.5pt}\upshape)]

\item
By the Fenchel\textendash Young inequality, we have $\hreg(\base) + \hconj(\dpoint) \geq \braket{\dpoint}{\base}$ for all $\base\in\points$, $\dpoint\in\dpoints$, with equality if and only if $\dpoint\in\subd\hreg(\base)$.
Our claim then follows from \eqref{eq:hinv}.

\item
Let $\point = \mirror(\dpoint)$ so $\dpoint \in \subd\hreg(\point)$ by \eqref{eq:hinv}.
Then, by the definition of $\fench$,
we have
\begin{align}
\fench(\base,\dpoint)
	&= \hreg(\base) + \hconj(\dpoint) - \braket{\dpoint}{\base}
	\notag\\
	&= \hreg(\base) + \braket{\dpoint}{\point} - \hreg(\point) - \braket{\dpoint}{\base}
	\explain{since $\dpoint \in \subd\hreg(\point)$}
	\\
	&\geq \hreg(\base) - \hreg(\point) - \dir\hreg(\point;\base-\point)
	\explain{by \cref{prop:mirror}}
	\\
	&\geq \tfrac{1}{2} \hstr \norm{\point - \base}^{2}
	\explain{by \eqref{eq:hstr}}
\end{align}
so our proof is complete.
\qedhere
\end{enumerate}
\end{proof}


Our last result at this point is a useful differentiation formula for the Fenchel coupling:

\begin{lemma}
\label{lem:dFench}
For all $\base\in\points$ and all $\dpoint\in\dpoints$, we have:
\begin{equation}
\label{eq:dFench}
\nabla_{\dpoint} \fench(\base,\dpoint)
	= \mirror(\dpoint) - \base
	\eqstop
\end{equation}
\end{lemma}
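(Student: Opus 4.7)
The plan is to differentiate the defining expression \eqref{eq:Fench} of the Fenchel coupling in the dual variable $\dpoint$, and identify each of the three resulting terms. This is almost immediate once one invokes the Danskin-type identity recorded in part (d) of \cref{prop:mirror}.

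First, I would observe that in the formula
\[
\fench(\base,\dpoint) = \hreg(\base) + \hconj(\dpoint) - \braket{\dpoint}{\base},
\]
the term $\hreg(\base)$ is independent of $\dpoint$ and hence contributes $0$ to $\nabla_{\dpoint}\fench$. The linear term $-\braket{\dpoint}{\base}$ has gradient $-\base$ with respect to $\dpoint$, since $\base\in\ambient$ is fixed and the pairing $\braket{\cdot}{\base}\from\dpoints\to\R$ is linear. The only non-trivial step is therefore to differentiate the convex conjugate $\hconj$.

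For this, I would appeal directly to \eqref{eq:Danskin} in \cref{prop:mirror}, which states that $\hconj$ is differentiable on $\dpoints$ with $\nabla\hconj(\dpoint) = \mirror(\dpoint)$ for all $\dpoint\in\dpoints$. This identity is itself a consequence of Danskin's theorem together with strong convexity of $\hreg$ (which ensures that the maximizer in the definition of $\hconj$ is unique and coincides with $\mirror(\dpoint)$). Combining the three contributions then yields
\[
\nabla_{\dpoint}\fench(\base,\dpoint) = \nabla\hconj(\dpoint) - \base = \mirror(\dpoint) - \base,
\]
which is the desired identity.

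There is no real obstacle here: the entire content is packaged in \eqref{eq:Danskin}, and the remaining manipulations are term-by-term differentiation of a sum of a constant, a smooth function, and a linear functional in $\dpoint$. The one subtlety worth mentioning, if a sanity check is desired, is that differentiability of $\hconj$ on all of $\dpoints$ (not merely on some subset) follows from the $(1/\hstr)$-Lipschitz smoothness of $\hconj$ recorded in the proof of \eqref{eq:QLips}, so the gradient is well-defined everywhere and the identity \eqref{eq:dFench} holds globally on $\points\times\dpoints$.
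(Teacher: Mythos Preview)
Your proposal is correct and takes essentially the same approach as the paper: the paper's proof is a single sentence invoking Danskin's theorem via \eqref{eq:Danskin} of \cref{prop:mirror}, and you have simply spelled out the term-by-term differentiation that this entails.
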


\begin{proof}
The proof follows immediately from Danskin's theorem, \cf \cref{eq:Danskin} of \cref{prop:mirror}.
\end{proof}

\subsection{Update lemmas}

Moving forward, we note that the basic update step of \eqref{eq:FTRL} can be written as
\begin{equation}
\label{eq:update}
\new[\dpoint]
	= \dpoint + \dvec
	\quad
	\text{and}
	\quad
\new
	= \mirror(\new[\dpoint])
\end{equation}
for some $\dpoint,\dvec\in\dpoints$.
With this in mind, we state below a series of identities and estimates for the Fenchel coupling before and after an update of the form \eqref{eq:update}.

The first is a primal-dual version of the so-called ``three-point identity'' for Bregman functions \citep{CT93}:

\begin{lemma}
\label{lem:3point}
Fix some $\base\in\points$, $\dpoint\in\dpoints$, and let $\point = \mirror(\dpoint)$.
Then, for all $\new[\dpoint]\in\dpoints$, we have:
\begin{equation}
\label{eq:3point}
\fench(\base,\new[\dpoint])
	= \fench(\base,\dpoint)
		+ \fench(\point,\new[\dpoint])
		+ \braket{\new[\dpoint] - \dpoint}{\point - \base}.
\end{equation}
\end{lemma}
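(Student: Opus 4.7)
The plan is to prove the identity by a direct algebraic expansion of the three Fenchel couplings using their definition, and then exploit the fact that the Fenchel–Young inequality binds tightly at the pair $(\point,\dpoint)$ when $\point = \mirror(\dpoint)$.

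More concretely, I would begin by writing out each of the three terms on the right-hand side of \eqref{eq:3point} via the definition \eqref{eq:Fench}:
\begin{align*}
\fench(\base,\dpoint)
  &= \hreg(\base) + \hconj(\dpoint) - \braket{\dpoint}{\base}, \\
\fench(\point,\new[\dpoint])
  &= \hreg(\point) + \hconj(\new[\dpoint]) - \braket{\new[\dpoint]}{\point},
\end{align*}
and expand the inner product $\braket{\new[\dpoint] - \dpoint}{\point - \base}$ into its four bilinear pieces. Adding everything up, the cross-terms $\braket{\new[\dpoint]}{\point}$ and $\braket{\dpoint}{\base}$ cancel against their counterparts, leaving exactly
\[
\hreg(\base) + \hconj(\new[\dpoint]) - \braket{\new[\dpoint]}{\base}
  + \bigl[\hreg(\point) + \hconj(\dpoint) - \braket{\dpoint}{\point}\bigr].
\]
The first bracket is precisely $\fench(\base,\new[\dpoint])$, so it only remains to show that the second bracket vanishes.

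This is the one nontrivial step, but it is immediate from \cref{prop:mirror}: since $\point = \mirror(\dpoint)$, part (\textit{b}) of the proposition gives $\dpoint \in \subd\hreg(\point)$, which is the equality case of Fenchel–Young, \viz $\hreg(\point) + \hconj(\dpoint) = \braket{\dpoint}{\point}$. Equivalently, this also follows from \cref{prop:Fench}(\textit{a}): $\fench(\point,\dpoint) = 0$ whenever $\point = \mirror(\dpoint)$, which is exactly the statement that the bracketed quantity is zero. Substituting this identity completes the proof.

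I do not anticipate any real obstacle here; the only thing to be careful about is keeping track of the sign conventions in the inner-product expansion and ensuring that we use the hypothesis $\point = \mirror(\dpoint)$ at the right moment (it is only at this single point that the identity uses any nontrivial property of $\mirror$). The result itself is the primal-dual analogue of the classical three-point identity $\bregof{\base}{\new} = \bregof{\base}{\point} + \bregof{\point}{\new} + \braket{\nabla\hreg(\new) - \nabla\hreg(\point)}{\point - \base}$ for Bregman divergences \citep{CT93}, and the proof strategy mirrors that classical derivation.
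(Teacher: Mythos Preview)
Your proposal is correct and follows essentially the same approach as the paper: both proofs expand the Fenchel couplings from the definition \eqref{eq:Fench} and then use the equality case of Fenchel--Young at $(\point,\dpoint)$ (equivalently, $\fench(\point,\dpoint)=0$ since $\point=\mirror(\dpoint)$) to eliminate the residual term. The only cosmetic difference is that the paper subtracts \eqref{eq:Fench2} and \eqref{eq:Fench3} from \eqref{eq:Fench1} and then simplifies, whereas you add up the right-hand side directly; the algebra is identical.
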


\begin{proof}
By definition, we have:
\begin{subequations}
\begin{align}
\label{eq:Fench1}
\fench(\base,\new[\dpoint])
	&= \hreg(\base)
		+ \hconj(\new[\dpoint])
		- \braket{\new[\dpoint]}{\base}
	\\
\label{eq:Fench2}
\fench(\base,\dpoint)
	&= \hreg(\base)
		+ \hconj(\dpoint)
		- \braket{\dpoint}{\base}
	\\
\label{eq:Fench3}
\fench(\point,\new[\dpoint])
	&= \hreg(\point)
		+ \hconj(\new[\dpoint])
		- \braket{\new[\dpoint]}{\point}
\end{align}
\end{subequations}
Thus, subtracting \eqref{eq:Fench2} and \eqref{eq:Fench3} from \eqref{eq:Fench1}, and rearranging, we get
\begin{equation}
\fench(\base,\new[\dpoint])
	= \fench(\base,\dpoint)
		+ \fench(\point,\new[\dpoint])
		- \hreg(\point)
		- \hconj(\dpoint)
		+ \braket{\new[\dpoint]}{\point}
		- \braket{\new[\dpoint] - \dpoint}{\base}
	\eqstop
\end{equation}
Our assertion then follows by recalling that $\point = \mirror(\dpoint)$, so $\hreg(\point) + \hconj(\dpoint) = \braket{\dpoint}{\point}$.
\end{proof}

The next result we present concerns the Fenchel coupling before and after a direct update step;
similar results exist in the literature, but we again provide a proof for completeness.

\begin{lemma}
\label{lem:onestep}
Fix some $\base\in\points$ and $\dpoint,\dvec\in\dpoints$.
Then, letting $\point = \mirror(\dpoint)$, $\new[\dpoint] = \dpoint + \dvec$, and $\new = \mirror(\new[\dpoint])$ as per \eqref{eq:update}, we have:
\begin{subequations}
\label{eq:onestep}
\begin{align}
\fench(\base,\new[\dpoint])
	&= \fench(\base,\dpoint)
		+ \braket{\dvec}{\new - \base}
		- \fench(\new,\dpoint)
	\label{eq:onestep1}
	\\
	&\leq \fench(\base,\point)
		+ \braket{\dvec}{\point - \base}
		+ \frac{1}{2\hstr} \dnorm{\dvec}^{2}
	\eqstop
	\label{eq:onestep2}
\end{align}
\end{subequations}
\end{lemma}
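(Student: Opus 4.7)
\smallskip
\noindent\textbf{Plan.} Both claims follow by direct manipulation of the three-point identity \cref{lem:3point} together with the strong-convexity lower bound \eqref{eq:Fench-norm} on the Fenchel coupling. The whole argument is ``bookkeeping + Young's inequality'', so I do not expect any genuine obstacle; the only real insight is recognizing how to apply \cref{lem:3point} in order to land on the correct sign of $\fench(\new,\dpoint)$.

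\smallskip
\noindent\emph{Step 1 (the identity \eqref{eq:onestep1}).} The form $-\fench(\new,\dpoint)$ on the right-hand side is the key clue: the three-point identity \eqref{eq:3point} produces $+\fench(\point,\new[\dpoint])$ when applied with $\dpoint$ as the ``start'' and $\new[\dpoint]$ as the ``end'', so to get the opposite sign one should apply \eqref{eq:3point} with the roles reversed. Concretely, I would instantiate \eqref{eq:3point} with $\dpoint \rightsquigarrow \new[\dpoint]$ and $\new[\dpoint] \rightsquigarrow \dpoint$, using that $\mirror(\new[\dpoint]) = \new$, to obtain
\begin{equation*}
\fench(\base,\dpoint)
	= \fench(\base,\new[\dpoint])
		+ \fench(\new,\dpoint)
		+ \braket{\dpoint - \new[\dpoint]}{\new - \base}.
\end{equation*}
Since $\new[\dpoint] - \dpoint = \dvec$ by \eqref{eq:update}, rearranging gives exactly \eqref{eq:onestep1}.

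\smallskip
\noindent\emph{Step 2 (the inequality \eqref{eq:onestep2}).} Starting from \eqref{eq:onestep1} I would split the inner product along the intermediate point $\point = \mirror(\dpoint)$:
\begin{equation*}
\braket{\dvec}{\new - \base}
	= \braket{\dvec}{\point - \base}
		+ \braket{\dvec}{\new - \point}.
\end{equation*}
Young's inequality yields
$\braket{\dvec}{\new - \point} \leq \tfrac{1}{2\hstr}\dnorm{\dvec}^{2} + \tfrac{\hstr}{2}\norm{\new - \point}^{2}$,
while \eqref{eq:Fench-norm} of \cref{prop:Fench} (with base $\new$ and dual point $\dpoint$) gives $\fench(\new,\dpoint) \geq \tfrac{\hstr}{2}\norm{\mirror(\dpoint) - \new}^{2} = \tfrac{\hstr}{2}\norm{\point - \new}^{2}$. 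The two quadratic contributions cancel, leaving
\begin{equation*}
\braket{\dvec}{\new - \point} - \fench(\new,\dpoint)
	\leq \tfrac{1}{2\hstr}\dnorm{\dvec}^{2},
\end{equation*}
which, combined with \eqref{eq:onestep1}, yields \eqref{eq:onestep2}. (I read the $\fench(\base,\point)$ on the right-hand side of \eqref{eq:onestep2} as notation for $\fench(\base,\dpoint)$, consistent with $\point = \mirror(\dpoint)$ and with the domain of $\fench$.)

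\smallskip
\noindent\emph{Main obstacle.} None of substance. The only non-mechanical step is spotting that \cref{lem:3point} should be applied with the pairs $(\dpoint,\point)$ and $(\new[\dpoint],\new)$ interchanged; thereafter the inequality is a one-line application of Young's inequality whose quadratic tail is absorbed by the strong-convexity bound on $\fench(\new,\dpoint)$ from \cref{prop:Fench}.
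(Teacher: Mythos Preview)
Your proposal is correct and matches the paper's approach essentially line for line: apply the three-point identity with the roles of $\dpoint$ and $\new[\dpoint]$ swapped to obtain \eqref{eq:onestep1}, then split $\braket{\dvec}{\new-\base}$ through $\point$, apply Young's inequality, and absorb the quadratic term via $\fench(\new,\dpoint)\geq\tfrac{\hstr}{2}\norm{\point-\new}^{2}$ from \cref{prop:Fench}. Your reading of $\fench(\base,\point)$ as $\fench(\base,\dpoint)$ is also the intended one.
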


\begin{proof}
By the three-point identity \eqref{eq:3point}, we have
\begin{equation}
\fench(\point,\dpoint)
	= \fench(\point,\new[\dpoint])
		+ \fench(\new,\point)
		+ \braket{\dpoint - \new[\dpoint]}{\new - \base}
\end{equation}
so our first claim is immediate.
For our second claim, rearranging terms and employing the Fenchel\textendash Young inequality gives
\begin{align}
\fench(\base,\dpoint)
	&+ \braket{\dvec}{\new - \base}
		- \fench(\new,\dpoint)
	\notag\\
	&= \fench(\base,\dpoint)
		+ \braket{\dvec}{\point - \base}
		+ \braket{\dvec}{\new-\point}
		- \fench(\base,\dpoint)
	\notag\\
	&\leq \fench(\base,\dpoint)
		+ \braket{\dvec}{\point - \base}
		+ \frac{1}{2\hstr} \dnorm{\dvec}^{2}
		+ \frac{\hstr}{2} \norm{\point-\base}^{2}
		- \fench(\base,\dpoint)
\end{align}
so our claim follows from \cref{prop:Fench}.
\end{proof}

\section{A short primer on stochastic analysis}
\label{app:stoch}

In this appendix, we collect some standard results from stochastic analysis in order to provide a degree of self-completeness to the main text.
For an introduction to stochastic analysis and the theory of \acp{SDE}, we refer the reader to the masterful accounts of \citet{Oks13} and \citet{Kuo06}.

The main focus of the theory is the study of \acp{ODE} perturbed by noise, modeled informally after the \define{Langevin equation}
\begin{equation}
\label{eq:Langevin}
\tag{LE}
\frac{d\Stoch}{d\time}
	= \drift(\Stochof{\time})
		+ \eta(\time)
\end{equation}
where
$\Stochof{\time}$ is a stochastic process in $\R$,
$\drift\from\R\to\R$ is the \define{drift} of the process,
and
$\eta(\time)$ is the ``noise'' perturbing the deterministic \ac{ODE} $\dot\stoch = \drift(\stoch)$.
Unfortunately, albeit natural, the problem with \eqref{eq:Langevin} is that any reasonable continuous-time model of noise would lead to trajectories that are almost nowhere differentiable, so the meaning of ``$d\Stoch/d\time$'' in \eqref{eq:Langevin} is rather precarious.%
\footnote{In particular, consider a noise process $\eta(\time)$ which is
\begin{enumerate*}
[\itshape a\upshape)]
\item
\define{zero-mean:}
	$\exof{\eta(\time} = 0$;
\item
\define{uncorrelated:}
	$\exof{\eta(\time_{1})\eta(\time_{2})} = 0$ if $\time_{2}\neq\time_{1}$;
and
\item
\define{stationary},
	in the sense that $\eta(\time+\timealt)$ and $\eta(\time)$ are identically distributed for all $\timealt>0$.
\end{enumerate*}
Then, \emph{any} such process does not have continuous paths \cite[p.~21]{Oks13}.}

In lieu of this, to give formal meaning to \eqref{eq:Langevin}, we consider instead the \acli{SDE}
\begin{equation}
\label{eq:SDE}
\tag{SDE}
d\Stochof{\time}
	= \drift(\Stochof{\time}) \dd\time
		+ \diffmat(\Stochof{\time}) \dd\brownof{\time}
\end{equation}
which is shorthand for the integral equation
\begin{equation}
\label{eq:SDE-int}
\Stochof{\time}
	= \int_{\tstart}^{\time} \drift(\Stochof{\timealt}) \dd\timealt
		+ \int_{\tstart}^{\time} \diffmat(\Stochof{\timealt}) \dd\brownof{\timealt}
	\eqstop
\end{equation}
for some state-dependent \define{diffusion coefficient} $\diffmat\from\R\to\R$.
The key element in the above formulation is
the so-called \define{Itô integral} that appears in the \acl{RHS} of \eqref{eq:SDE}, and which is defined relative to what is known as a standard \define{Brownian motion} on $\R$.
Intuitively, what this means is that the integral $\int_{\tstart}^{\time} \diffmat(\Stochof{\timealt}) \dd\brownof{\timealt}$ is obtained in the limit $\delta\time = \time_{k+1} - \time_{k}\to0$ of the discrete-time approximation
\begin{equation}
\int_{\tstart}^{\time} \diffmat(\Stochof{\timealt}) \dd\brownof{\timealt}
	\approx \sum_{k=1}^{\ceil{\time/\delta\time}} \diffmat(\Stochof{\time_{k}}) \, \bracks{\brownof{\time_{k+1}} - \brownof{\time_{k}}}
\end{equation}
where $\brownof{\time}$ is some stochastic process that satisfies what one would expect from a ``white noise'' process (zero-mean, with independent increments), but is still ``regular enough'' to possess a reasonable behavior in the limit $\delta\time\to0$.
These considerations lead to the formal definition of a Brownian motion---or, more precisely, the \define{Wiener process}---which is characterized by the following properties:
\begin{enumerate}
\item
The increments of $\brown$ are \define{independent}, that is, for all $\time,\timealtalt>0$, the future increments $\brownof{\time+\timealtalt} - \brownof{\time}$ of $\brown$ are independent of its past values $\brownof{\timealt}$, $\timealt < \time$.
\item
The increments of $\brown$ are \define{Gaussian}, that is, for all $\time,\timealtalt>0$, the future increments $\brownof{\time+\timealtalt} - \brownof{\time}$ of $\brown$ are normally distributed with mean $0$ and variance $\timealtalt$, \ie $\brownof{\time+\timealtalt} - \brownof{\time} \sim \normal(0,\timealtalt)$.
\item
The sample paths of $\brown$ are \define{continuous} \as, \ie $\brownof{\time}$ is a continuous function of $\time$ for almost every realization of $\brown$.
\end{enumerate}
The existence of a process with the above properties is by no means a trivial affair, but it can constructed \eg as the scaling limit of a random walk, or some other discrete-time stochastic processes with stationary independent increments.

Providing a more detailed account of the definition of $\brownof{\time}$ and the associated stochastic integral which appears in \eqref{eq:SDE} is well beyond the scope of our paper;
for an accessible introduction, we refer the reader to \citet[Chap.~2]{Oks13}.
What is more important for our purposes is that, albeit non-differentiable, the solution $\Stochof{\time}$ still satisfies a certain version of the chain rule, known as \emph{Itô's formula} \cite{Ito44}.
Specifically, for any $C^{2}$ function $\fn\from\R\to\R$, we have
\begin{align}
\label{eq:Ito-single}
d\fn(\Stochof{\time})
	&= \fn'(\Stochof{\time}) \drift(\Stochof{\time}) \dd\time
		+ \tfrac{1}{2} \fn''(\Stochof{\time}) \diffmat^{2}(\time) \dd\time
		+ \fn'(\Stochof{\time}) \, \diffmat(\Stochof{\time}) \dd\brownof{\time}
\end{align}
or, more compactly:
\begin{equation}
d\fn(\Stochof{\time})
	= \fn'(\Stochof{\time}) \dd\Stochof{\time}
		+ \tfrac{1}{2} \fn''(\Stochof{\time}) \dd\Stochof{\time} \cdot d\Stochof{\time}
\end{equation}
where the product $d\Stoch\cdot d\Stoch$ is computed according to the rules of stochastic calculus \cite{Oks13}:
\begin{equation}
dt\cdot dt
	= 0
	\quad
dt \cdot d\brownof{\time}
	= 0
	\quad
	\text{and}
	\quad
d\brownof{\time} \cdot d\brownof{\time}
	= d\time
	\eqstop
\end{equation}
Thanks to Itô's formula, we can still do calculus with stochastic processes satisfying \eqref{eq:SDE};
the resulting set of differentiation rules is known as \define{Itô}---or \define{stochastic}---\define{calculus}.

For our purposes, we will consider multi-dimensional analogues of \eqref{eq:SDE} where, mutatis mutandis,
\begin{enumerate*}
[\upshape(\itshape i\hspace*{1pt}\upshape)]
\item
$\Stochof{\time}$ evolves in $\R^{\nCoords}$;
\item
the drift of the process is given by a vector field $\drift\from\R^{\nCoords}\to\R^{\nCoords}$;
\item
$\brownof{\time}$ is an $\nIdx$-dimensional Brownian motion evolving in $\R^{\nIdx}$;
and
\item
$\diffmat\from\R^{\nCoords}\to\R^{\nCoords\times\nIdx}$ is the \define{diffusion matrix} of the \ac{SDE}.
\end{enumerate*}
In this case, Itô's formula for a $C^{2}$ function $\fn\from\R^{\nCoords}\to\R$ becomes
\begin{equation}
\label{eq:Ito}
\begin{aligned}
d\fn(\Stochof{\time})
	&= \sum_{\iCoord=1}^{\nCoords} \drift_{\iCoord}(\Stochof{\time}) \frac{\pd\fn}{\pd\stoch_{\iCoord}} \dd\time
		+ \frac{1}{2} \sum_{\iCoord,\jCoord=1}^{\nCoords} \sum_{\idx=1}^{\nIdx} \diffmat_{\iCoord\idx}(\Stochof{\time})\diffmat_{\jCoord\idx}(\Stochof{\time}) \frac{\pd^{2}\fn}{\pd\stoch_{\iCoord}\pd\stoch_{\jCoord}} \dd\time
	\\
	&\qquad
		+ \sum_{\iCoord=1}^{\nCoords} \sum_{\idx=1}^{\nIdx} \diffmat_{\iCoord\idx}(\Stochof{\time}) \frac{\pd\fn}{\pd\stoch_{\iCoord}} \dd\brownof[\idx]{\time}
\end{aligned}
\end{equation}

In our analysis, we will also require a weaker version of Itô's formula for convex functions $\fn\from\R^{\nCoords}\to\R$ that are not $C^{2}$ but are only $\lips$-Lipschitz smooth, \ie $C^{1}$-smooth with $\lips$-Lipschitz continuous derivatives.
We borrow the precise statement from \cite[Proposition~C.2]{MerSta18} which, in our notation, gives
\begin{flalign}
\label{eq:Ito-weak}
\fn(\Stochof{\time})
	&\leq \fn(\Stochof{\tstart})
		+ \sum_{\iCoord=1}^{\nCoords} \int_{\tstart}^{\time} \pd_{\stoch_{\iCoord}} \fn(\Stochof{\timealt}) \dd\Stochof[\iCoord]{\timealt}
		+ \frac{\lips}{2} \int_{\tstart}^{\timealt} \trof{\diffmat(\Stochof{\timealt}) \, \diffmat(\Stochof{\timealt})^{\top}} \dd\timealt
\shortintertext{or, more explicitly,}
\fn(\Stochof{\time})
	&\leq \fn(\Stochof{\tstart})
		+ \sum_{\iCoord=1}^{\nCoords} \int_{\tstart}^{\time} \drift_{\idx}(\Stochof{\timealt}) \, \pd_{\stoch_{\iCoord}} \fn(\Stochof{\timealt}) \dd\timealt
		+ \frac{\lips}{2} \int_{\tstart}^{\timealt} \trof{\diffmat(\Stochof{\timealt}) \, \diffmat(\Stochof{\timealt})^{\top}} \dd\timealt
	\notag\\
	&\qquad
		+ \sum_{\iCoord=1}^{\nCoords} \sum_{\idx=1}^{\nIdx} \int_{\tstart}^{\time} \diffmat_{\iCoord\idx}(\Stochof{\timealt}) \, \pd_{\stoch_{\iCoord}} \fn(\Stochof{\timealt}) \dd\brownof[\idx]{\timealt}
	\eqstop
\end{flalign}

The deterministic part of (the strong version of) Itô's formula for $C^{2}$-smooth functions is captured by the so-called \define{infinitesimal generator} of \eqref{eq:SDE}, defined here as the differential operator $\gen$ whose action on $\fn$ is given by
\begin{equation}
\label{eq:generator}
\gen\fn(\point)
	= \sum_{\iCoord=1}^{\nCoords} \drift_{\iCoord}(\stoch) \frac{\pd\fn}{\pd\stoch_{\iCoord}}
		+ \frac{1}{2} \sum_{\iCoord,\jCoord=1}^{\nCoords} \sum_{\idx=1}^{\nIdx} \diffmat_{\iCoord\idx}(\stoch)\diffmat_{\jCoord\idx}(\stoch) \frac{\pd^{2}\fn}{\pd\stoch_{\iCoord}\pd\stoch_{\jCoord}}
	\quad
	\text{for all $\stoch\in\R^{\nCoords}$}.
\end{equation}
Accordingly, Itô's formula can be written more compactly as
\begin{equation}
\label{eq:Ito-gen}
d\fn(\Stochof{\time})
	= \gen\fn(\Stochof{\time}) \dd\time
		+ \nabla_{\stoch}\fn(\Stochof{\time})^{\top} \, \diffmat(\Stochof{\time}) \dd\brownof{\time}
	\eqstop
\end{equation}
Thus, letting $\probwrt{\stoch}{\cdot}$ denote the law of $\Stoch$ initialized at $\Stochof{\tstart} \gets \stoch\in\R^{\nCoords}$, and writing $\exwrt{\stoch}{\cdot}$ for the corresponding expectation, we readily get
\begin{equation}
\label{eq:Dynkin-simple}
\exwrt{\stoch}{\fn(\Stochof{\time})}
	= \fn(\stoch)
		+ \exwrt*{\stoch}{\int_{\tstart}^{\time} \gen\fn(\Stochof{\timealt}) \dd\timealt}
	\quad
	\text{for all $\time\geq0$}.
\end{equation}
This shows that the infinitesimal generator of $\Stoch$ captures precisely the mean part of the evolution of $\fn(\Stochof{\time})$ under \eqref{eq:SDE}.
In fact, this simple expression admits a far-reaching generalization known as \define{Dynkin's formula} \cite[Chap.~7.4]{Oks13}:

\begin{proposition}
[Dynkin's formula]
\label{prop:Dynkin}
Suppose that $\Stochof{\time}$ is initialized at $\Stochof{\tstart} \gets \stoch \in \R^{\nCoords}$.
Then, for every bounded stopping time $\stoptime$ and every $C^{2}$-smooth function $\fn\from\R^{\nCoords}\to\R$, we have
\begin{equation}
\exwrt{\stoch}{\fn(\Stochof{\stoptime})}
	= \fn(\stoch)
		+ \exwrt*{\stoch}{\int_{\tstart}^{\stoptime} \gen\fn(\Stochof{\timealt}) \dd\timealt}
	\eqstop
\end{equation}
\end{proposition}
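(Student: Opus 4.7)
The plan is to reduce Dynkin's formula to It\^o's formula combined with the optional stopping theorem, bridging the gap between local martingales and true martingales via a standard localization argument.

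First, I would apply the strong form of It\^o's formula \eqref{eq:Ito-gen} to $\fn$ along the paths of $\Stoch$ between $\tstart$ and an arbitrary deterministic time $\time \geq 0$, obtaining
\begin{equation}
\fn(\Stochof{\time})
    = \fn(\stoch)
        + \int_{\tstart}^{\time} \gen\fn(\Stochof{\timealt}) \dd\timealt
        + \ito_{\time},
    \qquad
    \ito_{\time} \defeq \int_{\tstart}^{\time} \nabla\fn(\Stochof{\timealt})^{\top} \diffmat(\Stochof{\timealt}) \dd\brownof{\timealt}.
\end{equation}
By the general theory of the It\^o integral, the process $\ito_{\time}$ is a continuous local martingale started at zero; the entire content of the claim therefore reduces to showing $\exwrt{\stoch}{\ito_{\stoptime}} = 0$, so that taking expectations of the It\^o identity evaluated at $\stoptime$ produces the stated formula.

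To promote $\ito$ to a genuine martingale up to the stopping time $\stoptime$, I would introduce the localizing sequence
\begin{equation}
\stoptime_{n} \defeq \stoptime \wedge \inf\setof{\time \geq 0 : \norm{\Stochof{\time}} \geq n},
\end{equation}
which is itself a bounded stopping time since $\stoptime$ is bounded by hypothesis (say by some deterministic $\horizon < \infty$). Because the paths of $\Stoch$ are almost surely continuous and therefore locally bounded on the compact interval $[\tstart, \horizon]$, we have $\stoptime_{n} \nearrow \stoptime$ almost surely as $n \to \infty$. On the stochastic interval $[\tstart, \stoptime_{n}]$ the integrand $\nabla\fn(\Stochof{\timealt})^{\top} \diffmat(\Stochof{\timealt})$ is bounded by continuity of $\nabla\fn$ and $\diffmat$ on the closed ball of radius $n$, so the stopped process $\ito_{\time \wedge \stoptime_{n}}$ is a square-integrable martingale. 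Applying optional stopping then yields $\exwrt{\stoch}{\ito_{\stoptime_{n}}} = 0$, and substituting into the It\^o identity evaluated at $\stoptime_{n}$ gives
\begin{equation}
\exwrt{\stoch}{\fn(\Stochof{\stoptime_{n}})}
    = \fn(\stoch)
        + \exwrt*{\stoch}{\int_{\tstart}^{\stoptime_{n}} \gen\fn(\Stochof{\timealt}) \dd\timealt}.
\end{equation}

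The remaining step\textemdash and the main obstacle\textemdash is to pass to the limit $n \to \infty$ on both sides of this identity. Pathwise continuity gives $\fn(\Stochof{\stoptime_{n}}) \to \fn(\Stochof{\stoptime})$ and $\int_{\tstart}^{\stoptime_{n}} \gen\fn(\Stochof{\timealt}) \dd\timealt \to \int_{\tstart}^{\stoptime} \gen\fn(\Stochof{\timealt}) \dd\timealt$ almost surely, but to exchange limit and expectation an integrable envelope is required. The cleanest route is to first establish the formula under the additional hypothesis that $\fn$ has compact support, in which case $\fn$, $\nabla\fn$, and $\gen\fn$ are globally bounded and dominated convergence applies immediately. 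The general $C^{2}$ case then follows by multiplying $\fn$ by a smooth cut-off $\chi_{R}$ equal to $1$ on the ball of radius $R$ centered at $\stoch$ and vanishing outside a larger ball; on the event $\setof{\sup_{\time \leq \horizon} \norm{\Stochof{\time}} \leq R}$ the cut-off version agrees with $\fn$ up to time $\stoptime$, and this event has probability tending to one as $R \to \infty$ (since $\Stoch$ has finite-horizon non-explosive paths under the Lipschitz hypothesis on $\drift$ and $\diffmat$ implicit in the well-posedness of \eqref{eq:SDE}). Transferring the identity to $\fn$ itself via this truncation completes the proof.
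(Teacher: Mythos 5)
The paper does not prove this proposition; it states it and cites Øksendal, Chapter~7.4, so there is no in-paper proof to compare against. Your argument is the standard textbook route---Itô's formula, localization so that the stochastic integral becomes a genuine martingale, optional stopping, and a limiting argument---so the overall approach is correct and matches the source the paper points to.

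Steps one through four (Itô, localization via $\stoptime_{n}$, square-integrability of the stopped local martingale, optional stopping at the bounded time $\stoptime_{n}$) are correct and complete. The gap is in the closing paragraph. Passing from $\fn\in C_{0}^{2}$ to arbitrary $\fn\in C^{2}$ via the cutoff $\chi_{R}$ is only sketched, and it is not automatic: $\gen(\fn\chi_{R})$ picks up lower-order terms involving $\nabla\chi_{R}$ and $\nabla^{2}\chi_{R}$ multiplied by $\fn$ and $\nabla\fn$ on the annulus where $\chi_{R}$ decays, and controlling the expectation of those terms as $R\to\infty$---or, equivalently, exhibiting an integrable dominating function for $\fn(\Stochof{\stoptime_{n}})$ and $\int_{\tstart}^{\stoptime_{n}}\gen\fn(\Stochof{\timealt})\dd\timealt$ along the first route---requires a growth or moment hypothesis linking $\fn$ to the process. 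Without such a hypothesis the exchange of limit and expectation can fail, and the formula does fail for $C^{2}$ functions of sufficiently fast growth. Øksendal's Theorem~7.4.1 sidesteps this by assuming $\fn\in C_{0}^{2}$; the version stated here (arbitrary $C^{2}$ $\fn$, bounded $\stoptime$) is valid under implicit integrability conditions that are always satisfied in the paper's applications (at most polynomially growing $\fn$, Lipschitz and bounded coefficients, bounded horizon), but your truncation step must invoke them explicitly to close the proof.
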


Moving forward, the matrix
\begin{equation}
\label{eq:symbol}
\princ(\stoch)
	= \diffmat(\stoch)\diffmat(\stoch)^{\top}
\end{equation}
or, in components,
\begin{equation}
\princ_{\iCoord\jCoord}(\stoch)
	= \sum_{\idx=1}^{\nIdx} \diffmat_{\iCoord\idx}(\stoch) \diffmat_{\jCoord\idx}(\stoch)
	\quad
	\iCoord,\jCoord=1,\dotsc,\nCoords,
\end{equation}
is known as the \define{principal symbol} of $\gen$, and we say that $\gen$ is \define{uniformly elliptic} if there exists some $\const>0$ such that $\unitvec^{\top} \princ(z) \unitvec \geq \const \norm{\unitvec}^{2}$ for all $\stoch,\unitvec\in\R^{\nCoords}$ (that is, if the eigenvalues of $\princ(\stoch)$ are positive and uniformly bounded away from $0$).
If this is the case, the noise in \eqref{eq:SDE} is ``uniformly exciting'' in the sense that it does not vanish along any direction at any point of the state space of the process.
Concretely, by standard results---see \eg \cite[Sec.~3.3.6.1]{MP90} and references therein---this implies that every region of $\R^{\nCoords}$ is visited by $\Stochof{\time}$ with positive probability, \viz
\begin{equation}
\label{eq:visit}
\probwrt{\stoch}{\Stochof{\time} = \alt\stoch \; \text{for some $\time>0$}}
	> 0
	\quad
	\text{for all $\stoch,\alt\stoch\in\R^{\nCoords}$}.
\end{equation}

If \eqref{eq:SDE} is uniformly elliptic---\ie if the infinitesimal generator thereof is uniformly elliptic---the behavior of $\Stochof{\time}$ can be further classified as \define{transient} or \define{recurrent}.
Formally, these two fundamental notions are defined as follows:

\begin{definition}
\label{def:rec-trans}
Suppose that \eqref{eq:SDE} is initialized at some $\stoch\in\R^{\nCoords}$.
Then:
\begin{enumerate}
\item
$\Stochof{\time}$ is \define{transient} from $\stoch\in\R^{\nCoords}$ if it escapes every compact subset $\cpt$ of $\R^{\nCoords}$ in finite time, \ie there exists some \textpar{possibly random} $\horizon_{\cpt} < \infty$ such that
\begin{equation}
\label{eq:escape}
\probwrt{\stoch}{\Stochof{\time} \notin \cpt \; \text{for all $\time\geq\horizon_{\cpt}$}}
	= 1
	\eqstop
\end{equation}
\item
$\Stochof{\time}$ is \define{recurrent} relative to a compact subset $\cpt$ of $\R^{\nCoords}$ if the hitting time
\begin{equation}
\label{eq:stoptime}
\stoptime_{\cpt}
	= \inf\setdef{\time>0}{\Stochof{\time} \in \cpt}
\end{equation}
is finite \as.
If, in addition, $\exof{\stoptime_{\cpt}} < \infty$, we will say that $\Stochof{\time}$ is \define{positive recurrent};
otherwise, $\Stochof{\time}$ will be called \define{null recurrent}. 
\end{enumerate}
\end{definition}

If \eqref{eq:SDE} is uniformly elliptic, we have the following fundamental dichotomy:

\begin{theorem}
[Transience\,/\,recurrence dichotomy]
\label{thm:rec-trans}
Suppose that \eqref{eq:SDE} is uniformly elliptic.
Then:
\begin{enumerate}
\item
If \eqref{eq:SDE} is positive recurrent \textpar{resp.~null recurrent} for some initial condition $\stoch\in\R^{\nCoords}$ and some compact subset $\cpt$ of $\R^{\nCoords}$, then it is positive recurrent \textpar{resp.~null recurrent} for every initial condition and every compact subset of $\R^{\nCoords}$.
\item
If \eqref{eq:SDE} is transient from some initial condition $\stoch\in\R^{\nCoords}$, it is transient from every initial condition.
\end{enumerate}
\end{theorem}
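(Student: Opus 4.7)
The plan rests on two pillars that combine to force an ``all-or-nothing'' dichotomy: uniform ellipticity delivers Lebesgue \emph{accessibility} (strict positivity of transition densities from every point to every open set), and the strong Markov property then promotes any local, one-point statement to a global one via regeneration at returns to a compact set. Both items of the theorem follow a common template built from these two ingredients, so I would organize the proof around them rather than handle parts~(1) and~(2) separately from scratch.

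The first step is to record the accessibility property. Since the infinitesimal generator $\gen$ is uniformly elliptic with bounded, Lipschitz coefficients, classical parabolic theory (Aronson bounds, or equivalently Hörmander hypoellipticity together with the parabolic strong maximum principle) guarantees that the law of $\Stochof{\time}$ under $\probwrt{\stoch}{\cdot}$ admits a jointly continuous density $\dens_{\time}(\stoch,\alt\stoch)$ which is \emph{strictly positive} for every $\stoch,\alt\stoch\in\R^{\nCoords}$ and every $\time>0$. In particular,
\begin{equation*}
\probwrt{\stoch}{\Stochof{\time}\in\open}>0
\quad\text{and}\quad
\probwrt{\stoch}{\stoptime_{\cpt}<\infty}>0
\end{equation*}
for every open $\open\subseteq\R^{\nCoords}$, every compact $\cpt$ with $\intr\cpt\neq\varnothing$, every $\stoch\in\R^{\nCoords}$, and every $\time>0$. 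Uniform ellipticity enters the rest of the argument \emph{only} through this accessibility statement.

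For part~(2), suppose $\Stoch$ is transient from some $\stoch_{0}$, so for every compact $\cpt$ there is a $\probwrt{\stoch_{0}}{\cdot}$-almost surely finite time after which $\Stochof{\time}\notin\cpt$. Take any other starting point $\alt\stoch$: by accessibility, from $\alt\stoch$ the process enters any preselected small ball around $\stoch_{0}$ with positive probability in unit time; conditioning on this event and applying the strong Markov property couples the subsequent trajectory to one started near $\stoch_{0}$, which is transient. A standard Borel--Cantelli\,/\,$0$--$1$ argument along a countable exhaustion of $\R^{\nCoords}$ by compact sets then upgrades positive escape probability to almost-sure escape, giving transience from $\alt\stoch$. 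Symmetrically, if $\Stoch$ is recurrent to $\cpt_{0}$ from $\stoch_{0}$, the transient case for $\stoch_{0}$ is excluded, so Kolmogorov's $0$--$1$ law applied to the tail event $\{\stoptime_{\cpt_{0}}^{(n)}<\infty\text{ for all }n\}$ forces infinitely many returns to $\cpt_{0}$ \as; accessibility from any other $\stoch$ to $\cpt_{0}$, and from $\cpt_{0}$ to any other compact $\cpt'$ with nonempty interior, then propagates recurrence to every starting point and every such compact.

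For part~(1), one samples $\Stoch$ at its returns $\stoptime_{\cpt}^{(n)}$ to a fixed compact $\cpt$ with nonempty interior, obtaining by the strong Markov property a time-homogeneous Markov chain on $\cpt$; uniform strict positivity of $\dens_{\time}$ on compacts (parabolic Harnack inequality) makes this chain irreducible and provides a uniform minorization on $\cpt$. Khas'minskii's recurrence test then shows that $\stoch\mapsto\exwrt{\stoch}{\stoptime_{\cpt}}$ is either bounded on $\cpt$ or identically $+\infty$ there, and the inequality
\begin{equation*}
\exwrt{\stoch}{\stoptime_{\cpt}}\leq\exwrt{\stoch}{\stoptime_{\cpt_{0}}}+\sup_{\alt\stoch\in\cpt_{0}}\exwrt{\alt\stoch}{\stoptime_{\cpt}}
\end{equation*}
combined with accessibility transfers this finite\,/\,infinite alternative from one compact set and initial condition to another. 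The hard part is precisely this uniformization: upgrading \emph{pointwise} strict positivity of $\dens_{\time}$ to \emph{uniform} lower bounds over compact excursion sets, so that the coupling estimates of part~(2) and the comparison of mean return times in part~(1) do not degenerate as one varies the starting point or the compact target.
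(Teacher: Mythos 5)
The paper does not actually prove \cref{thm:rec-trans}; it cites \citet[Proposition~3.1]{Bha78} and uses the statement as a black box. So there is no in-paper proof to compare against, and your task reduces to whether your sketch is a sound route to the result in its own right.

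Your overall template---uniform ellipticity $\Rightarrow$ strictly positive transition densities (Aronson) $\Rightarrow$ Lebesgue-irreducibility, and then strong Markov regeneration at returns to a compact set plus parabolic Harnack to make the minorization uniform---is exactly the standard machinery behind such dichotomies and is consistent with the approach in Bhattacharya's work and its descendants. The chaining bound $\exwrt{\stoch}{\stoptime_{\cpt}}\leq\exwrt{\stoch}{\stoptime_{\cpt_{0}}}+\sup_{\alt\stoch\in\cpt_{0}}\exwrt{\alt\stoch}{\stoptime_{\cpt}}$ via the strong Markov property is the right device for transferring the positive-vs.-null alternative between compacts, and you correctly flag uniformization (pointwise to uniform density lower bounds on compacts) as the technical crux.

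Two points to tighten. First, ``Kolmogorov's $0$--$1$ law'' is not the right citation: that law governs tail events of \emph{independent} sequences, whereas here you need tail-triviality for the irreducible Markov process itself. The correct tool is the Liouville-type statement that bounded harmonic functions for a Lebesgue-irreducible diffusion are constant (equivalently, the invariant $\sigma$-field is trivial), or the standard $\phi$-irreducibility dichotomy for Harris chains. Second, in part~(2) the step from ``positive escape probability from $\alt\stoch$'' to ``almost-sure escape from $\alt\stoch$'' is not immediate from Borel--Cantelli alone; it is precisely the tail-triviality/irreducibility argument above that turns a positive probability into probability one, so the two points are really the same gap. Once you substitute that ingredient, the outline goes through and matches the spirit of the cited reference.
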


For a more detailed version of \cref{thm:rec-trans}, we refer the reader to \citet[Proposition~3.1]{Bha78} who, to the best of our knowledge, was the first to state and prove this criterion.
In words, \cref{thm:rec-trans} simply states that, as long as \eqref{eq:SDE} is uniformly elliptic, then it is \emph{either transient or recurrent};
and if it is recurrent, it is \emph{either positive or null recurrent};
no other outcome is possible.
The choice of initialization or compact set in \cref{def:rec-trans} does not matter (so, in particular, $\Stoch$ cannot be transient from some region of $\R^{\nCoords}$ and recurrent from another).
This crisp separation of regimes will play a major role in our analysis, and we will refer to it as the \define{transience\,/\,recurrence dichotomy}.

An important consequence of positive recurrence is that, under uniform ellipticity, $\Stochof{\time}$ admits a unique \define{invariant measure}, that is, a probability measure $\invmeas$ on $\R^{\nCoords}$ such that $\Stochof{\time} \sim \invmeas$ for all $\time\geq\tstart$ whenever $\Stochof{\tstart} \sim \invmeas$.
Importantly, the proviso that $\invmeas$ is a probability measure implies that $\invmeas(\R^{\nCoords}) < \infty$;
if the process is null-recurrent, the semigroup of flows of \eqref{eq:SDE}  still admits an invariant meassure in the sense of \citet{Kha12}, but this measure is no longer finite, \ie $\invmeas(\R^{\nCoords}) = \infty$.
Finally, if the process is transient, \eqref{eq:SDE} does not admit such a measure.

%
%
%

\section{Analysis and results in continuous time}
\label{app:cont}

We now proceed to prove the continuous-time results for \eqref{eq:FTRL-stoch} that we presented in \cref{sec:cont}.

\subsection{Proofs omitted from \cref{sec:gentle}}

We begin with the ``gentle start'' results of \cref{sec:gentle}, which we restate below for convenience.

\Bilinear*

\Quadratic*

\begin{proof}[Proof of \cref{prop:bilinear}]
For our first claim, note that Itô's formula \eqref{eq:Ito} applied to the function $\fn(\point) = \twonorm{\point}^{2}$ under the dynamics \eqref{eq:GDA-stoch} for the game \eqref{eq:bilinear} readily yields the expression
\begin{equation}
d\parens*{\cramped{\twonorm{\Stateof{\time}}^{2}}}
	= 2 \Stateof{\time} \cdot d\Stateof{\time}
		+ \dd\Stateof{\time} \cdot d\Stateof{\time}
	= 2 \diffmat^{2} \dd\time
		+ \diffmat \, \Stateof{\time} \cdot d\brownof{\time}
	\eqstop
\end{equation}
Hence, by \eqref{eq:Dynkin-simple}, we get
\begin{equation}
\exwrt{\init[\point]}{\twonorm{\Stateof{\time}}^{2}}
	= 2 \diffmat^{2} \time
	\eqcomma
\end{equation}
which proves our claim.

For our second claim, consider the hitting time $\stoptime = \inf\setdef{\time>0}{\twonorm{\Stateof{\time}} \leq \radius}$ with $\radius < \twonorm{\init[\point]}$, and assume that $\exof{\stoptime} < \infty$.
Then, by Dynkin's formula (\cref{prop:Dynkin}) applied to $\fn(\point) = \twonorm{\point}^{2}$ and $\stoptime$, we readily get
\begin{equation}
\exwrt{\init[\point]}{\fn(\Stateof{\stoptime})}
	= \fn(\init[\point])
		+  \exwrt*{\init[\point]}{\int_{\tstart}^{\stoptime} 2\diffmat^{2} \dd\timealt}
	= \fn(\init[\point])
		+ 2\diffmat^{2} \exwrt{\init[\point]}{\tau}
	\geq \twonorm{\init[\point]}^{2}
	\eqstop
\end{equation}
However, since $\fn(\Stateof{\stoptime}) = \radius^{2}$ by construction, we readily get $\radius^{2} \geq \twonorm{\init[\point]}^{2}$, a contradiction.
This shows that $\exwrt{\init[\point]}{\stoptime} = \infty$, as asserted.

Finally, for our third claim, it is easy to check that \eqref{eq:GDA-stoch} is uniformly elliptic under the stated assumptions.
Thus, by \cref{thm:rec-trans} and the fact that $\exwrt{\init[\point]}{\stoptime} = \infty$, it follows that $\Stateof{\time}$ cannot be positive recurrent.
By the discussion following \cref{thm:rec-trans}, this implies that $\Stateof{\time}$ does not admit an inveriant measure, so the density $\pdist(\point,\time)$ of $\Stateof{\time}$ does not converge to a limit either.
\end{proof}

\begin{proof}[Proof of \cref{prop:quadratic}]
Under the dynamics \eqref{eq:GDA-stoch} for the game \eqref{eq:quadratic}, each coordinate of $\Stateof{\time}$ evolves as an \acl{OU} process, \viz
\begin{equation}
\label{eq:OU-components}
d\Stateof[\iCoord]{\time}
	= - \Stateof[\iCoord]{\time} \dd\time
		+ \diffmat \dd\brownof[\iCoord]{\time}
	\quad
	\text{for $\iCoord = 1,2$}.
\end{equation}
Since the processes are decoupled, we conclude by standard stochastic analysis arguments \cite[Example~7.4.5]{Kuo06} that
\begin{equation}
\Stateof[\iCoord]{\time}
	= \Stateof[\iCoord]{\tstart} e^{-\time} + \diffmat \int_{\tstart}^{\time} e^{-(\time - \timealt)} \dd\brownof[\iCoord]{\timealt}
	\eqstop
\end{equation}
In turn, by \cite[Theorem~7.4.7]{Kuo06}, this implies that the transition probability kernel of $\Stateof[\iCoord]{\time}$ is given by
\begin{equation}
\label{eq:pdist-OU}
\pdist_{\iCoord}(\point_{\iCoord},\time)
	= \frac{1}{\diffmat\sqrt{\pi(1-e^{-2\time})}}
		\exp\parens*{-\frac{(\point_{\iCoord} - e^{-\time}\point_{\iCoord,0})^{2}}{(1 - e^{-2\time})\diffmat^{2}}}
	\quad
	\text{for $\iCoord=1,2$},
\end{equation}
that is, $\Stateof[\iCoord]{\time}$ follows a Gaussian distribution with mean $\exwrt{\point_{\iCoord,\tstart}}{\Stateof[\iCoord]{\time}} = \point_{\iCoord,\tstart}e^{-\time}$ and variance
\begin{equation}
\exof{X_{\iCoord}^{2}(\time)}
	= \frac{\diffmat^{2}}{2} \parens*{1-e^{-2\time}}
	\eqstop
\end{equation}
Our first and third claims then follow immediately.

For our second claim, note that the infinitesimal generator of $\Stateof{\time}$ is now given by
\begin{equation}
\label{eq:gen-quadratic}
\gen\fn(\point)
	= -\braket{\nabla\fn(\point)}{\point}
		+ \frac{1}{2} \diffmat^{2} \Delta\fn(\point)
	\eqcomma
\end{equation}
where $\Delta\fn \equiv \tr\nabla^{2}\fn$ denotes the Laplacian of $\fn$.
Then, Dynkin's formula (\cref{prop:Dynkin}) applied to $\fn(\point) = \twonorm{\point}^{2}$ at the truncated hitting time $\stoptime_{\radius} \wedge \time \equiv \min\{\stoptime_{\radius},\time\}$, $\time>0$, readily yields
\begin{flalign}
\label{eq:stoptime-quadratic}
\exwrt{\init[\point]}{\twonorm{\Stateof{\stoptime_{\radius}\wedge\time}}^{2}}
	&= \twonorm{\init[\point]}^{2}
		+ \exwrt*{\init[\point]}{\int_{\tstart}^{\stoptime_{\radius}\wedge\time} 2\bracks{\diffmat^{2} - \twonorm{\Stateof{\timealt}}^{2}} \dd\timealt}
	\notag\\
	&\leq \twonorm{\init[\point]}^{2}
		+ \exwrt*{\init[\point]}{\int_{\tstart}^{\stoptime_{\radius}\wedge\time} 2\parens{\diffmat^{2} - \radius^{2}} \dd\timealt}
	\notag\\
	&= \twonorm{\init[\point]}^{2}
		+ 2\parens{\diffmat^{2} - \radius^{2}}\exwrt*{\init[\point]}{\stoptime_{\radius}\wedge\time}
	\eqstop
\end{flalign}
Since $\twonorm{\Stateof{\stoptime_{\radius}\wedge\radius}}^{2} \geq \radius^{2}$ by construction (recall that $\twonorm{\init[\point]} > \radius$), we get
\begin{equation}
\exwrt{\init[\point]}{\stoptime_{\radius}\wedge\time}
	\leq \frac{\twonorm{\init[\point]}^{2} - \radius^{2}}{2(\radius^{2} - \diffmat^{2})}
	\quad
	\text{for all $\time>0$}.
\end{equation}
Since $\exwrt{\init[\point]}{\stoptime_{\radius}\wedge\time}$ is uniformly bounded, our claim follows by taking the limit $\time\to\infty$ (so $\stoptime_{\radius}\wedge\time \to \stoptime_{\radius}$ pointwise), and invoking the dominated convergence theorem.
\end{proof}

\subsection{General properties of the dynamics \eqref{eq:FTRL-stoch}}

We now proceed to establish the properties of the stochastic dynamics \eqref{eq:FTRL-stoch} in the general case, for null- and strongly monotone games respectively.
Before doing so, we begin with a result of a book-keeping nature (which is, however, necessary to ensure that the ensuing questions are meaningful).

\begin{proposition}
\label{prop:well-posed}
Suppose that $\diffmat$ is Lipschitz continuous.
Then, for every initial condition $\Stateof{\tstart} \gets \init[\point] = \mirror(\init[\dpoint]) \in \points$, the dynamics \eqref{eq:FTRL-stoch} admit a unique strong solution that exists for all time.
\end{proposition}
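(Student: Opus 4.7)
The plan is to recast the entire system as an autonomous Itô SDE on the dual space $\dpoints$ and then apply a classical existence/uniqueness theorem for SDEs with globally Lipschitz coefficients. Specifically, substituting $\Stateof{\time} = \mirror(\Scoreof{\time})$ into the drift and diffusion of \eqref{eq:FTRL-stoch}\textendash\eqref{eq:mart}, I would first rewrite the dynamics as
\begin{equation*}
d\Scoreof{\time}
    = b(\Scoreof{\time})\,\dd\time + \sigma(\Scoreof{\time})\,\dd\brownof{\time},
    \qquad
    b \defeq \payfield\circ\mirror,
    \quad
    \sigma \defeq \diffmat\circ\mirror,
\end{equation*}
so that the primal variable $\Stateof{\time} = \mirror(\Scoreof{\time})$ is determined algebraically from $\Scoreof{\time}$. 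Because $\im\mirror \subseteq \points$, the process $\Stateof{\time}$ automatically takes values in $\points$ at all times, so well-posedness reduces to well-posedness of the dual SDE on $\dpoints$.

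Next, I would verify that $b$ and $\sigma$ are \emph{globally} Lipschitz on $\dpoints$. This uses three ingredients already on the table: (i) $\payfield$ is $\smooth$-Lipschitz on $\points$ (coming from the assumed $\smooth$-smoothness of each $\pay_{\play}$); (ii) $\mirror$ is $(1/\hstr)$-Lipschitz on $\dpoints$ by \cref{prop:mirror}(e); and (iii) $\diffmat$ is Lipschitz on $\points$ by hypothesis. Composition then yields
\begin{equation*}
\dnorm{b(\dpoint') - b(\dpoint)} \leq (\smooth/\hstr)\,\dnorm{\dpoint' - \dpoint},
\qquad
\norm{\sigma(\dpoint') - \sigma(\dpoint)}_{\mathrm{op}} \leq (\lips_{\diffmat}/\hstr)\,\dnorm{\dpoint' - \dpoint},
\end{equation*}
for all $\dpoint,\dpoint' \in \dpoints$, where $\lips_{\diffmat}$ is the Lipschitz modulus of $\diffmat$. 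Moreover, since $\diffmat$ is bounded (standing assumption) and $b$ is globally Lipschitz, both coefficients satisfy the standard linear growth bound $\dnorm{b(\dpoint)}^{2} + \norm{\sigma(\dpoint)}_{\mathrm{op}}^{2} \leq \Const(1 + \dnorm{\dpoint}^{2})$.

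With globally Lipschitz and linearly growing coefficients, I would then invoke the classical strong existence/uniqueness theorem for Itô SDEs (e.g., \citealp[Thm.~5.2.1]{Oks13} or \citealp[Thm.~4.4.2]{Kuo06}): for every deterministic initial condition $\Scoreof{\tstart} = \init[\dpoint]\in\dpoints$, there exists a pathwise unique, continuous, $\filter_{\time}$-adapted strong solution $\Scoreof{\time}$ defined for all $\time\geq\tstart$, with non-explosion automatic from the linear growth bound. Finally, defining $\Stateof{\time} \defeq \mirror(\Scoreof{\time})$ gives a continuous $\points$-valued process, unique up to indistinguishability, starting at $\init[\point] = \mirror(\init[\dpoint])$, which is the desired solution of \eqref{eq:FTRL-stoch}.

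The only mildly delicate point\textemdash which is not really an obstacle\textemdash is checking that the \emph{primal-side} Lipschitz constants of $\payfield$ and $\diffmat$ translate, via $\mirror$, into \emph{dual-side} Lipschitz control of the SDE coefficients; this is exactly what \cref{prop:mirror}(e) accomplishes, and it is the reason the strong convexity modulus $\hstr$ of the regularizer enters the Lipschitz constants. No compactness of $\points$ is needed, and no local-to-global extension argument is required.
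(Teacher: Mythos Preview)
Your proposal is correct and follows essentially the same approach as the paper: recast \eqref{eq:FTRL-stoch} as an autonomous SDE on $\dpoints$, observe that the composed drift $\payfield\circ\mirror$ and diffusion $\diffmat\circ\mirror$ are globally Lipschitz (via \cref{prop:mirror}(e) and the standing smoothness assumptions), and invoke \cite[Thm.~5.2.1]{Oks13}. Your version simply spells out the Lipschitz constants and the linear growth bound in more detail than the paper does.
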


\begin{proof}
Note that the dynamics \eqref{eq:FTRL-stoch} can be recast in fully autonomous form as
\begin{equation}
\label{eq:FTRL-stoch-dual}
d\Scoreof{\time}
	= \payfield(\mirror(\Scoreof{\time})) \dd\time
		+ \diffmat(\mirror(\Scoreof{\time})) \cdot d\brownof{\time}
	\eqstop
\end{equation}
Note further that $\payfield$, $\diffmat$ and $\mirror$ are all Lipschitz, by our standing assumptions for the game, our assumptions here, and \cref{prop:mirror} respectively.
In turn, this implies that the compositions $\tilde\payfield = \payfield \circ \mirror$ and $\tilde\diffmat = \diffmat\circ\mirror$ are likewise Lipschitz continuous, so our claim follows from the existence and uniqueness theorem for \acp{SDE} with Lipschitz data, see \eg \cite[Theorem~5.2.1]{Oks13}.
\end{proof}

Our next result is an ancillary calculation responsible for much of the heavy lifting in the upcoming analysis.

\begin{proposition}
\label{prop:energy}
Fix a base point $\base\in\points$ and consider the energy function
\PM{Must fix constants here, note $\vdim$ below.}
\begin{equation}
\label{eq:energy}
\energy(\dpoint)
	\defeq \fench(\base,\dpoint)
	= \hreg(\base)
		+ \hconj(\dpoint)
		- \braket{\dpoint}{\base}
	\quad
	\text{for $\dpoint\in\dpoints$}.
\end{equation}
Then, for every stopping time $\stoptime\geq\tstart$, we have
\begin{flalign}
\label{eq:energy-bound}
\energy(\Scoreof{\stoptime}) - \energy(\Scoreof{\tstart})
	&\leq \int_{\tstart}^{\stoptime} \braket{\payfield(\Stateof{\timealt})}{\Stateof{\timealt} - \base} \dd\timealt
	+ \frac{\diffmat_{\max}^{2}}{2\hstr} \stoptime
	\notag\\
	&\qquad
		+ \int_{\tstart}^{\stoptime} (\Stateof{\timealt} - \base)^{\top} \diffmat(\Stateof{\timealt}) \dd\brownof{\timealt}
	\eqstop
\end{flalign}
If, in particular, $\mirror$ is smooth, we have
\begin{flalign}
\label{eq:energy-smooth}
\energy(\Scoreof{\stoptime}) - \energy(\Scoreof{\tstart})
	&= \int_{\tstart}^{\stoptime} \braket{\payfield(\Stateof{\timealt})}{\Stateof{\timealt} - \base} \dd\timealt
	\notag\\
	&\qquad
		+ \frac{1}{2} \int_{\tstart}^{\stoptime} \trof{\qmat(\Stateof{\timealt}) \, \Jac\mirror(\Scoreof{\timealt})} \dd\timealt
	\notag\\
	&\qquad
		+ \int_{\tstart}^{\stoptime} (\Stateof{\timealt} - \base)^{\top} \diffmat(\Stateof{\timealt}) \dd\brownof{\timealt}
	\eqstop
\end{flalign}
\end{proposition}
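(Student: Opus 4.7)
The plan is to apply Itô's formula to $\energy(\Scoreof{\time})$, viewed as a function of the dual score process. The three key ingredients are: (i) the dual representation of the dynamics, $d\Scoreof{\time} = \payfield(\Stateof{\time}) \dd\time + \diffmat(\Stateof{\time}) \dd\brownof{\time}$, which is immediate from \eqref{eq:FTRL-stoch} once one reads the update for $\Scoreof{\time}$ in integral form; (ii) the identity $\nabla_{\dpoint} \energy(\dpoint) = \mirror(\dpoint) - \base$ from \cref{lem:dFench}, so the gradient of $\energy$ evaluated along the trajectory is precisely $\Stateof{\time} - \base$; and (iii) the regularity of $\energy$, which inherits both convexity and $(1/\hstr)$-Lipschitz smoothness from $\hconj$ by \cref{prop:mirror}(e).

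For the smooth case, $\mirror \in C^{1}$ is equivalent to $\hconj \in C^{2}$, so $\nabla^{2}_{\dpoint} \energy(\dpoint) = \Jac\mirror(\dpoint)$. I would then apply the classical Itô formula \eqref{eq:Ito} to $\energy \circ \Scoreof{\cdot}$ to produce
\begin{equation*}
d\energy(\Scoreof{\time}) = \braket{\Stateof{\time} - \base}{\payfield(\Stateof{\time})} \dd\time + \tfrac{1}{2} \trof{\Jac\mirror(\Scoreof{\time}) \, \qmat(\Stateof{\time})} \dd\time + (\Stateof{\time} - \base)^{\top} \diffmat(\Stateof{\time}) \dd\brownof{\time}.
\end{equation*}
Integrating from $\tstart$ to a bounded stopping time $\stoptime$ (in the general case, truncating by $\stoptime \wedge \nRuns$ and passing to the limit $\nRuns \to \infty$ via monotone / dominated convergence) then recovers the identity \eqref{eq:energy-smooth}.

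In the general case $\energy$ need not be $C^{2}$, so the strong Itô formula is unavailable. Instead, I would appeal to the \emph{weak} Itô formula \eqref{eq:Ito-weak} for convex, $\lips$-Lipschitz smooth functions with $\lips = 1/\hstr$. The linear and stochastic pieces are unchanged because $\nabla\energy$ is globally defined by \cref{lem:dFench}; the second-order correction is now only an upper bound of the form $\tfrac{1}{2\hstr}\int_{\tstart}^{\stoptime} \trof{\diffmat(\Stateof{\timealt})\diffmat(\Stateof{\timealt})^{\top}} \dd\timealt$, which I would control via the eigenvalue estimate $\trof{\qmat(\point)} \leq \diffmat_{\max}^{2}$ from \eqref{eq:sigma-min} (modulo the dimensional factor flagged in the margin comment). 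Assembling the three pieces yields the stated inequality \eqref{eq:energy-bound}.

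The main technical obstacle is the non-smooth case: one must ensure that the deterministic Lipschitz-smoothness bound on $\hconj$ can be lifted to an integrated inequality along the stochastic trajectory, which is exactly the content of \cite[Prop.~C.2]{MerSta18} and therefore reduces to a direct citation once the setup matches. A secondary issue is the extension from bounded to general stopping times: standard localization along $\stoptime \wedge \nRuns$, combined with the boundedness of $\diffmat$ (which guarantees that the stochastic integral is a genuine martingale on each $[\tstart, \stoptime \wedge \nRuns]$ with zero expectation) and monotone convergence on the drift integrals, transports both the identity and the inequality to $\stoptime$.
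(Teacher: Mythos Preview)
Your proposal is correct and follows essentially the same approach as the paper: compute $\nabla\energy = \mirror(\cdot) - \base$ via \cref{lem:dFench}, apply the classical Itô formula in the $C^{2}$ case to obtain \eqref{eq:energy-smooth}, and fall back on the weak Itô inequality \eqref{eq:Ito-weak} (citing \cite[Prop.~C.2]{MerSta18}) with $\lips = 1/\hstr$ in the merely Lipschitz-smooth case. You even correctly flag the dimensional discrepancy in the trace bound that the paper's margin comment acknowledges; your additional remarks on localizing via $\stoptime\wedge\nRuns$ are more careful than what the paper spells out, but not a different route.
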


\begin{proof}
Assume first that $\mirror$ is $C^{1}$-smooth;
In this case, by \cref{lem:dFench}, we have $\nabla \fench(\base,\dpoint) = \mirror(\dpoint) - \base$, and hence,
\begin{equation}
\label{eq:ddFench}
\nabla^{2} \energy(\dpoint)
	= \nabla^{2} \hconj(\dpoint)
	= \Jac\mirror(\dpoint)
	\eqstop
\end{equation}
Thus, by Itô's formula \eqref{eq:Ito}, we readily get
\begin{flalign}
d\energy(\Scoreof{\time})
	&= (\Stateof{\time} - \base) \cdot d\Scoreof{\time}
		+ \frac{1}{2} \trof{\diffmat^{\top}(\Stateof{\time}) \, \nabla^{2}\energy(\Scoreof{\time}) \, \diffmat(\Stateof{\time})} \dd\time
	\notag\\
	&= \braket{\payfield(\Stateof{\time})}{\Stateof{\time} - \base} \dd\time
	\notag\\
		&\qquad
		+ \frac{1}{2} \trof{\qmat(\Stateof{\time}) \, \Jac\mirror(\Scoreof{\time})} \dd\time
	\notag\\
		&\qquad
		+ (\Stateof{\time} - \base)^{\top} \diffmat(\Stateof{\time}) \dd\brownof{\time}
\end{flalign}
so \eqref{eq:energy-smooth} follows.

Now, if $\mirror$ is not smooth, \cref{prop:mirror} shows that it is still $(1/\hstr)$-Lipschitz continuous, which, equivalently, means that $\hconj$ is $(1/\hstr)$-Lipschitz smooth.
Thus, \eqref{eq:energy-bound} follows by the weak Itô formula for Lipschitz smooth functions \eqref{eq:Ito-weak} applied to $\hconj$, and noting that
\(
\trof{\diffmat(\point)\diffmat(\point)^{\top}}
	\leq \vdim \diffmat_{\max}^{2}.
\)
\end{proof}

\subsection{The null-monotone case}
We begin our analysis proper with our result for null-monotone games, which we restate below for convenience.

\NullCont*

\begin{proof}
We start with the study of $\exwrt{\init[\point]}{\stoptime_{\thres}^{-}}$, where we argue by contradiction.
Specifically, let $\eq$ be an equilibrium of $\game$, and assume that $\exwrt{\init[\point]}{\stoptime_{\thres}^{-}} < \infty$.
Then, by applying Dynkin's formula to the energy function $\energy(\dpoint)$ at $\stoptime_{\thres}^{-}$ for $\base\gets\eq$ (\cf \cref{prop:Dynkin,prop:energy}), we readily get
\begin{flalign}
\label{eq:stop-decrease}
\exwrt{\init[\point]}{\energy(\Scoreof{\stoptime_{\thres}^{-}})}
	&= \energy(\init[\dpoint])
		+ \exwrt*{\init[\point]}{
			\int_{\tstart}^{\stoptime_{\thres}^{-}}
				\parens*{
				\braket{\payfield(\Stateof{\timealt})}{\Stateof{\timealt} - \eq}
				+ \tfrac{1}{2} \trof{\qmat(\Stateof{\timealt}) \, \Jac\mirror(\Scoreof{\timealt})}
				}
			\dd\timealt}
	\notag\\
	&= \init[\fench]
		+ \frac{1}{2} \exwrt*{\init[\point]}{
			\int_{\tstart}^{\stoptime_{\thres}^{-}}
				\trof{\qmat(\Stateof{\timealt}) \, \Jac\mirror(\Scoreof{\timealt})}
			\dd\timealt}
	\explain{by null monotonicity}
	\\
	&\geq \init[\fench]
\end{flalign}
where the last line follows from the fact that $\qmat$ and $\Jac\mirror$ are both positive-semidefinite.
However, since $\exwrt{\init[\point]}{\energy(\Scoreof{\stoptime_{\thres}^{-}})} = \init[\fench] - \thres$ by the definition of $\stoptime_{\thres}^{-}$, we get $\init[\fench] - \thres \geq \init[\fench]$, a contradiction which establishes our claim.

Since $\diffmat_{\min} > 0$, we further conclude that $\Scoreof{\time}$ is uniformly elliptic.
Thus, for any compact set $\cpt\subseteq\setdef{\dpoint\in\dpoints}{\fench(\eq,\dpoint) \leq \init[\fench] - \thres}$, the hitting time $\stoptime_{\cpt} = \inf\setdef{\time>\tstart}{\Scoreof{\time}\in\cpt}$ will be infinite on average (because $\exwrt{\init[\point]}{\stoptime_{\cpt}} \geq \exwrt{\init[\point]}{\stoptime_{\thres}^{-}} = \infty$), so, by \cref{thm:rec-trans}, $\Scoreof{\time}$ cannot be positive recurrent.
In turn, this implies that $\Scoreof{\time}$ does not admit an invariant probability measure on $\dpoints$, which proves our claim.

Finally, for the second part of \eqref{eq:hit-null-cont}, applying Dynkin's formula to the energy function $\energy(\dpoint)$ for $\base\gets\eq$ at the truncated hitting times $\stoptime_{\thres}^{+}\wedge\time$, $\time>\tstart$, we get:
\begin{flalign}
\exwrt{\init[\point]}{\energy(\Scoreof{\stoptime_{\thres}^{+}}\wedge\time)}
	&= \energy(\init[\dpoint])
		+ \exwrt*{\init[\point]}{
			\int_{\tstart}^{\stoptime_{\thres}^{+}\wedge\time}
				\parens*{
				\braket{\payfield(\Stateof{\timealt})}{\Stateof{\timealt} - \eq}
				+ \tfrac{1}{2} \trof{\qmat(\Stateof{\timealt}) \, \Jac\mirror(\Scoreof{\timealt})}
				}
			\dd\timealt}
	\notag\\
	&= \init[\fench]
		+ \frac{1}{2} \exwrt*{\init[\point]}{
			\int_{\tstart}^{\stoptime_{\thres}^{+}\wedge\time}
				\trof{\qmat(\Stateof{\timealt}) \, \Jac\mirror(\Scoreof{\timealt})}
			\dd\timealt}
	\explain{by null monotonicity}
	\notag\\
	&\geq \init[\fench]
		+ \frac{\diffmat_{\min}^{2}}{2} \exwrt*{\init[\point]}{
			\int_{\tstart}^{\stoptime_{\thres}^{+}\wedge\time}
				\trof{\Jac\mirror(\Scoreof{\timealt})}
			\dd\timealt}
\label{eq:Fench-lower}
\end{flalign}
where the last line follows from the estimate
\begin{flalign}
\trof{\qmat \Jac\mirror}
	&= \trof{(\Jac\mirror)^{1/2}\qmat(\Jac\mirror)^{1/2}}
	\notag\\
	&= (1,\dotsc,1) \cdot (\Jac\mirror)^{1/2}\qmat(\Jac\mirror)^{1/2} \cdot (1,\dotsc,1)^{\top}
	\notag\\
	&\geq \diffmat_{\min}^{2} (1,\dotsc,1) \cdot (\Jac\mirror)^{1/2} \cdot (\Jac\mirror)^{1/2} \cdot (1,\dotsc,1)^{\top}
	\notag\\
	&= \diffmat_{\min}^{2} \trof{\Jac\mirror}
	\eqstop
\end{flalign}
By the assumptions of the theorem (smooth $\mirror$ and interior initialization), it follows that the (necessarily compact) set $\domain_{\thres} \defeq \setdef{\point=\mirror(\dpoint)}{\fench(\eq,\dpoint) \leq \init[\fench]+\eps}$ is contained in the relative interior $\relint\points$ of $\points$.
In turn, this implies that $\kappa \equiv \kappa_{\thres} \defeq \min\setdef{\trof{\Jac\mirror(\dpoint)}}{\fench(\eq,\dpoint) \leq \init[\fench] + \thres} > 0$, so \eqref{eq:Fench-lower} becomes
\begin{equation}
\init[\fench] + \thres
	\geq \init[\fench] + \frac{1}{2} \kappa \diffmat_{\min}^{2} \exwrt{\init[\point]}{\stoptime_{\thres}^{+}\wedge\time}
\end{equation}
and hence
\begin{equation}
\exwrt{\init[\point]}{\stoptime_{\thres}^{+}\wedge\time}
	\leq \frac{2\thres}{\kappa \diffmat_{\min}^{2}}
	\quad
	\text{for all $\time\geq\tstart$}.
\end{equation}
This shows that $\exwrt{\init[\point]}{\stoptime_{\thres}^{+}\wedge\time}$ is uniformly bounded, so the upper bound in \eqref{eq:hit-null-cont} follows by letting $\time\to\infty$ (which implies $\stoptime_{\thres}^{+}\wedge\time \to \stoptime_{\thres}^{+}$ pointwise), and invoking the dominated convergence theorem.
\end{proof}

\subsection{The strongly monotone case}

We now turn to our main result for strongly monotone games.
Our proof strategy draws on methods related to the analysis of \eqref{eq:FTRL-stoch} in the context of convex minimization, as explored by \cite{MerSta18}, and incorporating ideas that can be traced back to \cite{Imh05}.

For convenience, we begin by restating \cref{thm:strong-cont}.

\StrongCont*

\begin{proof}
Our proof proceeds along the following basic steps:
\begin{enumerate}
[\itshape Step 1.]
\item
Deriving an estimate for the mean hitting time $\exwrt{\init[\point]}{\stoptime_{\radius}}$.
\item
Descending to a restricted process $\Essof{\time}$ where any redundant degrees of freedom in $\Scoreof{\time}$ have been ``modded out''.
\item
Showing that the restricted process is positive recurrent.
\item
Estimating the resulting invariant distribution and pushing the result forward to $\Stateof{\time}$.
\end{enumerate}
In what follows, we go through the steps outlined above, one at a time.

\para{Step 1: Estimating the hitting time}

We begin with the hitting time estimate \eqref{eq:hit-strong-cont}.
To that end, setting $\base \gets \eq$ in \cref{prop:energy}, we get
\begin{flalign}
\energy(\Scoreof{\stoptime}) - \energy(\init[\dpoint])
	&= \int_{\tstart}^{\stoptime} \braket{\payfield(\Stateof{\timealt})}{\Stateof{\timealt} - \eq} \dd\timealt
		+ \frac{1}{2\hstr} \int_{\tstart}^{\stoptime} \trof{\qmat(\Stateof{\timealt})} \dd\timealt
	\notag\\
	&\qquad
		+ \int_{\tstart}^{\stoptime} (\Stateof{\timealt} - \eq)^{\top} \diffmat(\Stateof{\timealt}) \dd\brownof{\timealt}
	\notag\\
	&\leq - \strong \int_{\tstart}^{\stoptime} \norm{\Stateof{\timealt} - \eq}^{2} \dd\timealt
		+ \frac{\diffmat_{\max}^{2}}{2\hstr} \stoptime
		+ \martof{\stoptime}
	\eqstop
\end{flalign}
where we set
\begin{equation}
\martof{\time}
	= \int_{\tstart}^{\time} (\Stateof{\timealt} - \eq)^{\top} \diffmat(\Stateof{\timealt}) \dd\brownof{\timealt}
	\eqstop
\end{equation}
Thus, by a quick rearrangement, we obtain
\begin{equation}
\strong \int_{\tstart}^{\stoptime} \norm{\Stateof{\timealt} - \eq}^{2} \dd\timealt
	\leq \energy(\init[\dpoint]) - \energy(\Scoreof{\stoptime})
		+ \frac{\diffmat_{\max}^{2}\stoptime}{2\hstr}
		+ \martof{\stoptime}
\end{equation}
and hence, with $\energy\geq0$:
\begin{equation}
\label{eq:Fench-quadratic}
\int_{\tstart}^{\stoptime} \norm{\Stateof{\timealt} - \eq}^{2} \dd\timealt
	\leq \frac{\init[\fench]}{\strong}
		+ \frac{\diffmat_{\max}^{2}\stoptime}{2\strong\hstr}
		+ \frac{\martof{\stoptime}}{\strong}
	\eqstop
\end{equation}
Thus, applying the above to the truncated hitting time $\stoptime \gets \stoptime_{\radius} \wedge \time \equiv \min\{\stoptime_{\radius},\time\}$, $\time>\tstart$, we get
\begin{flalign}
\radius^{2} \, \exwrt{\init[\point]}{\stoptime_{\radius}\wedge\time}
	&\leq \exwrt*{\init[\point]}{\int_{\tstart}^{\stoptime_{\radius}\wedge\time} \norm{\Stateof{\timealt} - \eq}^{2} \dd\timealt}
	\explain{b/c $\norm{\Stateof{\timealt} - \eq} \geq \radius$ for $\timealt \leq \stoptime_{\radius}\wedge\time$}
	\\
	&\leq \frac{\init[\fench]}{\strong}
		+ \radius_{\diffmat}^{2} \exwrt{\init[\point]}{\stoptime_{\radius}\wedge\time}
		+ \frac{1}{\strong} \exwrt{\init[\point]}{\martof{\stoptime_{\radius}\wedge\time}}
	\eqstop
\end{flalign}
Since $\stoptime_{\radius}\wedge\time \leq \time$ is uniformly bounded, we will have $\exwrt{\init[\point]}{\martof{\stoptime_{\radius}\wedge\time}} = \exof{\martof{\tstart}} = 0$ by the optional sampling theorem for continuous-time martingales \cite[Theorem~3.22]{KS98}.
Thus, a simple rearrangement gives
\begin{equation}
\exwrt*{\init[\point]}{\stoptime_{\radius}\wedge\time}
	\leq \frac{\init[\fench]/\strong}{\radius^{2} - \radius_{\diffmat}^{2}}
	\quad
	\text{for all $\time\geq\tstart$}.
\end{equation}
This shows that $\exwrt{\init[\point]}{\stoptime_{\radius}\wedge\time}$ is uniformly bounded, so the bound \eqref{eq:hit-strong-cont} follows by letting $\time\to\infty$ (which implies $\stoptime_{\radius}\wedge\time \to \stoptime_{\radius}$ pointwise), and invoking the dominated convergence theorem.

\para{Step 2: Descending to the restricted process}

We now proceed to examine the recurrence properties of $\Stateof{\time}$.
To that end, note first that the assumption $\diffmat_{\min} > 0$ directly implies that \eqref{eq:FTRL-stoch} is uniformly elliptic in the sense discussed in \cref{app:stoch}.
As such, consider the set
\begin{equation}
\domain_{\radius}
	\defeq \mirror^{-1}(\ball_{\radius}(\eq))
	= \setdef{\dpoint\in\dpoints}{\norm{\mirror(\dpoint) - \eq} \leq \radius}
	\eqstop
\end{equation}
and note that
\begin{equation}
\stoptime_{\radius}
	= \inf\setdef{\time>\tstart}{\Stateof{\time} \in \ball_{\radius}(\eq)}
	= \inf\setdef{\time>\tstart}{\Scoreof{\time} \in \domain_{\radius}}
\end{equation}
so $\Scoreof{\time}$ is positive recurrent relative to $\domain_{\radius}$.
Thus, if $\domain_{\radius}$ is compact, \cref{thm:rec-trans} immediately shows that $\Scoreof{\time}$ is positive recurrent, and hence admits an invariant measure $\invmeas$ on $\dpoints$.
In general however, $\domain_{\radius}$ \emph{need not be} compact, so we cannot conclude that $\Scoreof{\time}$ is recurrent from the fact that it hits $\domain_{\radius}$ in finite time on average.

To circumvent this difficulty, we will consider a ``restricted'' process which \emph{is} positive recurrent, while retaining all information present in $\Scoreof{\time}$.
The main idea here will be to ``collapse'' the fibers of $\mirror$, that is, those directions in $\dpoints$ which map to the same point in $\points$ under $\mirror$:
since the dynamics \eqref{eq:FTRL-stoch} factor through $\Stateof{\time} = \mirror(\Scoreof{\time})$, these directions carry no relevant information, so they can be effectively discarded.

To carry all this out, let $\tanvecs$ denote the \define{``tangent hull''} of $\points$ in $\ambient$, \viz
\begin{equation}
\label{eq:tanhull}
\tanvecs
	\defeq \aff(\points - \points)
	= \setdef{\tanvec\in\ambient}{\point + t\tanvec\in\points \; \text{for all sufficiently small $t>0$ and all $\point\in\relint\points$}}
	\eqstop
\end{equation}
In words, $\tanvecs$ is the smallest subspace of $\ambient$ which contains $\points$ when the latter is translated to the origin so, by construction, $\points$ is full-dimensional when viewed as a subset of $\tanvecs$.%
\footnote{Specifically, unless $\points$ is a singleton, it has nonempty topological interior when viewed as a subset of $\tanvecs$.}
In this sense, $\tanvecs$ contains all the ``essential'' directions of motion of the problem.

Dually to the above, we also consider the corresponding dual space $\esspoints \equiv \dual[\tanvecs]$ of $\tanvecs$;
this is not a subspace of $\dpoints$, but there exists a canonical surjection $\essmap\from\dpoints\surjects\esspoints$ defined by restricting the action of $\dpoint\in\dpoints$ to $\tanvecs$, that is,
\begin{equation}
\label{eq:essmap}
\braket{\essmap(\dpoint)}{\tanvec}
	= \braket{\dpoint}{\tanvec}
	\quad
	\text{for all $\tanvec\in\tanvecs$}.
\end{equation}
The kernel of $\essmap$ is precisely the \define{annihilator} $\annof{\tanvecs}$ of $\tanvecs$, \ie
\begin{equation}
\ker\essmap
	= \annof{\tanvecs}
	\equiv \setdef{\dvec\in\dpoints}{\braket{\dvec}{\tanvec} = 0 \; \text{for all $\tanvec\in\tanvecs$}}
\end{equation}
so, by the first isomorphism theorem, we get a canonical identification $\dual[(\ambient/\tanvecs)] \cong \ker\essmap$.

The main reason for descending from $\dpoints$ to $\esspoints$ is the following:
in the original space $\dpoints$, we have $\mirror(\dpoint+\dvec) = \mirror(\dpoint)$ whenever $\dvec$ annihilates $\tanvecs$, \cf \cref{prop:mirror}.
As a result, the inverse image of \emph{any} compact subset of $\points$ under $\mirror$ will always contain a copy of $\annof{\tanvecs}$, so it can never be compact itself.
By contrast, by ``modding out'' $\annof{\tanvecs}$ and descending to the restricted space $\esspoints$, this is no longer the case.

To move forward, consider the \define{restricted mirror map} $\essmirror\from\esspoints\to\points$ given by
\begin{equation}
\label{eq:essmirror}
\essmirror(\esspoint)
	= \mirror(\dpoint)
	\quad
	\text{whenever $\essmap(\dpoint) = \esspoint$}.
\end{equation}
By the last item of \cref{prop:mirror} we have $\mirror(\dpoint) = \mirror(\dpoint + \dvec)$ whenever $\dvec\in\annof{\tanvecs}$;
this means that the choice of representative in \eqref{eq:essmirror} does not matter, so $\essmirror$ is well-defined.
Accordingly, letting
\begin{equation}
\label{eq:score-ess}
\Essof{\time}
	= \essmap(\Scoreof{\time})
\end{equation}
and applying $\essmap$ to \eqref{eq:FTRL-stoch} yields the ``restricted'' dynamics
\begin{equation}
\label{eq:FTRL-stoch-ess}
d\Essof{\time}
	= d(\essmap \cdot \Scoreof{\time})
	= \essmap\cdot\payfield(\Stateof{\time}) \dd\time
		+ \essmap\cdot\diffmat(\Stateof{\time}) \cdot d\brownof{\time}
\end{equation}
where $\Stateof{\time} = \mirror(\Scoreof{\time}) = \essmirror(\Essof{\time})$ and, in a slight abuse of notation, we are overloading the symbol $\essmap$ to denote both the linear map $\essmap\from\dpoints\to\esspoints$ and its representation as a matrix.
These dynamics represent a time-homogeneous \ac{SDE} in terms of $\Essvar$, and they will be our main object of study in the rest of our proof.

\para{Step 3: Positive recurrence of the restricted process}

With all this in hand, positive recurrence for the restricted process $\Essof{\time}$ boils down to the following:
\begin{enumerate*}
[\itshape a\upshape)]
\item
verifying that the infinitesimal generator of $\Essvar$ is uniformly elliptic;
and
\item
showing that the mean time required for $\Essof{\time}$ to reach some compact set of $\esspoints$ is finite.
\end{enumerate*}

We begin by establishing uniform ellipticity.
In view of \eqref{eq:FTRL-stoch-ess}, the principal symbol \eqref{eq:symbol} of the infinitesimal generator of $\Essof{\time}$ is
\begin{equation}
\princ
	= (\essmap\cdot\diffmat) \cdot (\essmap\cdot\diffmat)^{\top}
	= \essmap \diffmat \diffmat^{\top} \essmap^{\top}
	= \essmap \qmat \essmap^{\top}
	\eqstop
\end{equation}
Since $\qmat \mgeq \diffmat_{\min}^{2} \eye$, we readily get
\begin{equation}
\princ
	\mgeq \diffmat_{\min}^{2} \essmap\essmap^{\top}
	\mgeq \diffmat_{\min}^{2} \pi_{\min}^{2} \eye
\end{equation}
with $\diffmat_{\min} > 0$ (by assumption) and $\pi_{\min} \defeq \lambda_{\min}(\essmap\essmap^{\top}) > 0$ (because $\essmap$ is surjective, so it has full rank).
This shows that the principal symbol $\essmap\covmat\essmap^{\top}$ of the generator of $\Essvar$ is uniformly positive-definite, so $\Essvar$ is itself uniformly elliptic.

For the second component of our proof of positive recurrence, recall that $\eq\in\relint\points$, so there exists some sufficiently small $\radius>0$ such that the (compact convex) set
\begin{equation}
\cpt_{\radius}
	\defeq \ball_{\radius} \cap \points
	= \setdef{\point\in\points}{\norm{\point - \eq} \leq \radius}
\end{equation}lies in its entirety within $\relint\points$.
We then claim that the inverse image
\begin{equation}
\essdomain_{\radius}
	\defeq \essmirror^{-1}(\cpt_{\radius})
	= \setdef{\esspoint\in\esspoints}{\norm{\essmirror(\esspoint) - \eq} \leq \radius}
\end{equation}
of $\cpt_{\radius}$ under the restricted mirror map $\essmirror$ is compact.
To see this, note first that $\essdomain_{\radius} = \subd\hreg(\cpt_{\radius})$ by \cref{prop:mirror}.%
\footnote{Strictly speaking, we are viewing here $\subd\hreg$ as taking values in $\esspoints$ instead of $\dpoints$;
this is a simple matter of identifying $\hreg\from\ambient\to\R$ with its canonical restriction to $\tanvecs\subseteq\ambient$.}
Thus, given that $\cpt_{\radius}$ is a convex body in $\tanvecs$ that is entirely contained in the (relative) interior of the prox-domain $\proxdom$ of $\hreg$ (because $\relint\points \subseteq \dom\subd\hreg \equiv \proxdom$), it follows that $\subd\hreg(\cpt_{\radius})$ is itself compact by the upper hemicontinuity of $\subd\hreg$ \cite[Remark~6.2.3]{HUL01}.

To conclude, note that
\begin{flalign}
\label{eq:stop-ess}
\inf\setdef{\time>\tstart}{\Essof{\time} \in \essdomain_{\radius}}
	&= \inf\setdef{\time>\tstart}{\norm{\essmirror(\Essof{\time}) - \eq} \leq \radius}
	\notag\\
	&= \inf\setdef{\time>\tstart}{\norm{\mirror(\Scoreof{\time}) - \eq} \leq \radius}
	\notag\\
	&= \inf\setdef{\time>\tstart}{\Stateof{\time} \in \ball_{\radius}(\eq)}
	\notag\\
	&= \stoptime_{\radius}
\end{flalign}
so it follows from \eqref{eq:hit-strong-cont} that $\Essof{\time}$ hits $\essdomain_{\radius}$ in finite time on average.
Since $\Essof{\time}$ is uniformly elliptic, \cref{thm:rec-trans} shows that it is positive recurrent, as claimed.

\para{Step 4: Estimating the long-run occupation measure}

Since the restricted process $\Essof{\time}$ is a positive recurrent Itô diffusion, standard results show that it admits an invariant distribution $\tilde\invmeas$ on $\esspoints$ which satisfies the law of large numbers
\begin{equation}
\lim_{\time\to\infty} \frac{1}{\time} \int_{\tstart}^{\time} \fn(\Essof{\timealt}) \dd\timealt
	= \int_{\esspoints} \fn \dd\tilde\invmeas
\end{equation}
for every $\tilde\invmeas$-integrable test function $\fn$ on $\esspoints$.
Thus, letting $\invmeas = \essmirror_{\ast}\tilde\invmeas \equiv \tilde\invmeas\circ\essmirror^{-1}$ denote the corresponding push-forward measure on $\points$, we get
\begin{flalign}
\lim_{\time\to\infty} \occmeas_{\time}(\ball_{\radius})
	&= \lim_{\time\to\infty}
		\frac{1}{\time} \int_{\tstart}^{\time}
			\oneof{\Stateof{\timealt} \in \ball_{\radius}}
		\dd\timealt
	\notag\\
	&= \lim_{\time\to\infty}
		\frac{1}{\time} \int_{\tstart}^{\time}
			\oneof{\essmirror(\Essof{\timealt}) \in \ball_{\radius}}
		\dd\timealt
	\notag\\
	&= \lim_{\time\to\infty}
		\frac{1}{\time} \int_{\tstart}^{\time}
			\oneof{\Essof{\timealt} \in \essdomain_{\radius}}
		\dd\timealt
	\notag\\
	&= \int_{\esspoints} \oneof{\esspoint \in \essdomain_{\radius}} \dd\tilde\invmeas(\esspoint)
	\notag\\
	&\label{eq:invmeas-1}
	= \tilde\invmeas(\essdomain_{\radius})
	\eqstop
\end{flalign}
In a similar manner, we also get
\begin{flalign}
1 - \tilde\invmeas(\essdomain_{\radius})
	&= \lim_{\time\to\infty} \frac{1}{\time} \exof*{
		\int_{\tstart}^{\time}
			\oneof{\Stateof{\timealt} \notin \ball_{\radius}}
		\dd\timealt}
	\explain{b/c $\lim_{\time\to\infty}\occmeas_{\time}$ is deterministic}
	\\
	&\leq \lim_{\time\to\infty} \frac{1}{\time} \exof*{
		\int_{\tstart}^{\time}
			\frac{\norm{\Stateof{\timealt} - \eq}^{2}}{\radius^{2}}
		\dd\timealt}
	\explain{b/c $\frac{\norm{\Stateof{\timealt} - \eq}^{2}}{\radius^{2}} \geq 1$ outside $\ball_{\radius}$}
	\\
	&\leq \lim_{\time\to\infty} \frac{1}{\radius^{2}}
		\bracks*{\frac{\init[\fench]}{\strong\time} + \frac{\diffmat_{\max}^{2}}{2\strong\hstr}}
	\explain{by \eqref{eq:Fench-quadratic}}
	\notag\\
	&\label{eq:invmeas-2}
	= \frac{\radius_{\diffmat}^{2}}{\radius^{2}}
	\eqstop
\end{flalign}
Our claim then follows by combining \cref{eq:invmeas-1,eq:invmeas-2}.
\end{proof}

\section{Analysis and results in discrete time}
\label{app:disc}

In this appendix, we proceed to prove the discrete-time results presented in \cref{sec:disc}.

\subsection{The null-monotone case}
We begin with our analysis for for null-monotone games.
For convenience, we restate \cref{thm:null-disc} below:

\NullDisc*

\begin{proof}
By a second-order Taylor expansion with Lagrange remainder, there exists $\curr[\dvec]\in[\curr[\dstate],\next[\dstate]]$ such that:
\begin{align}
\next[\fench] 
	&= \curr[\fench]
		+ \step\braket{\curr[\signal]}{\curr-\eq}
		+ \frac{\step^{2}}{2}\curr[\signal]\nabla^{2}\hconj(\curr[\dvec])\curr[\signal]
	\eqstop
\end{align}
Since $\game$ is null-monotone, we have $\braket{\payfield(\curr)}{\curr-\eq} = 0$ by assumption, and thus
\begin{align}
\next[\fench] 
	&= \curr[\fench] + \step\braket{\curr[\noise]}{\curr-\eq} + \frac{\step^{2}}{2}\curr[\signal]\nabla^{2}\hconj(\curr[\dvec])\curr[\signal] \\
	&\geq \curr[\fench]
		+ \step\braket{\curr[\noise]}{\curr-\eq}
		+ \frac{\conjstr}{2} \step^{2} \dnorm{\curr[\signal]}^{2}
\end{align}
where $\conjstr$ denotes here the strong convexity modulus of $\hconj$.
Moving forward, note that
\begin{enumerate*}
[\upshape(\itshape i\hspace*{1pt}\upshape)]
\item
$\exof{\braket{\curr[\noise]}{\curr-\eq}} = \exof{\braket{\exof{\curr[\noise]\given\filter_\run}}{\curr-\eq}} = 0$;
and 
\item
$\inf_{\run} \exof*{\dnorm{\curr[\signal]}^{2}} \geq \inf \exof{\dnorm{\SFO(\point;\seed)}^{2}} > 0$, so there exists some $\vbound > 0$ such that $\exof{\dnorm{\curr[\signal]}^{2}} \geq \vbound^{2}$ for all $\run$.
\end{enumerate*}
We thus get
\begin{align}
\exof*{\next[\fench]} 
	&\geq \exof*{\curr[\fench]}
		+ \frac{\conjstr}{2} \step^{2} \exof*{\dnorm{\curr[\signal]}^{2}}
	\notag\\
	&\geq \exof*{\curr[\fench]}
		+ \conjstr\step^{2}\vbound^{2}/2
		\notag\\
	&\geq \init[\fench]
		+ \conjstr\step^{2}\vbound^{2}\run/2
\end{align}
Our result then follows by taking the limit $\run\to\infty$.
\end{proof}

\subsection{The strongly monotone case}

We now turn to our main result for strongly monotone games, which we restate below for convenience.

\StrongDisc*

\begin{proof}
The main theme of our proof shadows the continuous-time analysis, but it requires distinct tools and techniques to address the specific challenges that arise in the discrete-time Markov chain setting (where, among others, the main tools of stochastic calculus cannot be applied).
In a nutshell, this proceeds along the following sequence of steps. First, we derive an upper bound on the expected hitting time of the process to a neighborhood of the equilibrium. Subsequently, we reduce the dynamics to a "reduced space" (formally an affine quotient of the dual space), removing redundant directions and ensuring the process evolves within a minimal and non-degenerate domain. Within this reduced space, we show that the induced Markov process satisfies several crucial probabilistic properties. Specifically, we prove:
\begin{itemize}
    \item Irreducibility: any open set in the state space can be reached with positive probability.
    \item Minorization: after entering certain regions of the space, the process mixes sufficiently to allow for probabilistic regeneration.
    \item Uniform control of return times: the expected time to revisit a neighborhood of equilibrium remains bounded regardless of the starting point within that neighborhood.
\end{itemize}

These properties collectively enable the construction of a regeneration structure, a probabilistic framework that ensures the process repeatedly returns to a well-behaved region of the state space with sufficient mixing. In turn, this enables us to establish positive Harris recurrence of the learning dynamics, a key property which ensures the existence and uniqueness of a stationary invariant distribution.

To streamline our presentation, we follow a step-by-step approach, as outlined below.

\para{Step 1: Deriving a hitting estimate}
Due to measurability issues, we cannot apply Dynkin’s lemma directly in the discrete-time setting, which makes the proof more involved. Moreover, unlike in the continuous-time regime, we need to distinguish between different initializations. Specifically, we consider two cases depending on whether the initial state $\init$ lies within the ball $\ball_\radius(\eq)$ or not.
\begin{itemize}
	\item \emph{Case 1:} $\init \notin \ball_\radius(\eq)$.
	
	Letting $\curr[\fench] \defeq \fench(\eq,\curr[\dstate])$ and unfolding $\eqref{eq:onestep2}$, we readily obtain:
	\begin{equation}
		\curr[\fench] \leq \init[\fench] +\step\sum_{\runalt=0}^{\run-1}\braket{\iter[\signal]}{\iter-\eq} + \frac{\step^{2}}{2\hstr}\sum_{\runalt=0}^{\run-1}\dnorm{\iter[\signal]}^{2}
	\end{equation}
	and, setting $\run \leftarrow \minstop$, we get:
	\begin{align}
		\fench_{\minstop} \leq \init[\fench] +\step\sum_{\runalt=0}^{(\minstop)-1}\braket{\iter[\signal]}{\iter-\eq} + \frac{\step^{2}}{2\hstr}\sum_{\runalt=0}^{(\minstop)-1}\dnorm{\iter[\signal]}^{2}
	\end{align}
	Thus, taking expectation conditional on the initial state $\init[\dstate] = \dpoint$, we have
	\begin{align}
		\exof*{\fench_\minstop} &\leq \init[\fench] + \step\exof*{\sum_{\runalt=0}^{(\minstop)-1}\braket{\iter[\signal]}{\iter-\eq}} + \frac{\step^{2}}{2\hstr}\exof*{\sum_{\runalt=0}^{(\minstop)-1}\dnorm{\iter[\signal]}^{2}}
		\notag\\
		\label{eq:tmp-1}
		&\leq \init[\fench] + \sum_{\runalt=0}^{\run-1}\exof*{\parens*{\step\braket{\iter[\signal]}{\iter-\eq}  + \frac{\step^{2}}{2\hstr}\dnorm{\iter[\signal]}^{2}}\one\parens*{\stoptime_\radius \geq \runalt+1}} 
	\end{align}
	For notational convenience, we denote each summand above per $$\iter[\sterm] \defeq \exof*{\parens*{\step\braket{\iter[\signal]}{\iter-\eq}  + \frac{\step^{2}}{2\hstr}\dnorm{\iter[\signal]}^{2}}\one\parens*{\stoptime_\radius \geq \runalt+1}}$$
	and noting that the random variable $\one\parens*{\stoptime_\radius \geq \runalt+1}$ is $\iter[\filter]$-measurable, we get
	\begin{align}
		\iter[\sterm] &= \exof*{\exof*{\parens*{\step\braket{\iter[\signal]}{\iter-\eq}  + \frac{\step^{2}}{2\hstr}\dnorm{\iter[\signal]}^{2}}\one\parens*{\stoptime_\radius \geq \runalt+1} \,\,\Big|\, \iter[\filter]}}
		\notag\\
		&= \exof*{\one\parens*{\stoptime_\radius \geq \runalt+1}\exof*{{\step\braket{\iter[\signal]}{\iter-\eq}  + \frac{\step^{2}}{2\hstr}\dnorm{\iter[\signal]}^{2}} \,\Big|\, \iter[\filter]}}
		\notag\\
		&= \exof*{\one\parens*{\stoptime_\radius \geq \runalt+1}\parens*{\step\braket{\payfield(\iter)}{\iter-\eq}  + \frac{\step^{2}}{2\hstr}\exof*{\dnorm{\iter[\signal]}^{2} \,\big|\, \iter[\filter]}}}
		\notag\\
		&= \exof*{\one\parens*{\stoptime_\radius \geq \runalt+1}\parens*{-\step\strong\norm{\iter-\eq}^{2} + \frac{\step^{2}}{2\hstr}\exof*{\dnorm{\iter[\signal]}^{2} \,\big|\, \iter[\filter]}}}
	\end{align}
	where we used that $\exof{\noise(\iter,\omega_{\runalt})\given \iter[\filter]} = 0$.
	At this point, we note that $\exof*{\dnorm{\iter[\signal]}^{2} \,|\, \iter[\filter]} \leq 2\exof*{\dnorm{\payfield(\iter)}^{2} + \dnorm{\noise(\iter,\omega_{\runalt})}^{2} \,|\, \iter[\filter]} \leq 2(\smooth^{2} +\variance)$, and since $\radius_{\sdev}^{2} \equiv \step(\smooth^{2} +\variance)/(\strong\hstr)$, we get
	\KL{This bound with $\smooth$ is wrong. We need this to be Lipschitz constant, not smoothness}
	\PM{Too bad. Call it $V$. Do not correct now, later.}
	\KL{In the main, you said in a previous comment that you didn't want to put this condition and will see if they ask.}
\begin{align}
\iter[\sterm]
	&\leq \exof*{\parens*{{-\step\strong\norm{\iter-\eq}^{2}  +\strong\step\radius_{\sdev}^{2}}}\one\parens*{\stoptime_\radius \geq \runalt+1}}
	\notag\\
	&\leq \exof*{\parens*{{-\step\strong\radius^{2}  +\strong\step\radius_{\sdev}^{2}}}\one\parens*{\stoptime_\radius \geq \runalt+1}}
\end{align}
where in the last step we used that $\norm{\iter-\eq}\geq\radius^{2}$ on $\braces{\stoptime_\radius \geq \runalt+1}$.
	Thus, plugging the above bound into \eqref{eq:tmp-1}, we obtain
\begin{align}
\exof*{\fench_\minstop}
	\leq \init[\fench] + \sum_{\runalt=0}^{\run-1}\iter[\sterm]
	&\leq \init[\fench] + \sum_{\runalt=0}^{\run-1}\exof*{\parens*{{-\step\strong\radius^{2}  +\strong\step\radius_{\sdev}^{2}}}\one\parens*{\stoptime_\radius \geq \runalt+1}}
	\notag\\
		&= \init[\fench] -\strong\step(\radius^{2} -\radius_{\sdev}^{2})\exof*{\sum_{\runalt=0}^{(\minstop)-1} 1}
	\notag\\
		&= \init[\fench] -\strong\step(\radius^{2} -\radius_{\sdev}^{2})\exof*{(\minstop)}
	\end{align}
	As $\fench$ is nonnegative, we readily obtain that
	\begin{equation}
		\strong\step(\radius^{2} -\radius_{\sdev}^{2})\exof*{(\minstop)} \leq \init[\fench]
	\end{equation}
	and, since $\radius_\sdev < \radius$, we get:
	\begin{equation}
		\exof*{(\minstop)} \leq \frac{1}{\strong\step(\radius^{2} -\radius_{\sdev}^{2})} \init[\fench]
	\end{equation}
	Finally, taking $\run \to \infty$, and invoking the monotone convergence theorem \cite{Fol99}, we get
	\begin{equation}
		\exof*{\stoptime_\radius} \leq  \frac{1}{\strong\step(\radius^{2} -\radius_{\sdev}^{2})} \init[\fench]
	\end{equation}

	\item \emph{Case 2:} $\init \in \ball_\radius(\eq)$.
In this case, we have:
\begin{align}
\exof*{\stoptime_\radius}
	&= \exof*{\one\parens*{\mirror(\dstate_1) \in \ball_\radius(\eq)} + \one(\mirror(\dstate_1) \notin \ball_\radius(\eq)) \stoptime_\radius}
	\notag\\
	&= \prob\parens*{\mirror(\dstate_1) \in \ball_\radius(\eq)} + \exof*{\one(\mirror(\dstate_1) \notin \ball_\radius(\eq))(1 + \ex_{\dstate_1}\bracks*{\stoptime_\radius})}
	\notag\\
	&= \prob\parens*{\mirror(\dstate_1) \in \ball_\radius(\eq)} + \prob\parens*{\mirror(\dstate_1) \notin \ball_\radius(\eq)} + \exof*{\one(\mirror(\dstate_1) \notin \ball_\radius(\eq)) \ex_{\dstate_1}\bracks*{\stoptime_\radius}}
	\notag\\
	&= 1 + \exof*{\one(\mirror(\dstate_1) \notin \ball_\radius(\eq)) \ex_{\dstate_1}\bracks*{\stoptime_\radius}}
	\notag\\
	&\leq 1 + \exof*{\one(\mirror(\dstate_1) \notin \ball_\radius(\eq))\parens*{\strong\step(\radius^{2} -\radius_{\sdev}^{2})}^{-1} \fench(\eq,\dstate_1)}
	\notag\\
	&\leq 1 + \parens*{\strong\step(\radius^{2} -\radius_{\sdev}^{2})}^{-1}\exof*{ \fench(\eq,\dstate_1)}
	\notag\\
	&\leq 1 +\parens*{\strong\step(\radius^{2} -\radius_{\sdev}^{2})}^{-1}\exof*{\init[\fench] +\step\braket{\init[\signal]}{\init-\eq} + \frac{\step^{2}}{2\hstr}\dnorm{\init[\signal]}^{2}}
	\notag\\
	&\leq 1+\parens*{\strong\step(\radius^{2} -\radius_{\sdev}^{2})}^{-1}\parens*{\init[\fench] +\step\braket{\payfield(\init)}{\init-\eq} + \strong\step\radius_{\sdev}^{2}}
	\notag\\
	&\leq 1+\parens*{\strong\step(\radius^{2} -\radius_{\sdev}^{2})}^{-1}\parens*{\init[\fench] -\step\strong\norm{\init-\eq}^{2} + \strong\step\radius_{\sdev}^{2}}
	\notag\\
	&\leq 1+\parens*{\strong\step(\radius^{2} -\radius_{\sdev}^{2})}^{-1}\parens*{\init[\fench] + \strong\step\radius_{\sdev}^{2}}
	\notag\\
	&= \frac{\init[\fench] + \strong\step\radius^{2}}{\strong\step(\radius^{2} -\radius_{\sdev}^{2})}
\end{align}
\end{itemize}
Thus, collectively, we get:
\begin{equation}
\exof{\stoptime_{\radius}}
	\leq \frac{1}{\strong\step (\radius^{2} - \radius_{\sdev}^{2})}
		\times
		\begin{cases*}
			\init[\fench]
				&\quad
				\text{if $\init \notin \ball_{\radius}(\eq)$},
			\\
			\init[\fench] + \strong\step\radius^{2}
				&\quad
				\text{if $\init \in \ball_{\radius}(\eq)$},
		\end{cases*}
\end{equation}

\para{Step 2: Descending to the restricted process}
As in the continuous-time case, establishing positive recurrence requires analyzing a ``restricted'' version of the process. To that end, we follow the same construction as in \emph{Step 2 of \cref{thm:strong-cont}}, and we define the canonical surjection $\essmap\from\dpoints\surjects\esspoints$ by restricting the action of $\dpoint\in\dpoints$ to $\tanvecs$, that is,
\begin{equation}
\label{eq:essmap-disc}
\braket{\essmap(\dpoint)}{\tanvec}
	= \braket{\dpoint}{\tanvec}
	\quad
	\text{for all $\tanvec\in\tanvecs$}.
\end{equation}
whose kernel of $\essmap$ is precisely the \define{annihilator} $\annof{\tanvecs}$ of $\tanvecs$, \ie
\begin{equation}
\ker\essmap
	= \annof{\tanvecs}
	= \setdef{\dpoint\in\dpoints}{\braket{\dpoint}{\tanvec} = 0 \; \text{for all $\tanvec\in\tanvecs$}}
\end{equation}
In addition, we consider the \define{restricted mirror map} $\essmirror\from\esspoints\to\points$ given by
\begin{equation}
\label{eq:essmirror-disc}
\essmirror(\esspoint)
	= \mirror(\dpoint)
	\quad
	\text{whenever $\essmap(\dpoint) = \esspoint$}.
\end{equation}

Accordingly, letting
\begin{equation}
\label{eq:score-ess-disc}
\curr[\Essvar]
	= \essmap(\curr[\dstate])
\end{equation}
and applying $\essmap$ to \eqref{eq:FTRL} yields the ``restricted'' process
\begin{equation}
\label{eq:FTRL-ess}
\next[\Essvar]
	= \essmap \cdot \next[\dstate]
	= \essmap\cdot\curr[\dstate]
		+ \step(\essmap\cdot\payfield(\curr)
		+ \essmap\cdot\curr[\noise] )
	= \curr[\Essvar] 
		+ \step(\esspayfield(\curr) 
		+ \curr[\essnoise])
\end{equation}
where $\curr = \mirror(\curr[\dstate]) = \essmirror(\Essvar_\time)$ and, in a slight abuse of notation, we are overloading the symbol $\essmap$ to denote both the linear map $\essmap\from\dpoints\to\esspoints$ and its representation as a matrix.
Finally, writing $\Essvar$ as
\begin{equation}
	\next[\Essvar]
	= \curr[\Essvar] 
		+ \step\esspayfield\parens*{\essmirror(\curr[\Essvar])} 
		+ \step\error\parens*{\essmirror(\curr[\Essvar]),\curr[\seed]}
\end{equation}
we conclude that it is a time-homogeneous Markov process, and we denote its kernel by $\esskernel$, where for any $\esspoint\in\esspoints$ and Borel set $\set\subseteq\esspoints$, we have $\esskernel(\esspoint, \set) = \probof*{\next[\Essvar]\in\set\given\curr[\Essvar] = \esspoint}$.
\para{Step 3: Recurrence of the restricted process}
To establish the recurrence of the restricted process, we first need to understand the effect of $\essmap$ on the distribution of $\error(\point)$. 
As stated in the assumptions in \cref{sec:disc}, the probability distribution $\errdist_{\point}$ of $\error(\point)$ decomposes as $\errdist_{\point} = \errdist_{\point}^{c} + \errdist_{\point}^{\perp}$. Noting the push-forward measure is linear, we readily obtain that $\essmap_\ast\errdist_{\point} = \essmap_\ast\errdist_{\point}^{c} + \essmap_\ast\errdist_{\point}^{\perp}$, where $\essmap_\ast\errdist_{\point}$ denotes the push-forward measure $\set\mapsto (\errdist_{\point}\circ\essmap^{-1})(\set)$.
For notational convenience, we denote $\nesspoints\equiv\annof{\tanvecs}$ and $\dens(\point,\dpoint)\equiv \dens_{\point}(\dpoint)$. 
Then, each $\dpoint\in\dpoints$ can be decomposed as $\dpoint = \esspoint+\nesspoint$, and since $\essmap$ has full column-rank, the measure $\essmap_\ast\errdist_{\point}^{c}$ has density with respect to the Lebesgue measure $\leb_{\esspoints}$ on $\esspoints$, given by
\begin{equation}
\label{eq:essdens}
	\essdens(\point, \esspoint) = \int_{\nesspoints}\dens(\point,\esspoint,\nesspoint) d\leb_{\nesspoints}(\nesspoint)
\end{equation}
where $\leb_{\nesspoints}$ is the Lebesgue measure on $\nesspoints$.
Importantly, the density $\essdens$ satisfies the following properties, which will be crucial for establishing the recurrence of the process. We formalize these in the proposition below, whose proof is deferred until after the theorem.
\begin{restatable}{proposition}{ESSDENS}
\label{prop:essdens-1}
Let the function $\essdens$ as defined in \eqref{eq:essdens}. Then:
\begin{enumerate}[(i)]
	\item For any compact set $\cpt\subseteq\points$ and every $\esspoint\in\esspoints$, it holds $\inf_{\point\in\cpt}\essdens(\point,\esspoint) > 0$.
	
	\item The function $\essdens$ is (jointly) lower semi-continuous.
\end{enumerate}
\end{restatable}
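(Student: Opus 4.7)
My plan is to handle the two parts separately; each rests on the joint continuity and uniform positivity assumptions on $\dens$ stated in the main text, combined with a standard compactness argument for (i) and a Fatou argument for (ii).

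For part (i), fix $\esspoint\in\esspoints$ and a compact $\cpt\subseteq\points$. The strategy is to produce a single ``tube'' of the form $\cpt\times\ball_{\delta}(\nesspoint_{0})\subseteq\cpt\times\nesspoints$ on which the integrand $\dens(\cdot,\esspoint,\cdot)$ is uniformly bounded below by a positive constant, and then read off the desired lower bound on $\essdens(\point,\esspoint)$ by integrating over $\ball_{\delta}(\nesspoint_{0})$. To build the tube, pick any $\nesspoint_{0}\in\nesspoints$; by the standing assumption on $\dens$, the function $\point\mapsto\dens(\point,\esspoint,\nesspoint_{0})$ is continuous and strictly positive on $\cpt$, and hence attains a positive minimum $c>0$. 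Joint continuity of $\dens$ then provides, for each $\point\in\cpt$, neighborhoods $\ball_{\eta_{\point}}(\point)\subseteq\points$ and $\ball_{\delta_{\point}}(\nesspoint_{0})\subseteq\nesspoints$ on whose product $\dens(\cdot,\esspoint,\cdot)\geq c/2$. Extracting a finite subcover of $\cpt$ by the balls $\ball_{\eta_{\point}}(\point)$ and letting $\delta$ be the minimum of the corresponding $\delta_{\point}$ yields a single $\delta>0$ such that $\dens(\point,\esspoint,\nesspoint)\geq c/2$ uniformly over $(\point,\nesspoint)\in\cpt\times\ball_{\delta}(\nesspoint_{0})$. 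Integrating out $\nesspoint$ over $\ball_{\delta}(\nesspoint_{0})$ then gives $\essdens(\point,\esspoint)\geq (c/2)\,\leb_{\nesspoints}(\ball_{\delta}(\nesspoint_{0}))>0$ uniformly in $\point\in\cpt$, as claimed.

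For part (ii), the plan is to invoke Fatou's lemma directly. Given any sequence $(\point_{n},\esspoint_{n})\to(\point,\esspoint)$, joint continuity of $\dens$ gives $\dens(\point_{n},\esspoint_{n},\nesspoint)\to\dens(\point,\esspoint,\nesspoint)$ pointwise in $\nesspoint$, and since $\dens\geq 0$, Fatou's lemma yields
\begin{equation*}
\liminf_{n\to\infty}\essdens(\point_{n},\esspoint_{n})
    = \liminf_{n\to\infty}\int_{\nesspoints}\dens(\point_{n},\esspoint_{n},\nesspoint)\,d\leb_{\nesspoints}(\nesspoint)
    \geq \int_{\nesspoints}\dens(\point,\esspoint,\nesspoint)\,d\leb_{\nesspoints}(\nesspoint)
    = \essdens(\point,\esspoint),
\end{equation*}
which is precisely joint lower semi-continuity at $(\point,\esspoint)$.

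The main technical wrinkle lies in part (i): pointwise positivity of $\dens(\point,\esspoint,\nesspoint_{0})$ at a single fiber $\nesspoint_{0}$ is insufficient on its own, since a single fiber has Lebesgue measure zero in $\nesspoints$ and contributes nothing to $\essdens$. One must therefore combine joint continuity of $\dens$ with the compactness of $\cpt$ to upgrade pointwise positivity at $\nesspoint_{0}$ to a uniform lower bound on a full neighborhood of positive Lebesgue measure around $\nesspoint_{0}$; the covering argument outlined above is the natural way to carry this out.
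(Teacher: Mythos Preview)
Your proof is correct. Part~(ii) is identical to the paper's argument (Fatou's lemma applied to the nonnegative, jointly continuous integrand). For part~(i), however, you take a more elaborate route than necessary: the paper simply interchanges the infimum and the integral,
\[
\inf_{\point\in\cpt}\essdens(\point,\esspoint)
	= \inf_{\point\in\cpt}\int_{\nesspoints}\dens(\point,\esspoint,\nesspoint)\,d\leb_{\nesspoints}(\nesspoint)
	\geq \int_{\nesspoints}\inf_{\point\in\cpt}\dens(\point,\esspoint,\nesspoint)\,d\leb_{\nesspoints}(\nesspoint)
	> 0,
\]
with the last inequality following because the standing assumption gives $\inf_{\point\in\cpt}\dens(\point,\esspoint,\nesspoint)>0$ for \emph{every} $\nesspoint$, so the integrand is strictly positive everywhere on $\nesspoints$. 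Your covering argument reproves a special case of this (positivity on a single ball $\ball_{\delta}(\nesspoint_{0})$) from continuity and compactness, which is correct but redundant given that the uniform positivity over $\cpt$ is already assumed for all $\dpoint\in\dpoints$, not just for a single fiber. The upshot: your approach is self-contained and explicit, while the paper's is a one-liner that leans directly on the hypothesis.
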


\para{Lebesgue irreducibility}
We now show that the restricted process $\curr[\Essvar]$ is Lebesgue irreducible; that is, starting from any point in its domain, the process has a positive probability of reaching any open set with nonzero Lebesgue measure. This property is crucial for establishing recurrence, as it ensures that the process does not avoid regions of the space indefinitely. 

For this, let a Borel measurable set $\set\subseteq\esspoints$ with $\leb_{\esspoints}(\set) > 0$. We will show that $\esskernel(\esspoint,\set) > 0$ for all $\esspoint\in\esspoints$, which implies that $\set$ can be reached from any state $\esspoint$ in one step with positive probability.

\begin{align}
\label{eq:kernel-low}
	\esskernel(\esspoint, \set) 
	&= \probof*{\next[\Essvar]\in\set\given\curr[\Essvar] = \esspoint} \notag\\
	&= \probof*{\esspoint + \step\esspayfield(\essmirror(\esspoint)) + \step\esserror(\essmirror(\esspoint),\seed) \in \set} \notag\\
	&= \probof*{\esserror(\essmirror(\esspoint),\seed) \in \braces*{\step^{-1}\set -\step^{-1}\esspoint - \esspayfield(\essmirror(\esspoint))}} \notag\\
	&= \probof*{\esserror(\essmirror(\esspoint),\seed) \in \set_\esspoint} \notag\\
	&\geq \int_{\set_\esspoint} \essdens(\essmirror(\esspoint), z) d\leb_{\esspoints}(z)
\end{align}
where $\set_{\esspoint} \equiv \step^{-1}\set -\step^{-1}\esspoint - \esspayfield(\essmirror(\esspoint))$ with $\leb_{\esspoints}(\set_\esspoint) = \step^{-d}\leb_{\esspoints}(\set) > 0$.
Finally, since $\essdens(\essmirror(\esspoint),\cdot)$ strictly positive, we conclude that $\esskernel(\esspoint,\set) >0$, which shows that $\curr[\Essvar]$ induced by \eqref{eq:FTRL-ess} is Lebesgue-irreducible. 

\para{Harris recurrence}
Our next step is to show that $\curr[\Essvar]$ is Harris recurrent.
This means that the process returns to every set of positive Lebesgue measure infinitely often with probability one. 
Establishing Harris recurrence is a key step toward proving ergodicity, as it ensures that the process does not drift away or get trapped.
For this, we will show that $\essdomain_\radius = \setdef{\esspoint\in\esspoints}{\norm{\mirror(\esspoint) - \eq} \leq \radius}$ is a recurrent set from which we can go ``everywhere'' with positive probability.
Importantly, the set $\essdomain_\radius$ is compact as shown in \emph{Step 3 of \cref{thm:strong-cont}}.

The first part to prove Harris recurrence is immediate from \emph{Step 1} of our proof; namely, since $\exwrt{\esspoint}{\stoptime_\radius} < \infty$ for any initial condition $\esspoint\in\esspoints$, we readily get that $\probwrt{\esspoint}{\stoptime_\radius <\infty} = 1$.

For the second part, we will prove the so-called minorization property; that is, there exists a nontrivial measure $\minor$ and a constant $\alpha > 0$ such that
\begin{equation}
\label{eq:minorization}
	\esskernel(\esspoint,\set)\geq\alpha\minor(\set)
	\qquad \text{for all $\esspoint\in\essdomain_\radius$ and Borel sets $\set\subseteq\esspoints$.}
\end{equation}

This condition implies that, from any point in $\essdomain_\radius$ the process has a uniformly lower-bounded probability of reaching any set $\set$ in one step according to the reference measure $\minor$.

To establish the minorization condition \eqref{eq:minorization}, we define for notational convenience the function $\tmpf:\esspoints\times\esspoints\to\proxdom\times\esspoints$ as
\begin{equation}
	\tmpf(\esspoint,z) = \parens*{\essmirror(\esspoint), \step^{-1}(z-\esspoint) - \esspayfield(\essmirror(\esspoint))}
\end{equation}
which is continuous as a composition of continuous functions.
With this definition in hand, we perform the change of variables in \eqref{eq:kernel-low}, and we have:
\begin{align}
\label{eq:minor-tmp}
	\esskernel(\esspoint, \set) 
	&\geq \step^{-d}\int_{\set} \essdens( \tmpf(\esspoint,z)) d\leb_{\esspoints}(z) \notag\\
	&\geq \step^{-d}\int_{\set} \inf_{\esspoint\in\essdomain_\radius}\essdens( \tmpf(\esspoint,z)) d\leb_{\esspoints}(z)
\end{align}

To finally construct the measure $\minor$, we need to ensure that
\begin{equation}
	0 < \int_{\esspoints} \inf_{\esspoint\in\essdomain_\radius}\essdens( \tmpf(\esspoint,z)) d\leb_{\esspoints}(z) < \infty
\end{equation}
To this end, we state the following proposition, whose proof is deferred until after the theorem to maintain the flow.
\begin{restatable}{proposition}{ESSDENSS}
\label{prop:essdens-2}
	The density $\essdens$ satisfies:
	\begin{equation}
	0 < \int_{\esspoints} \inf_{\esspoint\in\essdomain_\radius}\essdens( \tmpf(\esspoint,z)) d\leb_{\esspoints}(z) < \infty
\end{equation}
\end{restatable}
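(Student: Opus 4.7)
The plan is to treat the two bounds separately: the upper bound reduces to a change-of-variables computation, while the lower bound hinges on lower semi-continuity together with compactness of $\essdomain_\radius$.

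For the upper bound, I would fix any reference point $\esspoint_0 \in \essdomain_\radius$ and bound the infimum pointwise by its value at $\esspoint_0$, yielding
\[
\int_{\esspoints} \inf_{\esspoint \in \essdomain_\radius} \essdens(\tmpf(\esspoint,z)) \, \dd\leb_{\esspoints}(z) \leq \int_{\esspoints} \essdens(\tmpf(\esspoint_0, z)) \, \dd\leb_{\esspoints}(z).
\]
The affine change of variables $u = \step^{-1}(z - \esspoint_0) - \esspayfield(\essmirror(\esspoint_0))$ (with Jacobian $\step^{-d}$) rewrites the right-hand side as $\step^{d} \int_{\esspoints} \essdens(\essmirror(\esspoint_0), u) \, \dd\leb_{\esspoints}(u)$. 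Since $\essdens(\essmirror(\esspoint_0),\cdot)$ is the Lebesgue density of the sub-probability measure $\essmap_{\ast}\errdist^{c}_{\essmirror(\esspoint_0)}$, this last integral is at most $1$, so the original integral is bounded above by $\step^{d} < \infty$.

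For the lower bound, the first step is to show that $\phi(z) \defeq \inf_{\esspoint \in \essdomain_\radius} \essdens(\tmpf(\esspoint,z))$ is itself lower semi-continuous in $z$. Joint lsc of $\essdens$ from \cref{prop:essdens-1}(ii) and continuity of $\tmpf$ make $(\esspoint,z) \mapsto \essdens(\tmpf(\esspoint,z))$ jointly lsc; the compactness of $\essdomain_\radius$ then lifts this to lsc of $\phi$ via a standard subsequence extraction: for $z_n \to z$, the infima are attained at minimizers $\esspoint^{*}_{n}$ (by lsc of $\essdens(\tmpf(\cdot,z_n))$ on the compact set $\essdomain_\radius$), and any cluster point $\esspoint^{*} \in \essdomain_\radius$ yields $\liminf_n \phi(z_n) \geq \essdens(\tmpf(\esspoint^{*},z)) \geq \phi(z)$. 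Pointwise positivity $\phi(z) > 0$ is then immediate from \cref{prop:essdens-1}(i) applied to the singleton $\cpt = \{\essmirror(\esspoint^{*}_z)\}$. Combining lsc with pointwise positivity, any point $z_0$ admits an open neighborhood $U$ on which $\phi \geq \phi(z_0)/2 > 0$, whence $\int \phi \, \dd\leb_{\esspoints} \geq \phi(z_0)\,\leb_{\esspoints}(U)/2 > 0$.

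The main obstacle I anticipate is establishing the lsc of $\phi$: pointwise infima of lower semi-continuous families are not lsc in general, so the subsequence argument sketched above, which exploits both the joint lsc of $\essdens\circ\tmpf$ and the compactness of $\essdomain_\radius$, is the crux of the proof. All remaining steps are either routine substitution or direct appeals to \cref{prop:essdens-1}.
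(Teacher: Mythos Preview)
Your proposal is correct and follows essentially the same approach as the paper: bound the infimum above by its value at a fixed $\esspoint_0$ for the upper bound, and use lower semi-continuity together with compactness of $\essdomain_\radius$ to show the infimum is attained and strictly positive for the lower bound. You are in fact more careful than the paper on two points: you correctly track the Jacobian $\step^{d}$ in the upper bound (the paper writes $\leq 1$), and you explicitly establish that $\phi(z)=\inf_{\esspoint\in\essdomain_\radius}\essdens(\tmpf(\esspoint,z))$ is lower semi-continuous, which the paper uses implicitly but does not spell out when passing from pointwise positivity to positivity of the integral.
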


With \cref{prop:essdens-1} in hand, we define the measure $\minor$ as
\begin{equation}
\label{eq:minor-measure}
	\minor(\set) \defeq \frac{\int_{\set} \inf_{\esspoint\in\essdomain_\radius}\essdens( \tmpf(\esspoint,z)) d\leb_{\esspoints}(z)}{\int_{\esspoints} \inf_{\esspoint\in\essdomain_\radius}\essdens( \tmpf(\esspoint,z)) d\leb_{\esspoints}(z)}
	\qquad \text{for all Borel $\set\subseteq\esspoints$.}
\end{equation}
Therefore, \eqref{eq:minor-tmp} becomes:
\begin{align}
	\esskernel(\esspoint, \set) 
	&\geq 
	\step^{-d}\int_{\set} \inf_{\esspoint\in\essdomain_\radius}\essdens( \tmpf(\esspoint,z)) d\leb_{\esspoints}
	= \step^{-d}\int_{\esspoints} \inf_{\esspoint\in\essdomain_\radius}\essdens( \tmpf(\esspoint,z)) d\leb_{\esspoints}(z) \cdot \minor(\set)
\end{align}
Thus, setting $\alpha\equiv \step^{-d}\int_{\esspoints} \inf_{\esspoint\in\essdomain_\radius}\essdens( \tmpf(\esspoint,z)) d\leb_{\esspoints}(z)$, we conclude the minorization condition \eqref{eq:minorization}.

Therefore, the set $\essdomain_\radius$ is recurrent and $\minor(\essdomain_\radius) > 0$ (since $\leb_{\esspoints}(\essdomain_\radius) > 0$), and thus
by \cite[Proposition~11.2.1]{DMPS18} the Markov process $\curr[\Essvar]$ admits an invariant measure.
In addition, based on the equivalence \eqref{eq:stop-ess} the expected return time $\exof{\stoptime_\radius}$ to $\essdomain_\radius$ is uniformly bounded for all initial conditions $\esspoint$ on $\essdomain_\radius$, due to the continuity of the Fenchel coupling $\fench$. Therefore, invoking \cite[Theorem~13.0.1]{MT09}, we conclude that the process $\curr[\Essvar]$ admits a unique invariant probability measure $\essinvmeas$, and the law of $\curr[\Essvar]$ converges to $\essinvmeas$ \emph{in total variation} for every initial condition $\esspoint\in\esspoints$.

\para{Step 4: Estimating the long-run occupation measure}

Finally, for the last part, letting $\curr[\fench] \defeq \fench(\eq,\curr[\dstate])$ and unfolding $\eqref{eq:onestep2}$, we obtain:
\begin{equation}
	\curr[\fench] \leq \init[\fench] +\step\sum_{\runalt=0}^{\run-1}\braket{\iter[\signal]}{\iter-\eq} + \frac{\step^{2}}{2\hstr}\sum_{\runalt=0}^{\run-1}\dnorm{\iter[\signal]}^{2}
\end{equation}
Taking expectations in both sides, we readily get
\begin{align}
	0 \leq\exof*{\curr[\fench]} 
	&\leq \exof*{\init[\fench] + \step\sum_{\runalt=0}^{\run-1}\braket{\iter[\signal]}{\iter-\eq} + \frac{\step^{2}}{2\hstr}\sum_{\runalt=0}^{\run-1}\dnorm{\iter[\signal]}^{2}}
	\notag\\
	&\leq \init[\fench] -\strong\step\exof*{\sum_{\runalt=0}^{\run-1}\norm{\iter-\eq}^{2}} + \run\strong\step\radius_{\sdev}^{2}
\end{align}
Therefore, by rearranging terms and dividing both sides by $\run$, we have:
\begin{align}
	\frac{1}{\run}\exof*{\sum_{\runalt=0}^{\run-1}\norm{\iter-\eq}^{2}} 
	&\leq \frac{1}{\strong\step\run}\init[\fench] + \radius_{\sdev}^{2}
\end{align}
Moreover, we have:
\begin{align}
\label{eq:conc-bound-0}
	\frac{1}{\run}\exof*{\sum_{\runalt=0}^{\run-1}\one\braces{\iter \notin \ball_\radius(\eq)}} 
	&\leq \frac{1}{\radius^{2} \run}\exof*{\sum_{\runalt=0}^{\run-1}\norm{\iter-\eq}^{2}} \leq \frac{1}{\strong\step\run\radius^{2}}\init[\fench] + \frac{\radius_{\sdev}^{2}}{\radius^{2}}
\end{align}
Now, note that $\braces{\iter \notin \ball_\radius(\eq)} \equiv \braces{\curr[\Essvar] \notin \essdomain_\radius}$ by construction, and thus
\begin{equation}
	\frac{1}{\run}\exof*{\sum_{\runalt=0}^{\run-1}\one\braces{\iter \notin \ball_\radius(\eq)}} = \frac{1}{\run}\exof*{\sum_{\runalt=0}^{\run-1}\one\braces{\curr[\Essvar] \notin \essdomain_\radius}}
\end{equation}
Taking $\run\to\infty$, and invoking Birkhoff's individual ergodic theorem \citep[Theorem 2.3.4]{Las03}, we readily get that the mean occupation measure $\set\mapsto \run^{-1}\exof*{\sum_{\runalt=0}^{\run-1}\one\braces{\curr[\Essvar] \in\set}}$ converges strongly to the invariant measure $\essinvmeas$, and therefore
\begin{equation}
	\lim_{\run\to\infty}\frac{1}{\run}\exof*{\sum_{\runalt=0}^{\run-1}\one\braces{\iter \notin \ball_\radius(\eq)}} 
	=\lim_{\run\to\infty}\frac{1}{\run}\exof*{\sum_{\runalt=0}^{\run-1}\one\braces{\curr[\Essvar] \notin \essdomain_\radius}} = 1- \essinvmeas(\essdomain_\radius)
\end{equation}
and, using \eqref{eq:conc-bound-0}, we have:
\begin{equation}
	\essinvmeas(\essdomain_\radius) \geq 1 - \frac{\radius_{\sdev}^{2}}{\radius^{2}}
\end{equation}
and our proof is complete.
\end{proof}

To keep the presentation self-contained, we restate and prove \cref{prop:essdens-1} and \cref{prop:essdens-2} below.

\ESSDENS*
\begin{proof}
\begin{enumerate}[(i)]
	\item For the first part, let $\esspoint\in\esspoints$. Then
	\begin{equation}
		\inf_{\point\in\cpt}\essdens(\point,\esspoint) 
		= \inf_{\point\in\cpt} \int_{\nesspoints}\dens(\point,\esspoint,\nesspoint) d\leb_{\nesspoints}(\nesspoint)
		\geq \int_{\nesspoints}\inf_{\point\in\cpt}\dens(\point,\esspoint,\nesspoint) d\leb_{\nesspoints}(\nesspoint)
		> 0
	\end{equation}

	\item For the second part, let $(\point,\esspoint) \in \points\times\esspoints$, and let a sequence $\braces{(\curr[\point],\curr[\esspoint])}_{\run\in\N}$ with $\lim_{\run\to\infty}(\curr[\point],\curr[\esspoint]) = (\point,\esspoint)$.
	Since $\dens$ is jointly continuous, applying Fatou's lemma \cite{Fol99}, we get
	\begin{align}
		\essdens(\point,\esspoint) 
		= \int_{\nesspoints}\dens(\point,\esspoint,\nesspoint) d\leb_{\nesspoints}(\nesspoint)
		&= \int_{\nesspoints}\liminf_{\run\to\infty}\dens(\curr[\point],\curr[\esspoint],\nesspoint) d\leb_{\nesspoints}(\nesspoint) \notag\\
		&\leq \liminf_{\run\to\infty}\int_{\nesspoints}\dens(\curr[\point],\curr[\esspoint],\nesspoint) d\leb_{\nesspoints}(\nesspoint) \notag\\
		&= \liminf_{\run\to\infty}\essdens(\curr[\point],\curr[\esspoint])
	\end{align}
	\ie 
	\begin{equation}
		\essdens(\point,\esspoint)\leq \liminf_{\run\to\infty}\essdens(\curr[\point],\curr[\esspoint])
	\end{equation}
	and the result follows.
	\qedhere
\end{enumerate}
\end{proof}

\ESSDENSS*
\begin{proof}
The upper bound is trivial since $\essdens$ is a probability density and 
\begin{equation}
	\int_{\esspoints} \inf_{\esspoint\in\essdomain_\radius}\essdens( \tmpf(\esspoint,z)) d\leb_{\esspoints}(z) 
	\leq \int_{\esspoints}\essdens( \tmpf(\esspoint,z)) d\leb_{\esspoints}(z)  \leq 1
\end{equation}
For the lower bound, we will show that 
\begin{equation}
	\inf_{\esspoint\in\essdomain_\radius}\essdens( \tmpf(\esspoint,z)) > 0
	\qquad \text{for all $z \in\esspoints$.}
\end{equation}
Suppose not, \ie there exists $z_0\in\esspoints$ such that $ \inf_{\esspoint\in\essdomain_\radius}\essdens( \tmpf(\esspoint,z_0)) = 0$.
Since $\essdomain_\radius$ is compact and $\essdens\circ\tmpf$ is lower semi-continuous, the infimum over $\essdomain_\radius$ is realized, meaning that there exists $\esspoint_0 \in\essdomain_\radius$ such that $\essdens( \tmpf(\esspoint_0,z_0)) = 0$, or, equivalently,
\begin{equation}
	\essdens\parens*{\essmirror(\esspoint_0), \step^{-1}(z_0-\esspoint_0) - \esspayfield(\essmirror(\esspoint_0))} = 0
\end{equation}
This contradicts \cref{prop:essdens-1} for $\cpt\leftarrow\cpt_\radius$. Finally, since we integrating over a set with positive measure, our result follows.
\end{proof}

\section{Further numerical results and details}
\label{app:numerics}

In this section, we present some additional numerical simulations to illustrate and validate our theoretical findings.
To this end, we consider two simple yet representative examples:
\begin{enumerate*}
[\upshape(\itshape i\hspace*{1pt}\upshape)]
\item
a strongly monotone two-player min-max game on the unit square;
and
\item
a finite zero-sum game (as an example of a null-monotone game).
\end{enumerate*}

\begin{figure}[t]
\centering
\begin{subfigure}[b]{0.45\textwidth}
\centering
\includegraphics[height=35ex]{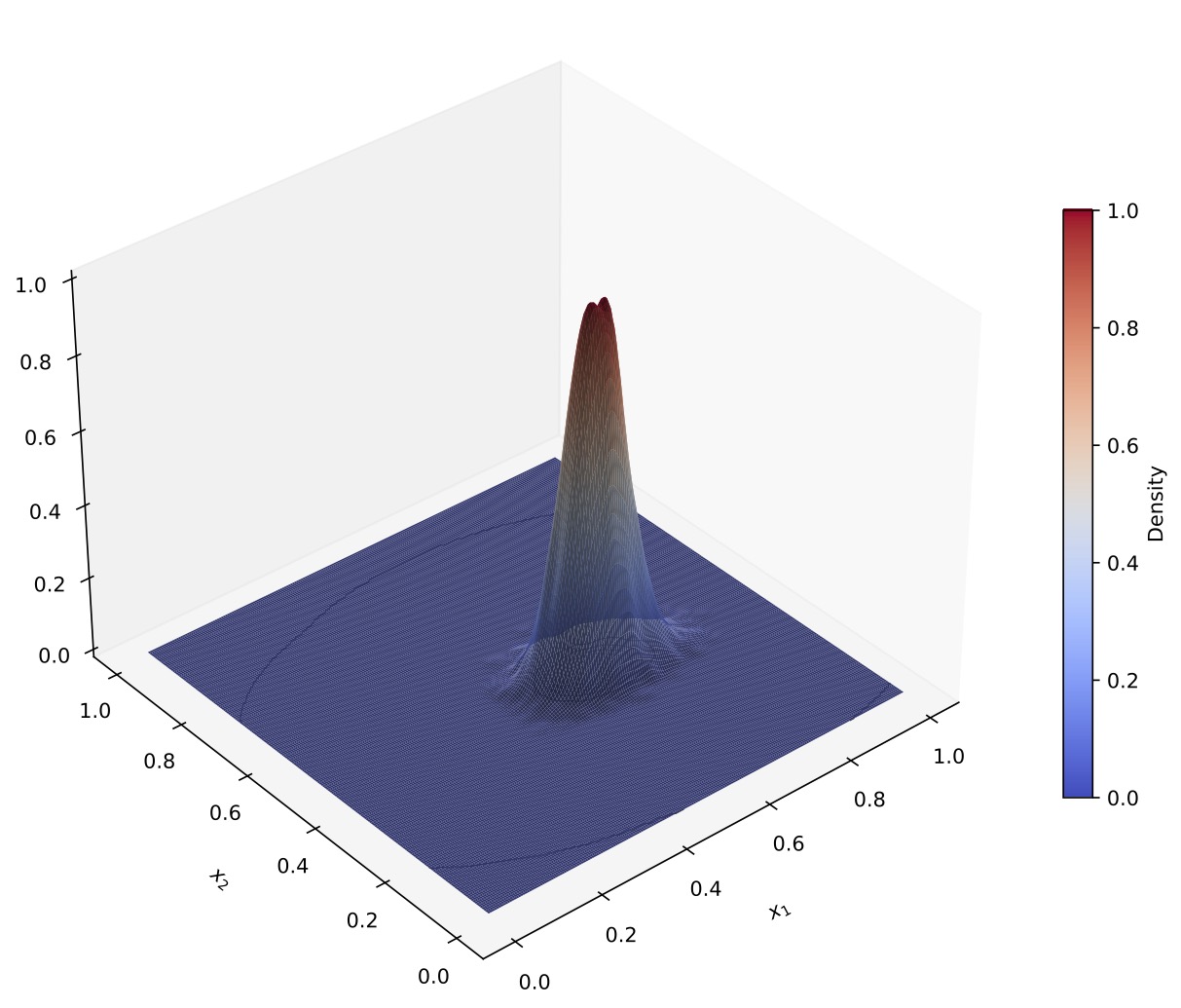}
\caption{$\step = 0.1$, $\sigma = 0.5$}
\end{subfigure}
\hfill
\begin{subfigure}[b]{0.45\textwidth}
\centering
\includegraphics[height=35ex]{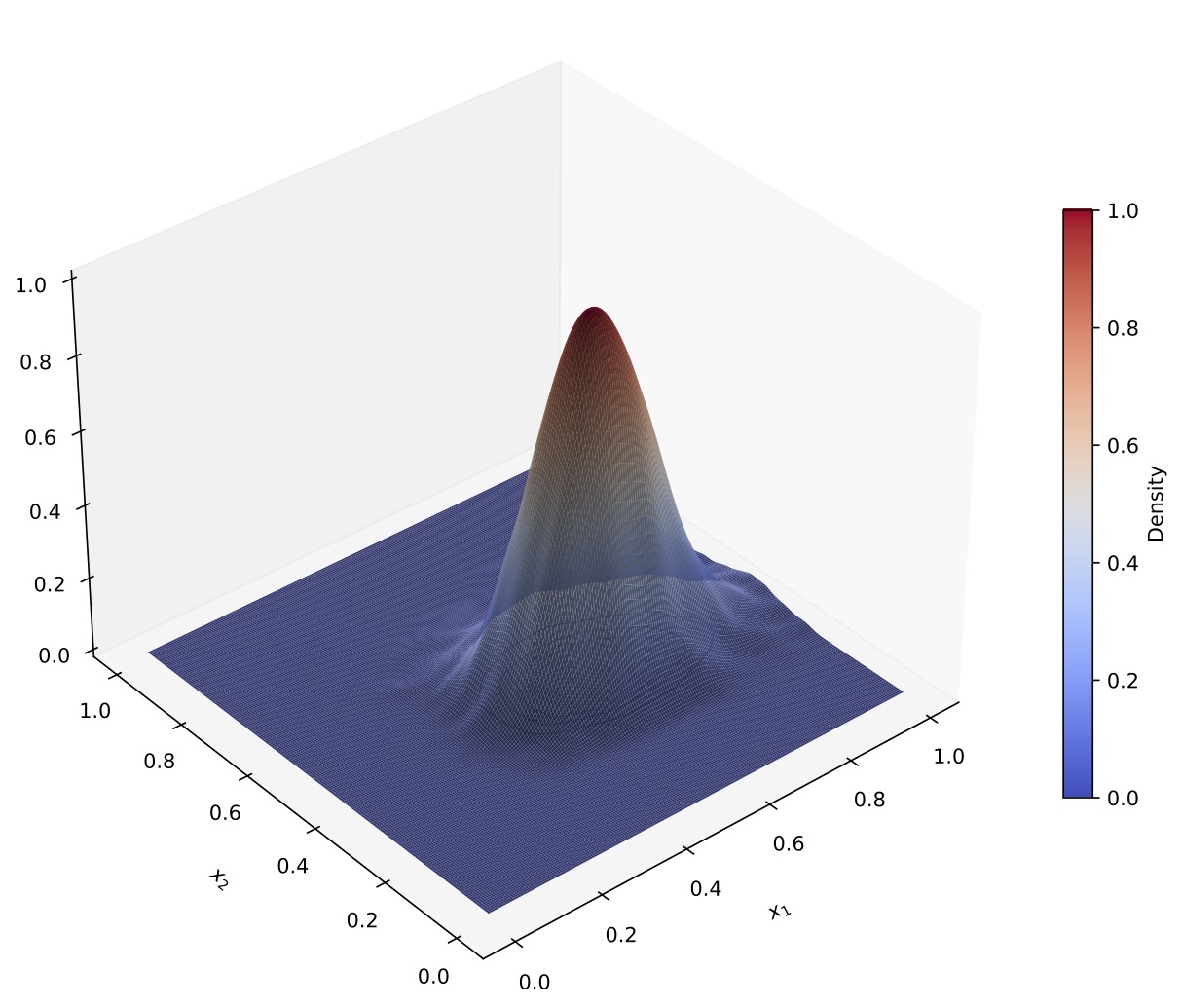}
\caption{$\step = 0.1$, $\sigma = 1$}
\end{subfigure}
\\[\medskipamount]
\begin{subfigure}[b]{0.45\textwidth}
\centering
\includegraphics[height=35ex]{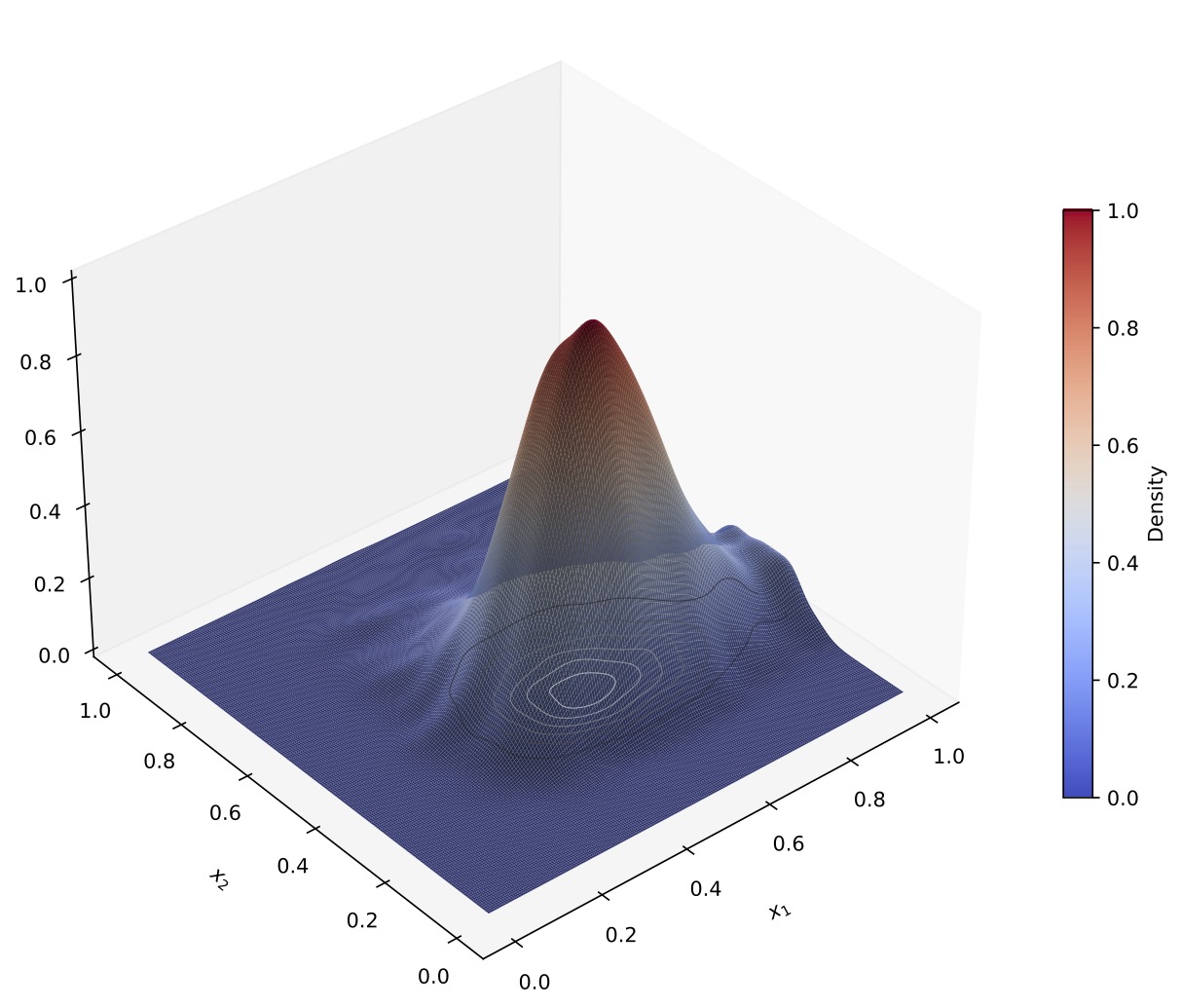}
\caption{$\step = 0.5$, $\sigma = 0.5$}
\end{subfigure}
\hfill
\begin{subfigure}[b]{0.45\textwidth}
\centering
\includegraphics[height=35ex]{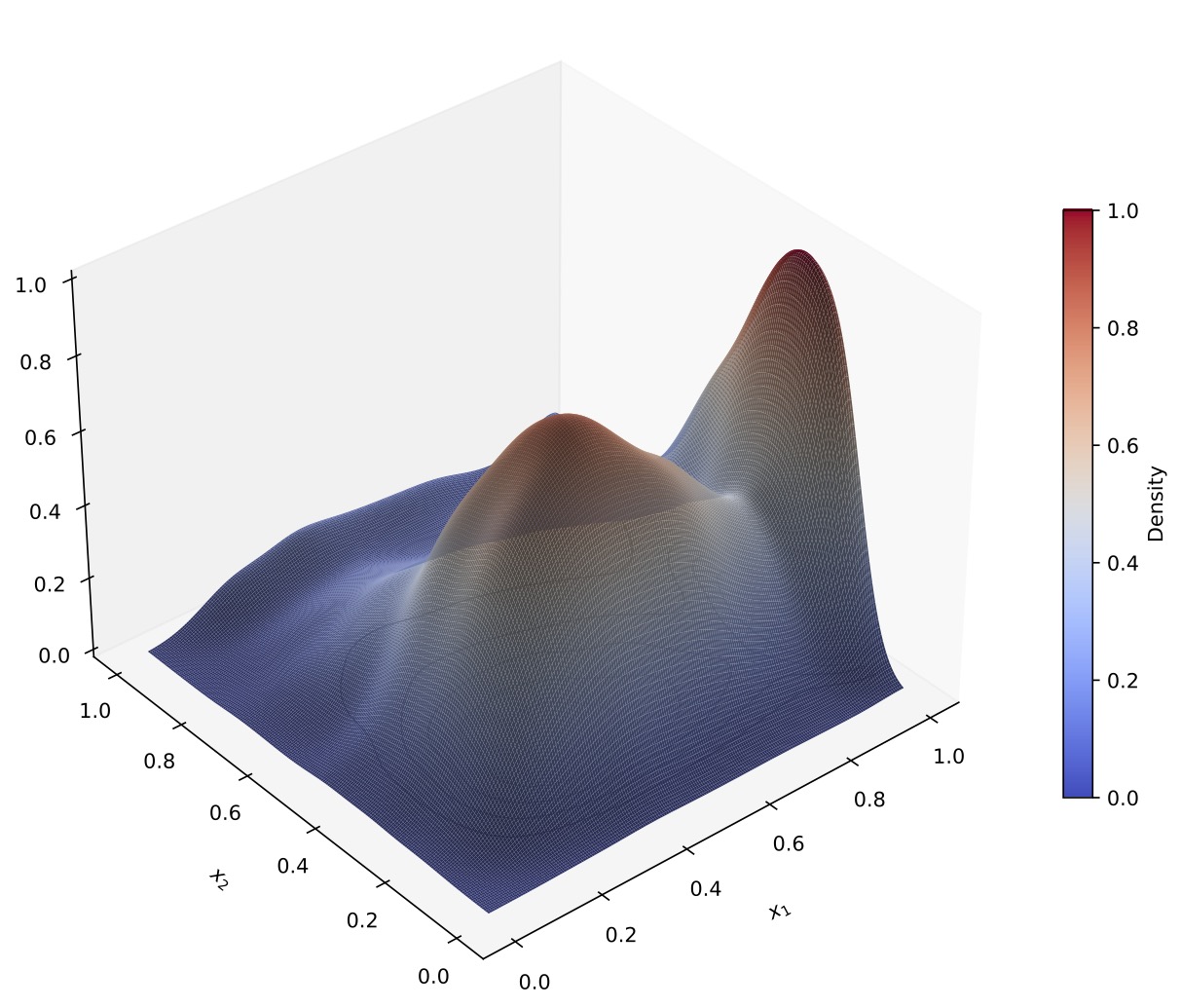}
\caption{$\step = 0.5$, $\sigma = 1$}
\end{subfigure}
\caption{%
Visualization of the long-run occupancy measure for the min-max game with loss-gain function \( \obj(\minvar, \maxvar) \).
Each plot shows the empirical density of the final iterates of \(10^5\) runs of \eqref{eq:FTRL} for \(10^2\) steps, starting from uniformly random initial conditions.
The surface plot encodes density via both height and color.
Each row corresponds to a different step-size \( \step \in \{0.1, 0.5\} \), while the columns vary the noise level \( \sigma \in \{0.5, 1\} \).
}
\label{fig:numerics-quadratic}
\end{figure}

\para{Strongly monotone games}
We consider the strongly monotone two-player min-max game defined by $f:[0,1]\times[0,1]\to \R$ with
\begin{equation}
    f(x_1,x_2) = -(x_1-0.5)^2 + 0.5x_1 x_2 + 2(x_2-0.5)^2
\end{equation}
and entropic regularization. To be more precise, the payoff functions of the two players are given by $\pay_1(x_1,x_2) = f(x_1,x_2) = -\pay_2(x_1,x_2)$, and $\eq = (20/33,14/33)$ is the unique Nash equilibrium point.

\begin{figure}[tbp]
    \centering
    \includegraphics[width=0.6\linewidth]{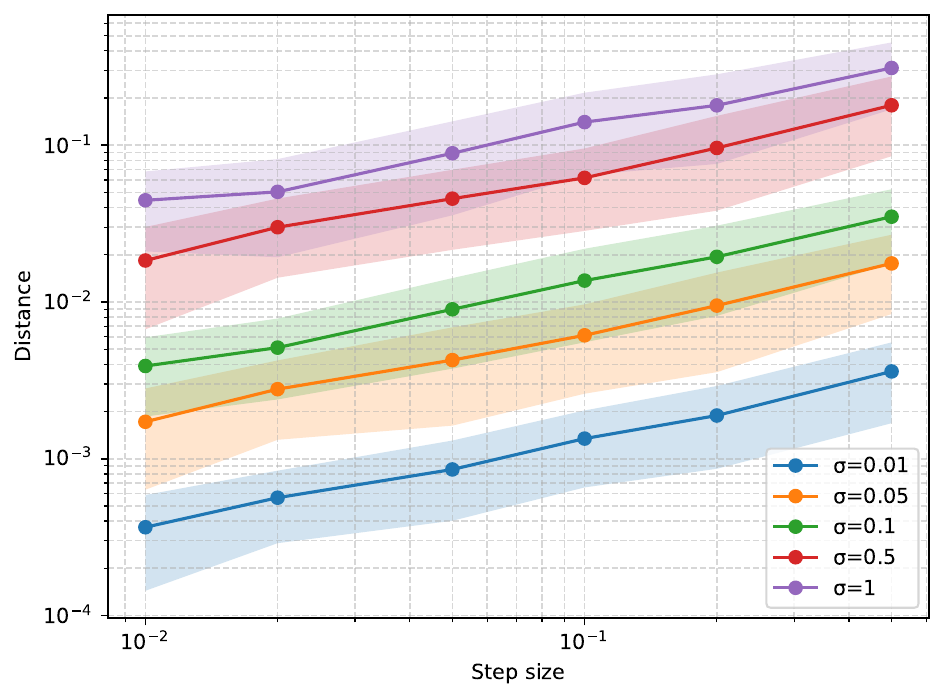}
    \caption{Average final distance from equilibrium for different values of the step-size $\step$ and the noise level $\sigma$. 
    Each point represents the mean over $100$ independent runs of length $10{,}000$, with shaded regions indicating one standard deviation.}
    \label{fig:final_distance}
\end{figure}

\cref{fig:numerics-quadratic} demonstrates the behavior of \eqref{eq:FTRL} under varying step sizes and noise levels for the min-max game defined by the function $f(x_1,x_2)$.
Specifically, we consider step-sizes $\step\in\braces{0.1,\,0.5}$, and stochastic feedback of the form $\signal_\run = \payfield(\curr) + \sigma \omega_\run$,  where $\omega \sim N(0,\eye_2)$ for $\sigma\in\braces{0.5,\,1}$. 
For each $(\step,\sigma)$ configuration, we perform $10^5$ independent trials, each running for $10^2$ steps.
The initial state $\init[\dstate]$ for each trial was drawn uniformly at random from $[0,1]^2$.
Each surface represents the empirical density of the final \eqref{eq:FTRL} iterates, while the color overlay visualizes their distribution across the $10^5$ independent trials.
Warmer (red) regions indicate higher concentration of final iterates, whereas cooler (blue) regions correspond to lower probability of ending in those regions, as indicated by the colorbar on the side.
We observe that smaller step sizes and lower noise levels lead to a tighter concentration of the final iterates around the Nash equilibrium. In contrast, increasing either the step size or the noise variance results in a more dispersed distribution. This behavior aligns with both intuition and our theoretical findings: higher noise introduces greater stochastic variability, while larger step sizes amplify this effect by inducing more aggressive updates that are prone to overshooting, ultimately increasing the spread of the iterates.

To further explore the behavior of \eqref{eq:FTRL} under different noise levels and step sizes, we conduct an additional set of experiments summarized in \cref{fig:final_distance,fig:hitting-times}. 
These figures illustrate the distance from $\eq$ of the final iterate and the hitting time in a neighborhood of $\eq$ with varying radii. 
Specifically, we consider step sizes $\step \in \{0.01, 0.02, 0.05, 0.1, 0.2, 0.5\}$ and stochastic feedback of the form $\signal_\run = \payfield(\curr) + \sigma \omega_\run$ for noise levels $\sigma \in \{0.01,\, 0.05,\, 0.1,\, 0.5,\, 1\}$. 
For each $(\step, \sigma)$ configuration, we perform $100$ independent runs, each consisting of $10{,}000$ iterations. 
The initial state $\init[\dstate]$ in each run is drawn uniformly at random from $[0,1]^2$. 
The first plot reports the \emph{average final distance} of the iterates from the equilibrium, averaged across the $100$ runs, 
while the subsequent plots show the \emph{hitting time} required for the iterates to enter a neighborhood of the equilibrium of radius $r \in \{0.005,\, 0.01,\, 0.05,\, 0.1\}$.

\begin{figure}[htbp]
\centering
\begin{subfigure}[b]{0.45\textwidth}
\centering
\includegraphics[height=35ex]{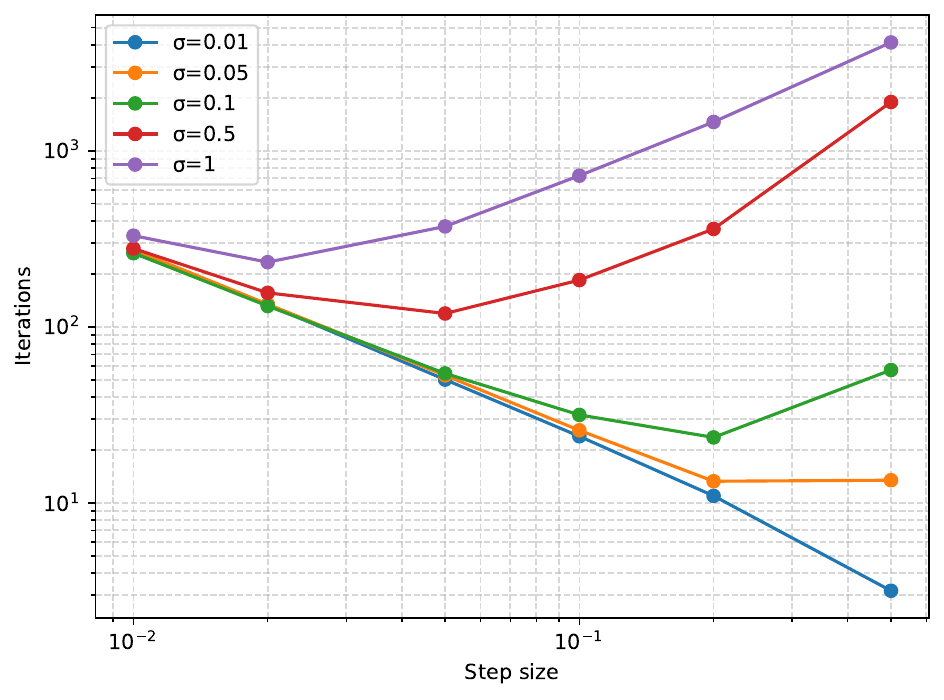}
\caption{$r = 0.005$}
\end{subfigure}
\hfill
\begin{subfigure}[b]{0.45\textwidth}
\centering
\includegraphics[height=35ex]{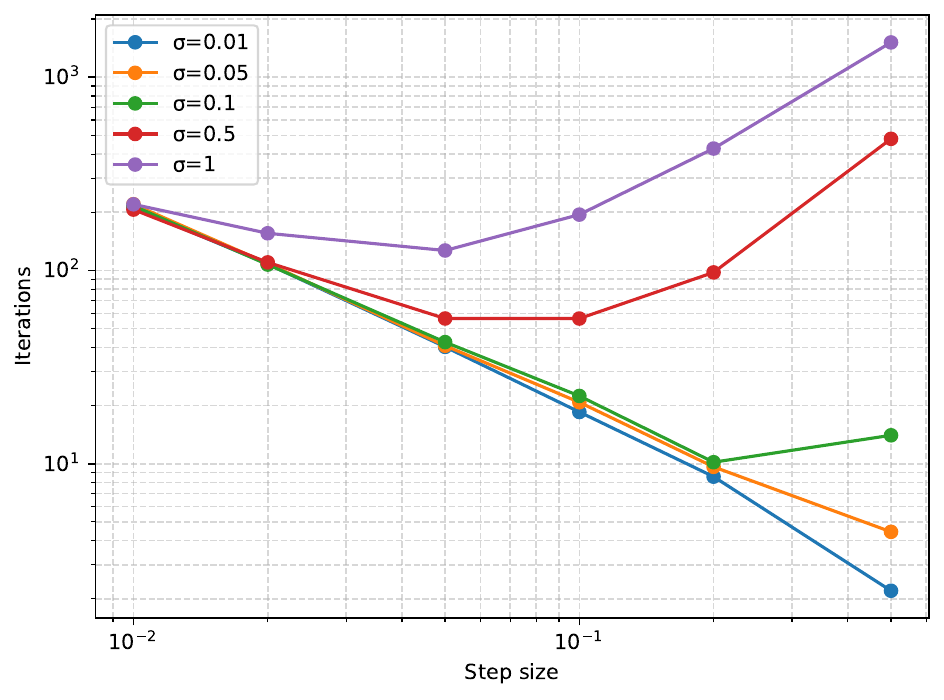}
\caption{$r = 0.01$}
\end{subfigure}
\\[\medskipamount]
\begin{subfigure}[b]{0.45\textwidth}
\centering
\includegraphics[height=35ex]{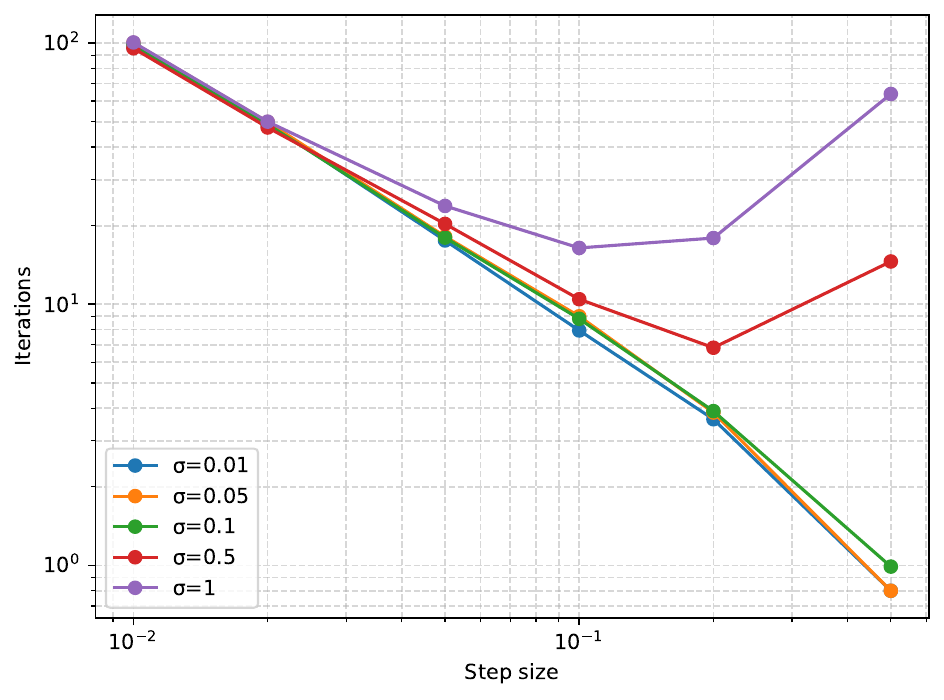}
\caption{{$r = 0.05$}}
\end{subfigure}
\hfill
\begin{subfigure}[b]{0.45\textwidth}
\centering
\includegraphics[height=35ex]{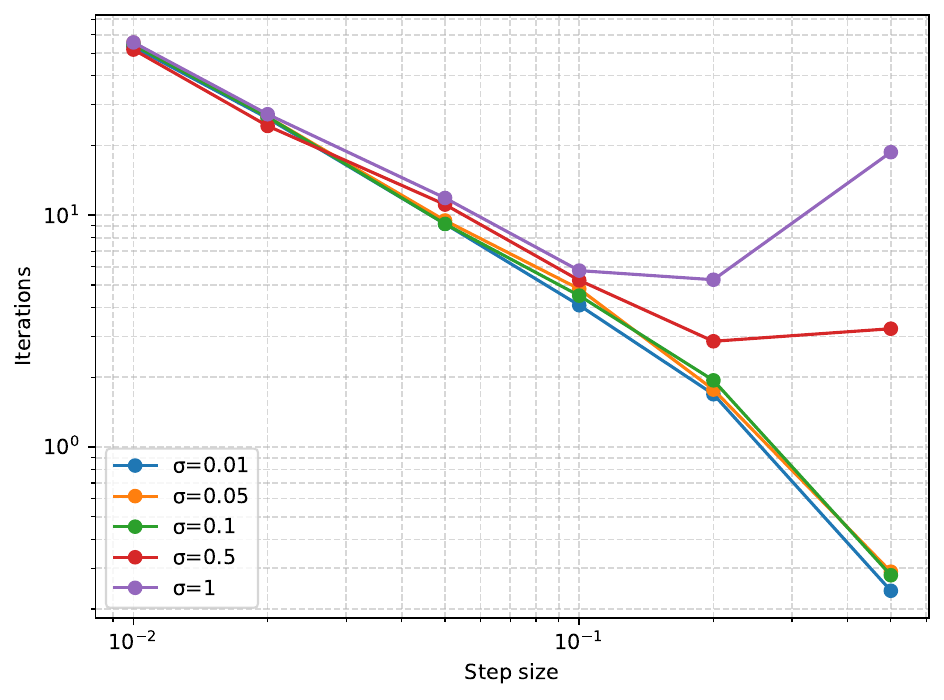}
\caption{$r = 0.1$}
\end{subfigure}
\caption{%
Average hitting time (in iterations) to a neighborhood of the equilibrium $\eq$ with radius $r \in \{0.005,\, 0.01,\, 0.05,\, 0.1\}$, 
computed over $100$ runs for each $(\step, \sigma)$ pair.
}
\label{fig:hitting-times}
\end{figure}

\begin{figure}[tbp]
\centering
\begin{subfigure}[b]{0.45\textwidth}
\centering
\includegraphics[height=35ex]{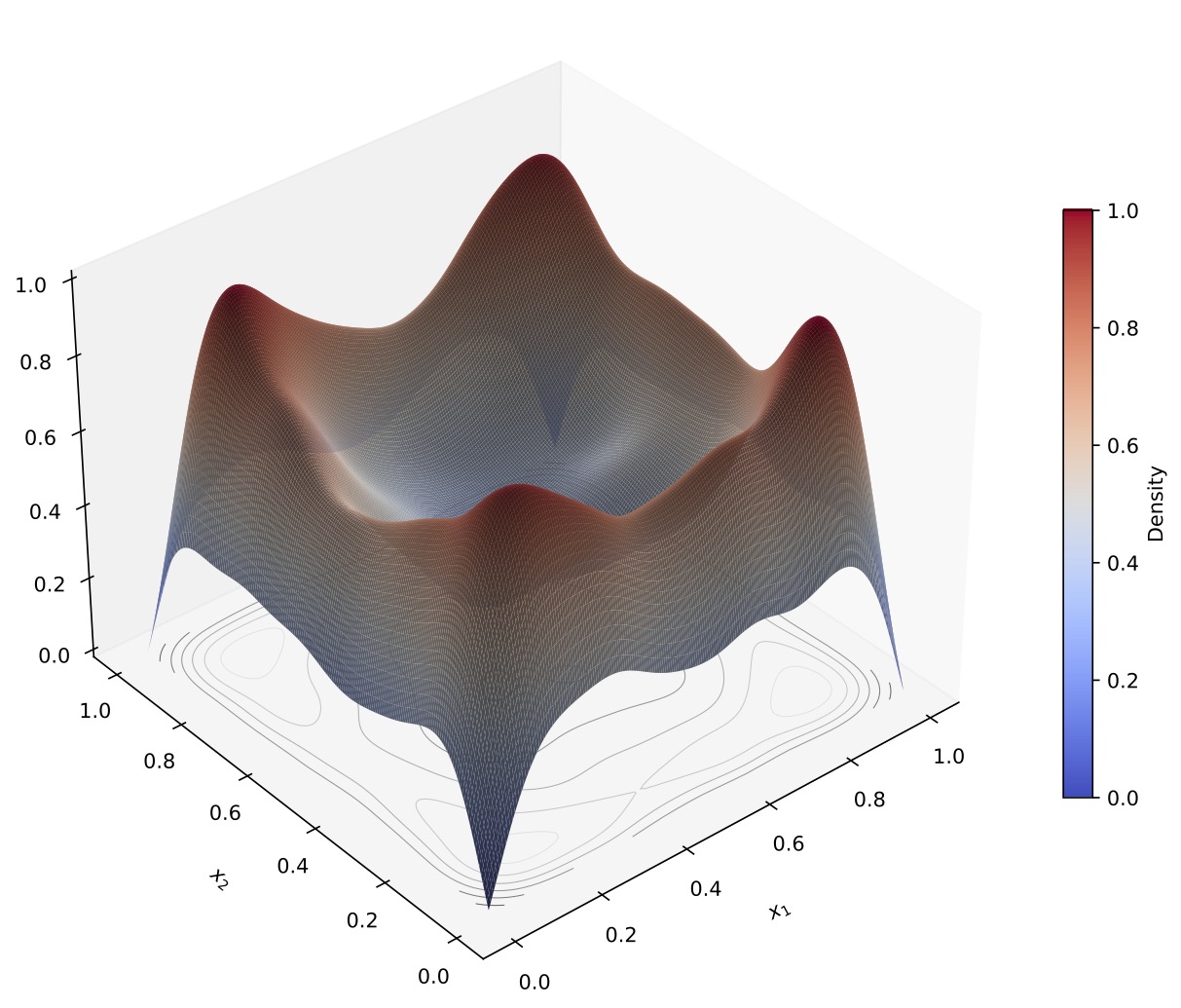}
\caption{$\step = 0.1$, $\sigma = 1$}
\end{subfigure}
\hfill
\begin{subfigure}[b]{0.45\textwidth}
\centering
\includegraphics[height=35ex]{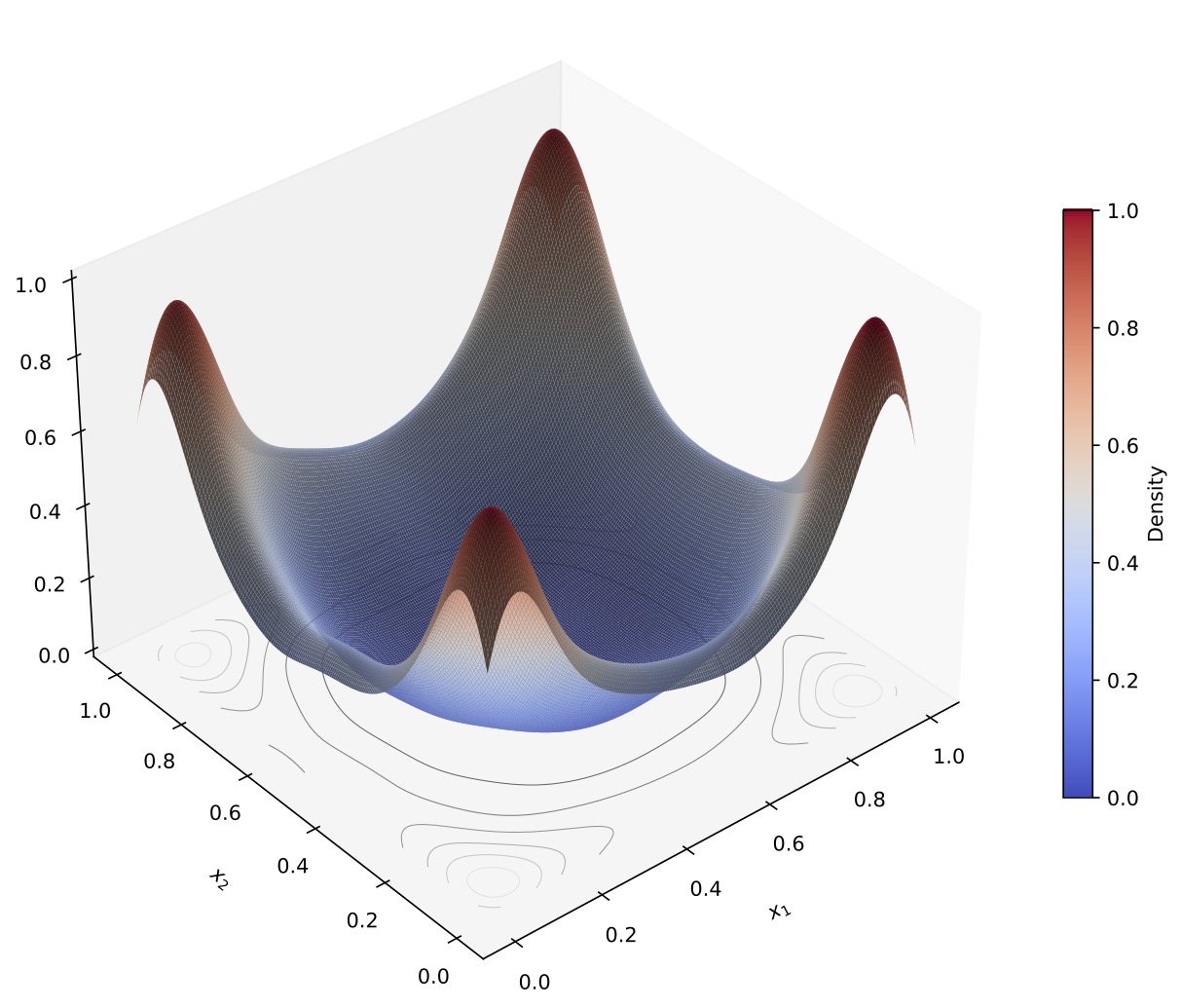}
\caption{$\step = 0.1$, $\sigma = 2$}
\end{subfigure}
\\[\medskipamount]
\begin{subfigure}[b]{0.45\textwidth}
\centering
\includegraphics[height=35ex]{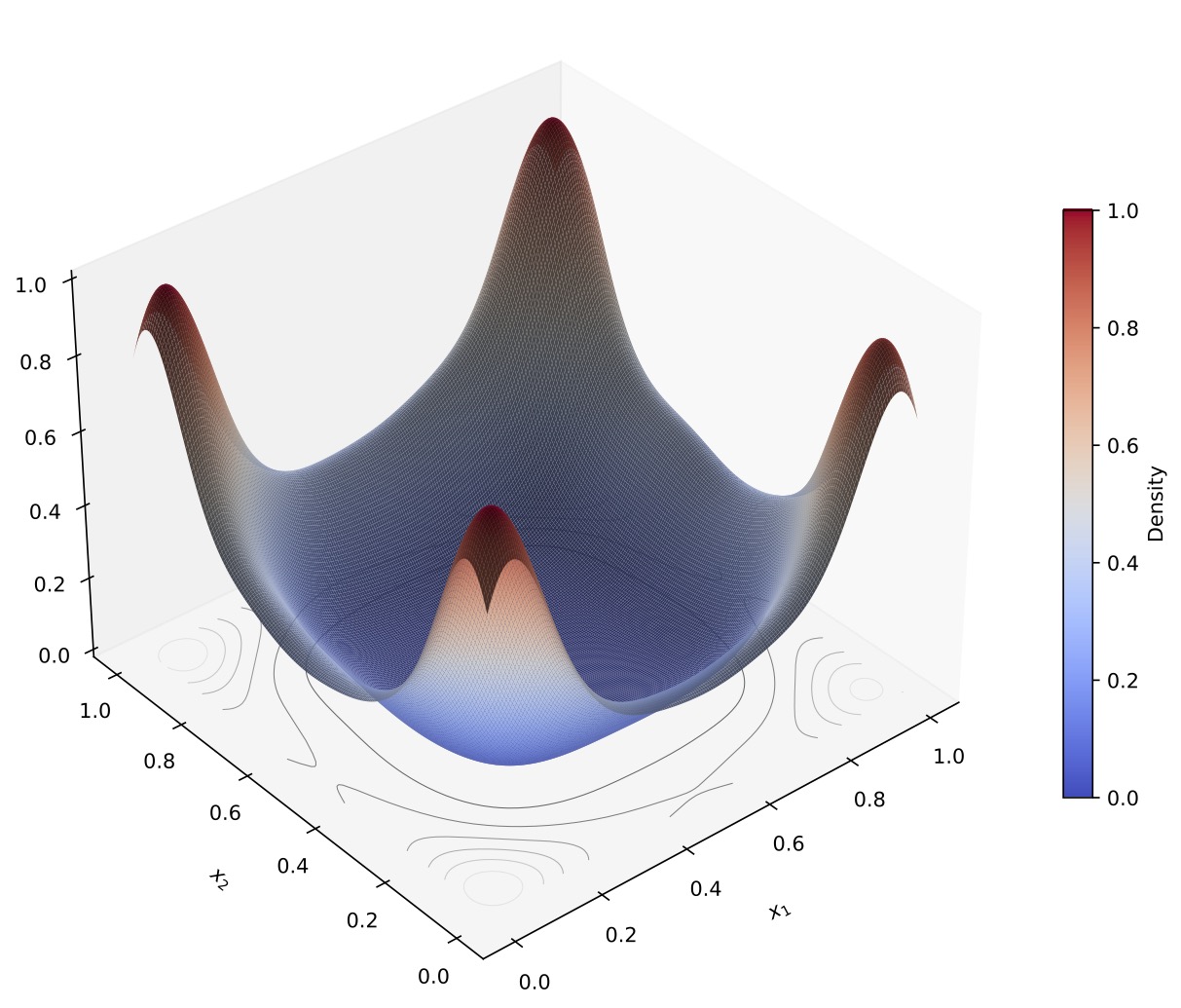}
\caption{$\step = 0.2$, $\sigma = 1$}
\end{subfigure}
\hfill
\begin{subfigure}[b]{0.45\textwidth}
\centering
\includegraphics[height=35ex]{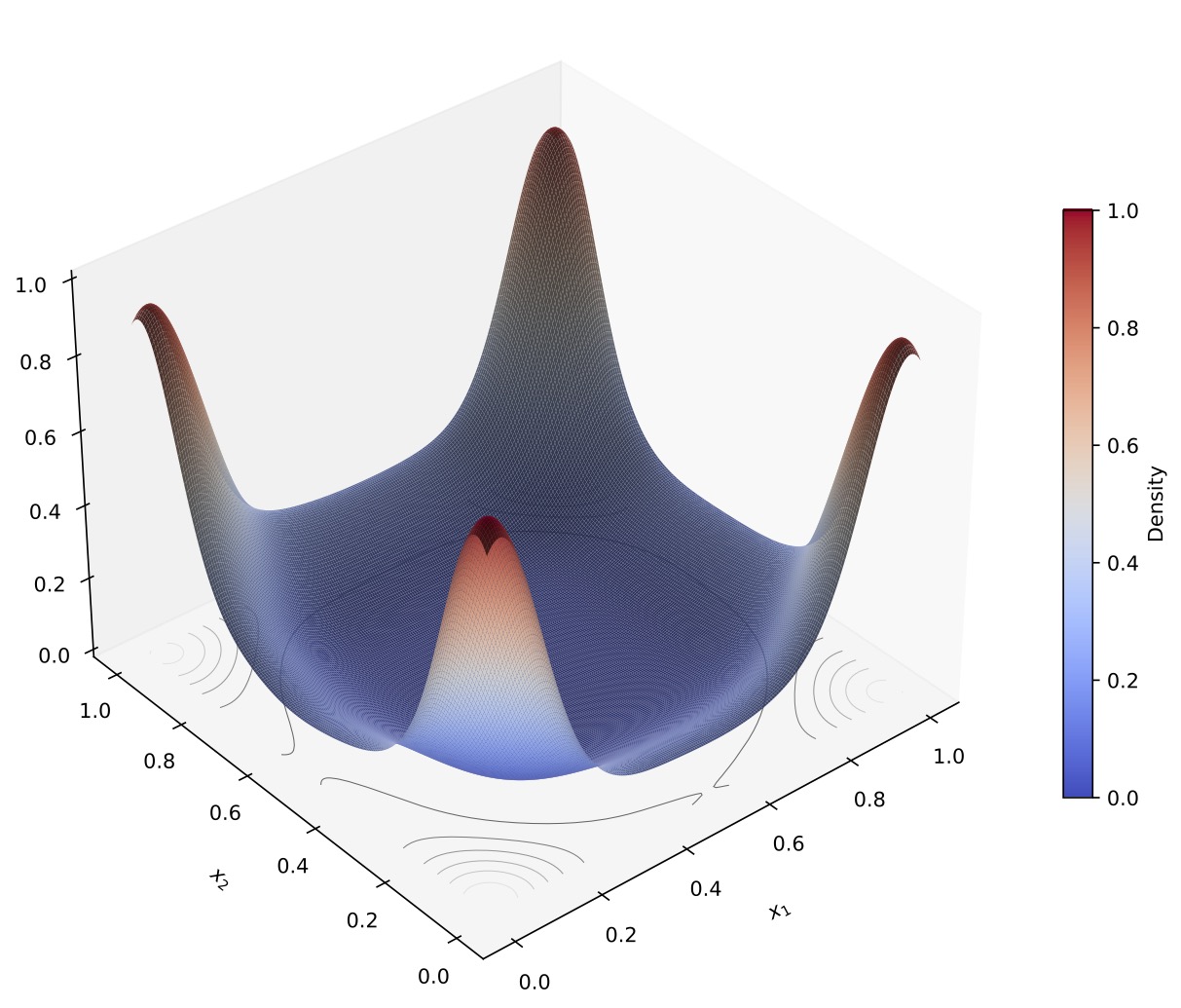}
\caption{$\step = 0.2$, $\sigma = 2$}
\end{subfigure}
\caption{%
Visualization of the long-run occupancy measure for the bilinear game with entropic regularization.
Each plot shows the empirical density of the final iterates of \(10^5\) runs of \eqref{eq:FTRL} for \(10^2\) steps, starting from uniformly random initial conditions.
The surface plot encodes density via both height and color.
Each row corresponds to a different step-size \( \step \in \{0.1, 0.2\} \), while the columns vary the noise level \( \sigma \in \{1, 2\} \).
}
\label{fig:numerics-null}
\end{figure}

\para{Null-monotone games}
\cref{fig:numerics-null} shows the empirical distribution of the final iterates under the \eqref{eq:FTRL} dynamics in the classic matching pennies game with entropic regularization, played over the probability simplex with payoff matrix
\begin{equation*}
P = 
\begin{bmatrix}
(+1,-1) & (-1,+1) \\
(-1,+1) & (+1,-1)
\end{bmatrix}
\eqstop
\end{equation*}
The unique \acl{NE} of the game is the mixed strategy $(0.5,0.5)$ for both players.
As before, we consider stochastic feedback of the form $\signal_\run = \payfield(\curr) + \sigma \omega_\run$,  where $\omega \sim N(0,\eye_2)$ for $\step\in\braces{0.1,\,0.2}$ and $\sigma\in\braces{1,\,2}$. 
For each $(\step,\sigma)$ configuration, we perform $10^5$ independent trials, each running for $10^2$ steps.
Each surface plot corresponds to a different combination of step-size and noise variance, with the empirical density of the final iterates represented through both height and color over the simplex domain. 
We see that across all configurations, the iterates tend to concentrate near the corners of the simplex, reflecting the instability of the interior equilibrium in the presence of noise. 
This consistent shift toward extreme points highlights the system’s inherent tendency to escape the central equilibrium under stochastic perturbations.

\section*{Acknowledgments}
\begingroup
\small
%
%
Jose Blanchet gratefully acknowledges support from the Department of Defense through the Air Force Office of Scientific Research (Award FA9550-20-1-0397) and the Office of Naval Research (Grant 1398311), as well as from the National Science Foundation (Grants 2229012, 2312204, and 2403007).
Panayotis Mertikopoulos is also a member of Archimedes/Athena RC and acknowledges financial support by the French National Research Agency (ANR) in the framework of the PEPR IA FOUNDRY project (ANR-23-PEIA-0003), the project IRGA-SPICE (G7H-IRG24E90), and project MIS 5154714 of the National Recovery and Resilience Plan Greece 2.0, funded by the European Union under the NextGenerationEU Program.

\endgroup

\bibliographystyle{icml}
\bibliography{bibtex/IEEEabrv,bibtex/Bibliography-PM}

\end{document}